\numberwithin{equation}{section}
\numberwithin{figure}{section}
\date{}
\begin{document}

\newtheorem{theorem}{Theorem}[section]
\newtheorem{lemma}[theorem]{Lemma}
\newtheorem{claim}[theorem]{Claim}
\newtheorem{proposition}[theorem]{Proposition}
\newtheorem{postulate}[theorem]{Postulate}
\newtheorem{definition}[theorem]{Definition}
\newtheorem{assumption}[theorem]{Assumption}
\newtheorem{corollary}[theorem]{Corollary}
\theoremstyle{definition}
\newtheorem{remark}[theorem]{Remark}

% new commands
%\newcommand{\supp}{\operatorname*{supp}}
\newcommand{\BL}{{\mathrm{BL}}}
\newcommand{\TV}{{\mathrm{TV}}}
\newcommand{\FM}{{\mathrm{FM}}}
\newcommand{\pair}[2]{\left\langle #1 , #2 \right\rangle}
\newcommand{\Int}[4]{\int_{#1}^{#2}\! #3 \, #4}
\newcommand{\map}[3]{#1:#2\to#3}

\newcommand{\supp}{\operatorname*{supp}}

\newcommand{\K}{\mathcal{K}}
\newcommand{\CM}{\mathcal{M}}
\newcommand{\CMc}{\overline{\CM}}
\newcommand{\smfrac}[2]{\mbox{$\frac{#1}{#2}$}}
\newcommand{\geqs}{\geqslant}
\newcommand{\leqs}{\leqslant}
\newcommand{\nn}{{(n)}}
\newcommand{\mm}{{(m)}}

\def\R{\mathbb{R}}
\def\Rp{\mathbb{R}^+}
\def\N{\mathbb{N}}
\def\Np{\mathbb{N}^+}
\def\eps{\varepsilon}
\def\aeps{a}
\def\Wh{W_h}
\def\P{\mathcal{P}}

\title{From continuum mechanics to SPH particle systems and back: Systematic derivation and convergence}

\author{Joep H.M. Evers\thanks{Dept.~of Mathematics, Simon Fraser University, Burnaby, Canada and Dept.~of Mathematics and Statistics, Dalhousie University, Halifax, Canada (jevers@sfu.ca).}, Iason A. Zisis\thanks{CASA - Centre for Analysis, Scientific computing and Applications, Eindhoven University of Technology, The Netherlands (iason.zisis@outlook.com).}, Bas J. van der Linden\thanks{CASA - Centre for Analysis, Scientific computing and Applications, Eindhoven University of Technology, The Netherlands (b.j.v.d.linden@tue.nl).}, Manh Hong Duong\thanks{Mathematics Institute, University of Warwick, United Kingdom (m.h.duong@warwick.ac.uk).}}

\maketitle

\begin{abstract}
In this paper, we derive from the principle of least action the equation of motion for a continuous medium with regularized density field in the context of measures. The eventual equation of motion depends on the order in which regularization and the principle of least action are applied. We obtain two different equations, whose discrete counterparts coincide with the scheme used traditionally in the Smoothed Particle Hydrodynamics (SPH) numerical method (e.g. \cite{Mon05}), and with the equation treated by Di Lisio \textit{et al.}~in \cite{DiLisio}, respectively. Additionally, we prove the convergence in the Wasserstein distance of the corresponding measure-valued evolutions, moreover providing the order of convergence of the SPH method. The convergence holds for a general class of force fields, including external and internal conservative forces, friction and non-local interactions. The proof of convergence is illustrated numerically by means of one and two-dimensional examples.\\
\\
\noindent{\bf Keywords:}\ Smoothed Particle Hydrodynamics, principle of least action, Wasserstein distance, measure-valued equations, convergence rate\\

\noindent{\bf 2010 Mathematics Subject Classification:}\ 70H25; 28A33; 65M12; 35Q70; 46E27; 70Fxx; 76M25
\end{abstract}

\section{Introduction}
The Smoothed Particle Hydrodynamics (SPH) numerical method was initially introduced to solve the equations of astrophysical flows. In the course of time it found application to equations describing a plethora of physical processes (for its diverse applications, see \cite{Mon12}). These processes predominantly involve continua and the equations refer to systems with infinite degrees of freedom. The central idea of the SPH method is to set up a relation between the continuum and a particle system, in which the continuum is loosely considered to be the limit case in which the number of particles tends to infinity. Here, a `particle' should \textit{not} be interpreted as a physical object of any scale (like an atom, molecule or grain) but rather as a numerical entity attributed with  mass, position, velocity and other properties of the medium it represents.\\
\\
It is well-established that the classical SPH scheme can be derived formally by applying the \emph{principle of least action} %from \emph{the principle of least action} applied to the action corresponding
to the particle system, where the SPH density approximation acts as a constraint; see e.g.~\cite{Mon05,Price12,Zisis14}. The importance of the particle system's Lagrangian function was already recognized in the first articles describing SPH; cf.~\cite{Mon78}. A subtlety lies in the fact that in the derivation of the SPH equations, the action of the particle system is minimized rather than the action of the continuum. The  minimization of the action at the continuum level and the subsequent discretization of the motion equation in terms of particles do not necessarily yield the same equation (at the discrete level).\\
\\
The main achievement of this paper is twofold:
\begin{itemize}
\item We introduce a systematic procedure for deriving measure-valued and particle formulations of continuum mechanics equations. We obtain two different schemes depending on the stage at which a regularization of the density is introduced. See Section~\ref{sec: derivation}.
\item We prove the convergence of both schemes using the Wasserstein distance on the space of probability measures; cf.~Section~\ref{sec: conv}.
\end{itemize}
We now describe the two parts of our paper in more detail.\\
\\
In the first part (Section \ref{sec: derivation}), we aim at clarifying the exact difference in outcome between minimizing the action of the particle system and minimizing the action at the continuum level. To achieve this, we introduce a systematic procedure consisting of the following three steps:
\begin{enumerate}[label=\textbf{\Alph*}]
\item formulation in terms of measures and, simultaneously, the regularization\footnote{The regularization of equations is an old concept, introduced by Friedrichs in 1944 \cite{Fr44}. Additionally, notice that the regularization kernels used in SPH are a special subclass of the mollifier functions used by Friedrichs \cite{Fr44}; the SPH kernels are symmetric positive mollifiers.} of the density;
\item introduction of a particle formulation;
\item application of the principle of least action.
\end{enumerate}
These three steps are introduced in more detail in Section \ref{sec: 3 proc}. It turns out that the order in which these steps are executed determines what the resulting equation is. To be more precise, the classical SPH scheme (as described e.g.~in \cite{Mon05}) is obtained, whenever the regularization of the density takes place before applying the principle of least action. That is, whenever the steps are executed in the order \ref{enum:stepA}-\ref{enum:stepB}-\ref{enum:stepC} or \ref{enum:stepA}-\ref{enum:stepC}-\ref{enum:stepB}. Both procedures are presented here; see Sections \ref{sec: eqns motion ABC} and \ref{sec: eqns motion ACB}. If we apply the principle of least action (to the action at the continuum mechanics level) before regularizing the density then we obtain a scheme that appears in Di Lisio \textit{et al.}~\cite{DiLisio} and in the recent paper \cite{Colagrossi}. However, this variant of the scheme is studied far less in literature. The procedure to obtain this scheme follows the order \ref{enum:stepC}-\ref{enum:stepA}-\ref{enum:stepB}. Its distinct characteristic is that it requires the gradient of the pressure field to be expressed analytically, while the pressure itself does not appear in the numerical scheme, in contrast to the commonly used SPH schemes. We emphasize that although both schemes arrive from the principle of least action, the latter can also be derived directly from Newtonian mechanics and introduction of the density regularization. The details of our rational derivation and the mutual relation between the two schemes have, to our knowledge, not been described in literature before.\\
\\
%{\color{red}Jason, I think this is not correct:``We show that the classical SPH scheme is indeed the discrete version of the continuum's equation of motion, with a regularization of the density field applied at the level of the continuum". Therefore I want to remove this sentence.}
In any case, regularization of the density practically means that the original problem is deliberately turned into a regularized one, which is afterwards solved by means of some variant of an SPH scheme. Hence, by choosing SPH as the solution method one is automatically bound to studying a different problem than the original one at the continuum level. Thus, two questions naturally arise:
\begin{itemize}
  \item Does the solution of the regularized problem converge to the solution of the original problem?
  \item Does the particle solution of the regularized problem converge (in a certain sense) to the solution of the regularized continuum problem?
\end{itemize}
The former is out of the scope of the present study (\cite{DiLisio2} has dealt with it), while the latter is the topic of the current work's second part.\\%{\color{red}I want to remove: Nonetheless, SPH introduces no further approximation.}
\\
Measure theory provides a framework to study the limiting behaviour as the number of particles goes to infinity (cf.~also e.g.~\cite{vMeurs14}). Both the particle system, and the limiting continuum setting can be formulated in terms of measures. Hence, a distance between measures is a natural tool to characterize convergence; in this work we take the Wasserstein distance on the space of probability measures. This particular distance has the advantage that it can be formulated as the \textit{infimum} over a set of joint representations (more details follow in Definitions \ref{def:coupling} and \ref{def:Wass}). This is convenient, since one can thus obtain an upper bound (needed to prove convergence!) by choosing any admissible joint representation. See also \cite{Vil09}, Chapter 6, for more discussion.\\
\\
We prove the convergence of measure-valued solutions, as the initial measure is approximated; cf.~Section~\ref{sec: conv}. The line of arguments is similar to the one followed by Di Lisio \textit{et al.}~in \cite{DiLisio}, who first employed measures in combination with the Wasserstein distance to prove the convergence of the SPH method, but the result obtained in the present work is more general. It should be mentioned that in the earlier work \cite{Oelschlaeger} convergence of the \textit{empirical measure} representing the particle system was proven, but using a different technique. Moreover, the only forces considered were mutual interactions between particles. Other approaches to obtain convergence are given e.g. by \cite{BenMoussa} using maximum local entropy estimates, \cite{Quinlan} employing estimates for the truncation error, and \cite{Raviart,Kimura}. \\
Nonetheless, the scheme treated in \cite{DiLisio} is not the aforementioned traditional scheme. Our proof applies both to traditional SPH and to the scheme covered by \cite{DiLisio}. Moreover we allow for a much more general class of force fields, including external and internal conservative forces, as well as friction and non-local interactions.\\
Previous work in the framework of measures by the authors of the current paper can be found in \cite{Duong}, where apart from the aforementioned force terms also random noise is treated. In \cite{EvHiMu_CRM,EvHiMu_JDE} measure-valued evolutions are treated in the scope of equations of motion that are first-order in time. The link between first-order and second-order models is discussed in \cite{EvFeRy}.\\
\\
The theoretical result of this paper regarding the order of convergence is supported numerically in Section \ref{sec: num} for one and two-dimensional illustrative examples, which involve different force fields.\\
\\
In Section \ref{sec: concl and future} concluding remarks are given about Sections \ref{sec: derivation} and \ref{sec: conv}. Also, some attention is given to possible future research directions.
%{\color{red}The main achievement of this paper is twofold:
%\begin{itemize}
%  \item We introduce a systematic procedure for deriving measure-valued and particle formulations of continuum mechanics equations. We obtain two different schemes depending on the stage at which a regularization of the density is introduced. See Section \ref{sec: derivation}. Both schemes are known in literature and common practice, but the rational derivation and mutual relation have, to our knowledge, never been described.
  %\item We proove the convergence of both schemes using the Wasserstein distance on the space of probability measures; cf.~Section \ref{sec: conv}. The argument follows the lines set by \cite{DiLisio}, but our result is far more general: it applies to both schemes and a much more general class of force fields.
%\end{itemize}
%NB: We do not treat the limit $h\to0$.
%}

\section{Systematic derivation of the equations of motion}\label{sec: derivation}
In this section we derive equations of motion from Hamilton's principle of least action, which involves the Lagrangian function posed in a continuum mechanics setting. We describe an explicit `recipe', hence avoiding the need to introduce approximations in an \textit{ad hoc} manner. This `recipe' consists of three building blocks (coined \ref{enum:stepA}, \ref{enum:stepB} and \ref{enum:stepC}; see Section \ref{sec: 3 proc}). The order in which these blocks are executed, influences the final outcome. As such, the systematic procedure we describe here also shows exactly how different formulations/schemes arise from the same basic principles.

\subsection{Derivation of the action in a continuous setting}\label{sec: deriv action}
Assume that for fixed time $t$ a mass density $\rho_t$ on a spatial domain $\Omega_t$ is given. We define the \textit{Lagrangian density} of our system as
\begin{equation}
\mathcal{L}(\rho;y,u):= \left(\dfrac12 |u|^2 - e(\rho(y),y)\right)\rho(y),
\end{equation}
where $y$ and $u$ are independent Eulerian coordinates, and $e$ denotes the internal energy density. To obtain the \textit{Lagrangian} $L$, we integrate $\mathcal{L}$ over the spatial domain $\Omega_t$:
\begin{equation}\label{eqn: def Lagrangian in Euler-coord}
L(t) := \Int{\Omega_t}{}{\mathcal{L}(\rho_t;y,u)}{dy}.
\end{equation}
For this integration to make sense, we assume now that $u$ is actually a velocity field, defined as a function of $t$ and $y$: $u:=u(t,y)$. Let there be a coordinate transform $\Phi_t$ such that $\Omega_t=\Phi_t(\Omega_0)$ for some initial domain $\Omega_0$. We call the family of transformations $(\Phi_t)_{t\geqs0}$ a \textit{motion mapping} and transform the integral above according to $y=\Phi_t(x)$ with $x\in\Omega_0$:
\begin{equation}\label{eqn: Lagr after coord transf}
L[\Phi](t) = \Int{\Omega_0}{}{\left(\dfrac12 |u(t,\Phi_t(x))|^2 - e(\rho_t(\Phi_t(x)),\Phi_t(x))\right)\rho_t(\Phi_t(x))\,\left|J\Phi_t(x)\right|}{dx}.
\end{equation}
The \textit{functional dependence} of $L$ on the motion mapping is indicated by explicitly including $\Phi$ in square brackets. The expression $\left|J\Phi_t\right|$ denotes the determinant of the Jacobian matrix of the transformation, consisting of the derivatives of the components of $\Phi_t$ with respect to the components of $x$. Now we assume that the density $\rho_t$ relates to the density $\rho_0$ defined on the original domain $\Omega_0$ by the same transformation $\Phi_t$, which is mathematically described by a \textit{push-forward}, $\rho_t=\Phi_{t}\#\rho_0$ (cf.~Definition \ref{def:push-forward}). In particular, the densities relate in the following way (see e.g.~\cite{Chadwick}, p.~90):%\textbf{\color{red}Hong: according to the Monge-Amp\`{e}re equation~\cite[Eq. (4.3)]{Vil03}}
\begin{equation}\label{eqn: push-forw density transf}
\rho_0(x)=\rho_t(\Phi_t(x))\,\left|J\Phi_t(x)\right|.
\end{equation}
Combined, \eqref{eqn: Lagr after coord transf} and \eqref{eqn: push-forw density transf} yield
\begin{equation}
L[\Phi](t) = \Int{\Omega_0}{}{\left(\dfrac12 |u(t,\Phi_t(x))|^2 - e(\rho_t(\Phi_t(x)),\Phi_t(x))\right)\rho_0(x)}{dx}.
\end{equation}
In the above we fixed $t$, but obviously all arguments can be repeated for every $t$ in some interval $[0,T]$. In particular, we are interested in those motion mappings that are continuous and differentiable in time, and we wish to obtain their equation of motion. %That is, an equation of motion for our system. \textcolor{blue}{IZ: It might be better to just say equation of motion. The term constitutive equation might be misinterpreted, as it sometime refers to non-ideal motions. Those not coming from a minimization of a functional and typically appearing on the RHS of the Euler-Lagrange equations.}
The introduction of the motion mapping $(\Phi_t)_{t\in[0,T]}$ has taken us from pure Eulerian coordinates in \eqref{eqn: def Lagrangian in Euler-coord} towards Lagrangian (material) coordinates in \eqref{eqn: Lagr after coord transf}. The crucial and final step to complete this procedure is now to specify what the velocity field $u$ is. In order to remain consistent with the motion mapping we introduced, we postulate the relation:
\begin{equation}\label{eqn: relation u dotPhi}
u(t,\Phi_t(x))=\dot{\Phi}_t(x) \,\,\,\text{for all }x\in\Omega_0.
\end{equation}
The velocity $u(t,\Phi_t(x))$ is the velocity at time $t$ of a material point that started in $x$ at time $0$, and -- in words -- \eqref{eqn: relation u dotPhi} means that this velocity is equal to the time derivative at time $t$ of the position $\Phi_t(x)$ of that particular material point. By connecting the Eulerian velocity $u$ to the Lagrangian velocity $\Phi_t$, we obtain the Lagrangian functional
\begin{equation}\label{eqn: Lagrangian in Phi}
L[\Phi](t) = \Int{\Omega_0}{}{\left(\dfrac12 |\dot{\Phi}_t(x))|^2 - e(\rho_t(\Phi_t(x)),\Phi_t(x))\right)\rho_0(x)}{dx}.
\end{equation}
We define the \textit{action} of our system by
\begin{equation}\label{def:action}
S[\Phi] := \Int{0}{T}{L[\Phi](t)}{dt}.
\end{equation}

\subsection{Three procedures}\label{sec: 3 proc}
The aim of this part of our paper is to derive equations of motion from the action \eqref{def:action}, by means of the Euler-Lagrange equations (we will see that these appear in different shapes). Moreover, we wish to derive these equations of motion for a particle system, which naturally induces a numerical scheme. A methodological way to go from the continuum (Section \ref{sec: deriv action}) to a particle system, is via a measure-valued formulation. Our motivation to do so is the fact that we need a framework that incorporates the `real physics', i.c. the density $\rho_t$, and an approximating particle system to establish the convergence of the particle scheme to the continuum.\\
\\
To get the transition from the continuous action \eqref{def:action} to equations of motion for the particle positions, three steps are necessary:
\begin{enumerate}[label=\textbf{\Alph*}]
\item introduction of measures: replace $\rho_t(x)dx$ by $\mu_t(dx)$ and, wherever necessary, approximate $\rho_t$ by some $\tilde{\rho}_t$ that depends on $\mu_t$;\label{enum:stepA}
\item substituting for $\mu_t$ a discrete measure $\bar{\mu}^n_t = \sum_i m_i \delta_{x_i(t)}$;\label{enum:stepB}
\item Derive the Euler-Lagrange equations (either classically or in variational sense).\label{enum:stepC}
\end{enumerate}
The steps are here described in a somewhat simplistic and unprecise way; their true meaning will become clear in Sections \ref{sec: eqns motion ABC}, \ref{sec: eqns motion ACB} and \ref{sec: eqns motion CAB}. Step \ref{enum:stepA} takes us to a regularized version of the problem, which is a problem different from the original one. Step \ref{enum:stepB} cannot happen before \ref{enum:stepA}, but we have the freedom to choose the further ordering. This gives rise to three different derivations:
%\begin{enumerate}
%\item[\ref{enum:stepsA}\ref{enum:stepsB}\ref{enum:stepsC}] a
%\item[\ref{enum:stepsA}\ref{enum:stepsC}\ref{enum:stepsB}] b
%\item[\ref{enum:stepsC}\ref{enum:stepsA}\ref{enum:stepsB}] c
%\end{enumerate}
\begin{description}
\item[\ref{enum:stepA}\ref{enum:stepB}\ref{enum:stepC} ] this procedure discretizes the Lagrangian and derives the corresponding equations of motion afterwards; see Section \ref{sec: eqns motion ABC}.
\item[\ref{enum:stepA}\ref{enum:stepC}\ref{enum:stepB} ] this procedure derives the equations of motion from the measure-valued Lagrangian and discretizes these equations afterwards; see Section \ref{sec: eqns motion ACB}.
\item[\ref{enum:stepC}\ref{enum:stepA}\ref{enum:stepB} ] this procedure derives the equations of motion from the continuum Lagrangian, writes them in measure-valued form and discretizes afterwards; see Section \ref{sec: eqns motion CAB}.
\end{description}
\newcommand{\procABC}{{\ref{enum:stepA}\ref{enum:stepB}\ref{enum:stepC} }}
\newcommand{\procACB}{{\ref{enum:stepA}\ref{enum:stepC}\ref{enum:stepB} }}
\newcommand{\procCAB}{{\ref{enum:stepC}\ref{enum:stepA}\ref{enum:stepB} }}
Procedures \procABC and \procACB eventually yield the same particle scheme. This is the scheme traditionally used in the SPH community (cf.~\cite{Mon05}). Procedure \procCAB is the one that yields the equations used in \cite{DiLisio} and \cite{Colagrossi}.
\pagebreak
\subsection{Equations of motion via the route \procABC}\label{sec: eqns motion ABC}
\subsubsection{Step \ref{enum:stepA}}\label{sec:proc ABC step A}
In Section \ref{sec: deriv action} we introduced (for each $t$) the density $\rho_t$ as the push-forward of the initial density $\rho_0$ under the mapping $\Phi_t$. In this section we lift the evolution of $\rho_t$ to the space of (time-dependent) measures. Let $\mu_0$ and $\mu_t$ be the measures associated to the densities $\rho_0$ and $\rho_t$. Hence, $\mu_t=\Phi_t\#\mu_0$. In \eqref{eqn: Lagrangian in Phi}--\eqref{def:action} we can substitute $\rho_0(x)dx$ by $\mu_0(dx)$. Afterwards, there is one more aspect that we need to `repair' before we are completely in a measure formulation. The internal energy density $e$ depends on $\rho_t$ itself, via pointwise evaluation at $\Phi_t(x)$. An approximation of $\rho_t$ is needed to obtain a general expression that is even well-defined for measures that have no density (w.r.t.~the Lebesgue measure). We propose to introduce a regularization via convolution
\begin{equation}\label{eqn: def tilde rho}
\tilde{\rho}_t(\xi):= \left(\Wh*\mu_t\right)(\xi) = \Int{\Omega_t}{}{\Wh(\xi-y)}{\mu_t(dy)},
\end{equation}
%for all $\xi\in\supp\mu_t$. 
for all $\xi\in\R^d$. Here, the smoothing function $\Wh$ is nonnegative and even (so that it obtains an odd gradient, an effect which is used later in the derivation of the equations), $h$ is a small parameter, and $\Wh\rightharpoonup \delta_0$ in the narrow topology as $h\to0$ (i.e.~tested against bounded continuous functions). A typical example is the Gaussian with zero mean and variance $h^2/2$. If $\mu_t$ has a density $\rho_t$ then the convergence $\tilde{\rho}_t\to\rho_t$ holds \textit{in some sense and under certain mathematical conditions}. E.g. if $\rho_t$ is continuous and bounded, then by definition of $\Wh\rightharpoonup \delta_0$, $\tilde{\rho}_t(\xi)$ converges to $\rho_t(\xi)$ for all $\xi$. In any case, the convolution regularizes the solution, introducing an artificial `density' $\tilde{\rho}_t$, such that pointwise evaluation and the gradient are defined even when $\rho_t$ does not exist or is not differentiable. Note that, $\tilde{\rho}_t$ also depends on $h$, but in this work we do not consider the limit $h\to0$, therefore for simplicity of notation, we leave out $h$ in $\tilde{\rho}_t$. However, we stick to the subscript $h$ in $\Wh$ in agreement with the common notation in SPH literature.\\
\\
Note that $\tilde{\rho}_t$ can also be written as
\begin{equation}
\tilde{\rho}_t(\xi) = \Int{\Omega_0}{}{\Wh(\xi-\Phi_t(x))}{\mu_0(dx)},
\end{equation}
by definition of the push-forward. Hence, we should keep in mind that $\tilde{\rho}$ has either a functional dependence on $\mu_t$, or an extra dependence on $\Phi_t(\cdot)$ (depending on which formulation we choose), but we do not write this dependence explicitly.\\
\\
%We will use $e(\tilde{\rho}_t(\Phi_t(x)),\Phi_t(x))$ instead of $e(\rho_t(\Phi_t(x)),\Phi_t(x))$ in the sequel and redefine the Lagrangian (in a measure-formulation) as
%In $e$, we substitute $\tilde{\rho}_t$ for $\rho_t$ in the sequel and redefine the Lagrangian (in a measure-formulation) as
%\begin{equation}\label{eqn: Lagr measure}
%L[\Phi](t) = \Int{\Omega_0}{}{\left(\dfrac12 |\dot{\Phi}_t(x)|^2 - e(\tilde{\rho}_t(\Phi_t(x)),\Phi_t(x))\right)}{\mu_0(dx)},
%\end{equation}
%while the action becomes
In $e$, we substitute $\tilde{\rho}_t$ for $\rho_t$ in the sequel and redefine the Lagrangian (in a measure-formulation) such that the action becomes
\begin{equation}\label{eqn: action measure}
S[\Phi] = \Int{0}{T}{L[\Phi](t)}{dt} = \Int{0}{T}{\Int{\Omega_0}{}{\left(\dfrac12 |\dot{\Phi}_t(x)|^2 - e(\tilde{\rho}_t(\Phi_t(x)),\Phi_t(x))\right)}{\mu_0(dx)}}{dt}.
\end{equation}
The new, generalized formulation in terms of measures allows us to consider more types of solutions, simply by allowing for more general initial conditions. This is exactly what we exploit in the following step via a particle approximation.

\subsubsection{Step \ref{enum:stepB}}
In this step, we substitute for $\mu_0$ a discrete measure of the form $\bar{\mu}^n_0 = \sum_{i=1}^n m_i \delta_{x_{i,0}}$. Under push-forward, the measure remains a discrete measure with positions of the Diracs $\{x_i(t)\}$ evolving under the motion mapping: $x_i(t)=\Phi_t(x_{i,0})$. We emphasize that the equation for $\Phi_t$ is yet unknown and is to be derived in the next step.\\
\\
The Lagrangian takes the form
\begin{equation}\label{eqn: Lagr discrete measure}
L[\Phi](t) = \sum_{i=1}^n m_i\,\left(\dfrac12 |\dot{x}_i(t)|^2 - e(\tilde{\rho}_t(x_i(t)),x_i(t))\right),
\end{equation}
with
\begin{equation}
\tilde{\rho}_t(x_i(t)) = \sum_{j=1}^n m_j\,\Wh(x_i(t)-x_j(t)).
\end{equation}
In the literature of SPH, particles of the same mass are employed for the modeling of the flow of a single fluid. In that case, the term $m_i$ corresponds to $1/n$. On the other hand, multiphase media of piecewise continuous mass density can be modeled with the use of particles of different masses \cite{Mon05,Zisis14}. For that reason, we adopt the general case of (in principle) unequal masses $m_i$.

\subsubsection{Step \ref{enum:stepC}}
The equations of motion are obtained via the `classical' Euler-Lagrange equations, see (1.57) in \cite{Gol01}, applied to the Lagrangian
\begin{equation}
L[\Phi](t) = \sum_{i=1}^n m_i\,\left(\dfrac12 |u_i|^2 - e\left(\sum_{j=1}^n m_j\,\Wh(y_i-y_j),y_i\right)\right),
\end{equation}
cf.~\eqref{eqn: Lagr discrete measure}. In the presence of nonconservative forces (cf.~p.~23 in \cite{Gol01}) the corresponding Euler-Lagrange equations are
\begin{multline}\label{eqn:EL part}
\dfrac{d}{dt}\left(\nabla_{u_k}L\left|_{(y_i,u_i)=(\Phi_t(x_{i,0}),\dot{\Phi}_t(x_{i,0}))}\right)-\nabla_{y_k}L\right|_{(y_i,u_i)=(\Phi_t(x_{i,0}),\dot{\Phi}_t(x_{i,0}))}\\
= m_k\,q[\bar{\mu}^n_t](\Phi_t(x_{k,0}),\dot{\Phi}_t(x_{k,0})),
\end{multline}
for each $k\in\{1,\ldots,n\}$, where $q$ is the force density (per unit mass) of nonconservative forces. The functional dependence in square brackets denotes that $q$ incorporates a nonlocal interaction term. More details will follow later; cf.~\eqref{eqn: q = eta + K}. The subscript ``$(y_i,u_i)=(\Phi_t(x_{i,0}),\dot{\Phi}_t(x_{i,0}))$'' should be read as performing this substitution for all $i\in\{1,\ldots,n\}$.\\
\\
After calculating the derivatives $\nabla_{u_k}$ and $\nabla_{y_k}$ in \eqref{eqn:EL part}, we obtain
\begin{multline}\label{eqn:EL part full}
m_k\dfrac{d}{dt}\dot{\Phi}_t(x_{k,0}) = \\
- m_k\,\dfrac{\partial e}{\partial \rho}\left(\sum_{j=1}^n m_j\,\Wh(\Phi_t(x_{k,0})-\Phi_t(x_{j,0})),\Phi_t(x_{k,0})\right)\sum_{j=1}^n m_j\,\nabla \Wh(\Phi_t(x_{k,0})-\Phi_t(x_{j,0}))\\
+ \sum_{i=1}^n m_i\,\dfrac{\partial e}{\partial \rho}\left(\sum_{j=1}^n m_j\,\Wh(\Phi_t(x_{i,0})-\Phi_t(x_{j,0})),\Phi_t(x_{i,0})\right)m_k\,\nabla \Wh(\Phi_t(x_{i,0})-\Phi_t(x_{k,0})) \\
- m_k\,\nabla_y e\left(\sum_{j=1}^n m_j\,\Wh(\Phi_t(x_{k,0})-\Phi_t(x_{j,0})),\Phi_t(x_{k,0})\right) + m_k\,q[\bar{\mu}^n_t](\Phi_t(x_{k,0}),\dot{\Phi}_t(x_{k,0})).
\end{multline}
We denote by $\nabla_y e$ the gradient of $e$ only in the explicit spatial coordinate; that is, the second variable of $e$. We divide all terms by $m_k$ (which is nonzero without loss of generality). If in the second line we take $\partial e/\partial\rho$ inside the sum and we use in the third line that $\nabla \Wh$ is an odd function, then the corresponding terms in \eqref{eqn:EL part full} can be combined, and we obtain
\begin{multline}
\ddot{\Phi}_t(x_{k,0}) =\\
-\sum_{i=1}^n m_i\,\nabla \Wh(\Phi_t(x_{k,0})-\Phi_t(x_{i,0}))\,\left[\dfrac{\partial e}{\partial \rho}\left(\tilde{\rho}_t(\Phi_t(x_{k,0})),\Phi_t(x_{k,0})\right)+\dfrac{\partial e}{\partial \rho}\left(\tilde{\rho}_t(\Phi_t(x_{i,0})),\Phi_t(x_{i,0})\right)\right] \\
- \nabla_y e\left(\tilde{\rho}_t(\Phi_t(x_{k,0})),\Phi_t(x_{k,0})\right) + q[\bar{\mu}^n_t](\Phi_t(x_{k,0}),\dot{\Phi}_t(x_{k,0})),\label{eqn: eqn motion discrete ABC}
\end{multline}
for each $k\in\{1,\ldots,n\}$. For brevity of notation, we use $\tilde{\rho}$ again in the argument of $e$.
\subsection{Equations of motion via the route \procACB}\label{sec: eqns motion ACB}
\subsubsection{Step \ref{enum:stepA}}
This step is exactly the same as in Section \ref{sec:proc ABC step A}.
\subsubsection{Step \ref{enum:stepC}}\label{sec:proc ACB step C}
We start from the action given in \eqref{def:action}. Instead of using the classical Euler-Lagrange equations, we employ here a generalized form of the principle of least action (see p.~127 and Section 4.4 of \cite{Berdichevsky}):%a generalized form:\textbf{\color{red}Ref? ``Principle of least action"}
\begin{equation}\label{eqn: EL variational}
S'[\Phi](\Psi)= -Q[\Phi](\Psi),
\end{equation}
for all test functions $\Psi\in C_c^\infty((0,T);C_c^\infty(\Omega_0;\R^d))$. Here, $S'[\Phi](\Psi)$ denotes the variational derivative of $S$ in the direction of $\Psi$, and $Q[\Phi](\Psi)$ is the work done along $\Psi$. It is defined as
\begin{equation}\label{eqn: Q}
Q[\Phi](\Psi) := \Int{0}{T}{\Int{\Omega_0}{}{q[\mu_t](\Phi_t(x),\dot{\Phi}_t(x))\cdot\Psi_t(x)}{\mu_0(dx)}}{dt},
\end{equation}
where $q$ is the force density as in \eqref{eqn:EL part}. For $S'$ we have:
\begin{equation}
\nonumber S'[\Phi](\Psi) := \left.\dfrac{d}{d\eps}S[\Phi+\eps\Psi]\right|_{\eps=0}.\label{eqn: var der S}
\end{equation}
Note that
\begin{multline}
\dfrac{d}{d\eps} \left[ e\left(\Int{\Omega_0}{}{\Wh(\Phi_t(x)+\eps\Psi_t(x)-\Phi_t(y)-\eps\Psi_t(y))}{\mu_0(dy)},\Phi_t(x)+\eps\Psi_t(x)\right) \right]_{\eps=0} \label{eqn:var der e}\\
 =\, \dfrac{\partial e}{\partial \rho}\left(\Int{\Omega_0}{}{\Wh(\Phi_t(x)-\Phi_t(y)}{\mu_0(dy)},\Phi_t(x)\right)
\Int{\Omega_0}{}{\nabla \Wh(\Phi_t(x)-\Phi_t(y))\cdot (\Psi_t(x)-\Psi_t(y))}{\mu_0(dy)}\\
+ \nabla_y e\left(\Int{\Omega_0}{}{\Wh(\Phi_t(x)-\Phi_t(y)}{\mu_0(dy)},\Phi_t(x)\right)\cdot\Psi_t(x).%\\
%\nonumber =& \dfrac{\partial}{\partial \rho}e\left(\tilde{\rho}_t(\Phi_t(x)),\Phi_t(x)\right)\,\Int{\Omega_0}{}{\nabla \Wh(\Phi_t(x)-\Phi_t(y))\cdot (\Psi_t(x)-\Psi_t(y))}{\mu_0(dy)}\\
%\nonumber &+ \nabla_y e\left(\tilde{\rho}_t(\Phi_t(x)),\Phi_t(x)\right)\cdot\Psi_t(x),
\end{multline}
%where we use that $\nabla \Wh*\mu_t=\nabla(\Wh*\mu_t)=\nabla\tilde{\rho}$.
To avoid lengthy notation, we denote here by $e'[\Phi](\Psi)(x)$ the expression in \eqref{eqn:var der e}. The variational derivative of $S$ can be expressed as:
\begin{align}
S'[\Phi](\Psi) =& \Int{0}{T}{\Int{\Omega_0}{}{\left(\dot{\Phi}_t(x)\cdot\dot{\Psi}_t(x) - e'[\Phi](\Psi)(x)\right)}{\mu_0(dx)}}{dt} \label{eqn:var der S} \\
\nonumber =& \Int{0}{T}{\Int{\Omega_0}{}{\left(-\ddot{\Phi}_t(x)\cdot\Psi_t(x) - e'[\Phi](\Psi)(x)\right)}{\mu_0(dx)}}{dt},
\end{align}
where the last step follows from integration by parts with respect to the time variable. The boundary terms disappear because $\Psi$ has compact support within $(0,T)$.\\
\\
We rewrite the part involving $\Psi_t(y)$ in \eqref{eqn:var der e} as follows:
\begin{multline}
\Int{0}{T}{\Int{\Omega_0}{}{-\dfrac{\partial e}{\partial \rho}\left(\tilde{\rho}_t(\Phi_t(x)),\Phi_t(x)\right)\,\Int{\Omega_0}{}{\nabla \Wh(\Phi_t(x)-\Phi_t(y))\cdot \Psi_t(y)}{\mu_0(dy)}}{\mu_0(dx)}}{dt}\\
=\Int{0}{T}{\Int{\Omega_0}{}{\Int{\Omega_0}{}{\dfrac{\partial e}{\partial \rho}\left(\tilde{\rho}_t(\Phi_t(y)),\Phi_t(y)\right)\,\nabla \Wh(\Phi_t(x)-\Phi_t(y))}{\mu_0(dy)}\cdot \Psi_t(x)}{\mu_0(dx)}}{dt},\label{eqn:change order int}
\end{multline}
by subsequently interchanging the order of integration, using that the function $\nabla W$ is odd, and replacing $x$ by $y$ and \textit{vice versa}. A combination of \eqref{eqn: EL variational}, \eqref{eqn: Q}, \eqref{eqn:var der e}, \eqref{eqn:var der S} and \eqref{eqn:change order int} yields for $S'[\Phi](\Psi)+Q[\Phi](\Psi)$ an integral of the form
\begin{equation}
\Int{0}{T}{\Int{\Omega_0}{}{[\ldots]\cdot\Psi_t(x)}{\mu_0(dx)}}{dt},
\end{equation}
where we deliberately do not explicitly write the integrand in square brackets. Since this integral equals $0$ for all $\Psi\in C_c^\infty((0,T);C_c^\infty(\Omega_0;\R^d))$ -- cf.~\eqref{eqn: EL variational} -- the theorem of du Bois-Reymond yields that the integrand should vanish for almost all $t\in[0,T]$ and for $\mu_0$-almost every $x$. Hence, we obtain
\begin{multline}
\ddot{\Phi}_t(x) = -\Int{\Omega_0}{}{\nabla \Wh(\Phi_t(x)-\Phi_t(y))\left[\dfrac{\partial e}{\partial \rho}\left(\tilde{\rho}_t(\Phi_t(x)),\Phi_t(x)\right)+\dfrac{\partial e}{\partial \rho}\left(\tilde{\rho}_t(\Phi_t(y)),\Phi_t(y)\right)   \right]}{\mu_0(dy)}\\
- \nabla_y e\left(\tilde{\rho}_t(\Phi_t(x)),\Phi_t(x)\right)  + q[\mu_t](\Phi_t(x),\dot{\Phi}_t(x)).\label{eqn: eqn motion ACB}
\end{multline}
%\textbf{\color{red}Say something about the `almost every', and that we will search solutions where it holds `for every'.}
\subsubsection{Step \ref{enum:stepB}}
The transition to a particle system takes place by substitution of $\bar{\mu}^n_0=\sum_{i=1}^n m_i\,\delta_{x_{i,0}}$ for $\mu_0$ in \eqref{eqn: eqn motion ACB}. Moreover, in $\tilde{\rho}_t$ and $q$ we replace $\mu_t$ by $\bar{\mu}^n_t:=\Phi_t\#\bar{\mu}^n_0$. Note that, after substitution, \eqref{eqn: eqn motion ACB} holds $\bar{\mu}^n_0$-a.e.~and should therefore (only) be evaluated at $x=x_{k,0}$ for all $k\in\{1,\ldots,n\}$. We obtain exactly \eqref{eqn: eqn motion discrete ABC}.

\subsection{Equations of motion via the route \procCAB}\label{sec: eqns motion CAB}
\subsubsection{Step \ref{enum:stepC}}
At the continuum level, deriving the Euler-Lagrange equations resembles considerably what was done in Section \ref{sec:proc ACB step C}. Note however that the action as defined in \eqref{eqn: Lagrangian in Phi}--\eqref{def:action} is used. In \eqref{eqn: Lagrangian in Phi} $\rho_t(\Phi_t(x))$ occurs. The dependence on $\Phi_t$ that is explicitly written down, corresponds to the position at which $\rho_t$ is evaluated. However, if $\Phi_t$ is varied, also the function $\rho_t$ itself changes. This is somewhat confusing, as this is an implicit, `hidden' dependence of $\rho_t$ on the motion mapping $\Phi_t$. However, the exact relation is given by \eqref{eqn: push-forw density transf}, which we therefore substitute in \eqref{eqn: Lagrangian in Phi}. The variational derivative becomes
\begin{multline}
\nonumber S'[\Phi](\Psi) =\\ \left.\dfrac{d}{d\eps}\left(\Int{0}{T}{\Int{\Omega_0}{}{\left(\dfrac12 |\dot{\Phi}_t(x)+\eps\,\dot{\Psi}_t(x)|^2 - e\left(\dfrac{\rho_0(x)}{\left|J(\Phi_t+\eps\,\Psi_t)(x)\right|},\Phi_t(x)+\eps\,\Psi_t(x)\right)\right)\,\rho_0(x)}{dx}}{dt}\right)\right|_{\eps=0},
\end{multline}
cf.~\eqref{eqn: var der S}. Some effort is needed to deal with the $\eps$-dependence in the Jacobian matrix. We refer here to Section 2 of \cite{Seliger}, where the equation of motion is derived from the action, for the case where $e$ has no explicit dependence on the spatial coordinate; i.e. $e=e(\rho)$. The determinant of the Jacobian matrix is a polynomial of the entries of that matrix. The basic idea in \cite{Seliger} is that the chain rule has to be applied with respect to every element of the Jacobian matrix. To avoid having to introduce a considerable amount of extra notation, we only state the result of \cite{Seliger} here:
\begin{align}\label{eqn:eqn motion Seliger}
\ddot{\Phi}_t(x) =&\left.-\dfrac{1}{\rho_t}\nabla\left(\rho_t^2\,\dfrac{\partial e}{\partial\rho}\right)\right|_{\Phi_t(x)}\\
\nonumber =& -\left(2\dfrac{\partial e}{\partial\rho}(\rho_t(\Phi_t(x))) + \rho_t(\Phi_t(x))\dfrac{\partial^2 e}{\partial\rho^2}(\rho_t(\Phi_t(x)))\right) \nabla\rho_t(\Phi_t(x)).
\end{align}
On the right-hand side the gradient of the pressure $P$ appears, due to the thermodynamic relation $\partial e/\partial \rho=P/\rho^2$. The reader should note that the notation used in \cite{Seliger} differs substantially from ours, but that the philosophy of deriving the equations of motion is the same.\footnote{Another interesting observation in \cite{Seliger} is that the Lagrangian density -- when formulated in terms of Eulerian coordinates -- is just the pressure $P$.}\\
\\
If $e=e(\rho,y)$, and moreover, we include nonconservative forces, then instead of \eqref{eqn:eqn motion Seliger} we obtain
\begin{multline}
\ddot{\Phi}_t(x)= -\left(2\dfrac{\partial e}{\partial\rho}(\rho_t(\Phi_t(x)),\Phi_t(x)) + \rho_t(\Phi_t(x))\dfrac{\partial^2 e}{\partial\rho^2}(\rho_t(\Phi_t(x)),\Phi_t(x))\right) \nabla\rho_t(\Phi_t(x))\\
- \nabla_y e\left(\rho_t(\Phi_t(x)),\Phi_t(x)\right)  + q[\rho_t](\Phi_t(x),\dot{\Phi}_t(x)).\label{eqn: eqn motion Seliger with extra terms}
\end{multline}
The additional terms follow from similar steps as the ones leading to \eqref{eqn: eqn motion ACB}. We omit further details. Note that, in correspondence with $q$ as introduced before, the dependence on $\rho_t$ in square brackets indicates the presence of a nonlocal term; cf.~\eqref{eqn: q = eta + K}. In the next step, this will become a dependence on the measure $\mu_t$ like before.

\subsubsection{Step \ref{enum:stepA}}
In this step, we formulate \eqref{eqn: eqn motion Seliger with extra terms} in terms of measures. The only place where the measure $\mu_t$ can be incorporated directly, is in the nonconservative force density. We write $q[\mu_t](\Phi_t(x),\dot{\Phi}_t(x))$ instead of $q[\rho_t](\Phi_t(x),\dot{\Phi}_t(x))$. All the other occurrences of $\rho_t$ in \eqref{eqn: eqn motion Seliger with extra terms} we approximate by $\tilde{\rho}_t$ as defined in \eqref{eqn: def tilde rho}. We obtain
\begin{multline}
\ddot{\Phi}_t(x)= -\left(2\dfrac{\partial e}{\partial\rho}(\tilde{\rho}_t(\Phi_t(x)),\Phi_t(x)) + \tilde{\rho}_t(\Phi_t(x))\dfrac{\partial^2 e}{\partial\rho^2}(\tilde{\rho}_t(\Phi_t(x)),\Phi_t(x))\right) \nabla\tilde{\rho}_t(\Phi_t(x))\\
- \nabla_y e\left(\tilde{\rho}_t(\Phi_t(x)),\Phi_t(x)\right)  + q[\mu_t](\Phi_t(x),\dot{\Phi}_t(x)).\label{eqn: eqn motion CAB}
\end{multline}

\subsubsection{Step \ref{enum:stepB}}
We take $\bar{\mu}^n_0:=\sum_{i=1}^n m_i\,\delta_{x_{i,0}}$ and replace $\mu_t$ by $\bar{\mu}^n_t:=\Phi_t\#\bar{\mu}^n_0$ in $\tilde{\rho}_t$ and $q$ that appear in \eqref{eqn: eqn motion CAB}. We evaluate the resulting equation at $x=x_{k,0}$ for all $k\in\{1,\ldots,n\}$ to obtain
\begin{multline}
\ddot{\Phi}_t(x_{k,0})=\\ -\left(2\dfrac{\partial e}{\partial\rho}(\tilde{\rho}_t(\Phi_t(x_{k,0})),\Phi_t(x_{k,0})) + \tilde{\rho}_t(\Phi_t(x_{k,0}))\dfrac{\partial^2 e}{\partial\rho^2}(\tilde{\rho}_t(\Phi_t(x_{k,0})),\Phi_t(x_{k,0}))\right) \nabla\tilde{\rho}_t(\Phi_t(x_{k,0}))\\
- \nabla_y e\left(\tilde{\rho}_t(\Phi_t(x_{k,0})),\Phi_t(x_{k,0})\right)  + q[\bar{\mu}^n_t](\Phi_t(x_{k,0}),\dot{\Phi}_t(x_{k,0})),\label{eqn: eqn motion discrete CAB}
\end{multline}
where each appearance of $\tilde{\rho}_t$ denotes a sum over all particle positions. Namely,
\begin{align}
\tilde{\rho}_t(\Phi_t(x_{k,0})) &= \sum_{j=1}^n m_j\,\Wh(\Phi_t(x_{k,0})-\Phi_t(x_{j,0})),\,\,\,\text{and}\\
\nabla\tilde{\rho}_t(\Phi_t(x_{k,0})) &= \sum_{j=1}^n m_j\,\nabla\Wh(\Phi_t(x_{k,0})-\Phi_t(x_{j,0})).
\end{align}
\subsection{Comparison of the resulting equations \eqref{eqn: eqn motion discrete ABC} and \eqref{eqn: eqn motion discrete CAB}}\label{sec: compare eqns motion}
Procedures \procABC and \procACB yield the same equations of motion, namely \eqref{eqn: eqn motion discrete ABC}. As anticipated already in Section \ref{sec: 3 proc}, the equation resulting from Procedure \procCAB is different; see \eqref{eqn: eqn motion discrete CAB}. This difference between the two final equations arose because we introduced the regularization via $\tilde{\rho}$ at different stages. In fact, \eqref{eqn: eqn motion discrete ABC} contains an extra regularization in space, as we will show now.\\
\\
Note that only the term involving $\partial e/\partial\rho$ and $\partial^2 e/\partial\rho^2$ is different. In \eqref{eqn: eqn motion discrete ABC}, we have
\begin{equation}
\nonumber -\sum_{i=1}^n m_i\,\nabla \Wh(\Phi_t(x_{k,0})-\Phi_t(x_{i,0}))\,\left[\dfrac{\partial e}{\partial \rho}\left(\tilde{\rho}_t(\Phi_t(x_{k,0})),\Phi_t(x_{k,0})\right)+\dfrac{\partial e}{\partial \rho}\left(\tilde{\rho}_t(\Phi_t(x_{i,0})),\Phi_t(x_{i,0})\right)\right],
\end{equation}
while the corresponding part in \eqref{eqn: eqn motion discrete CAB} is
\begin{equation}
\nonumber -\left(2\dfrac{\partial e}{\partial\rho}(\tilde{\rho}_t(\Phi_t(x_{k,0})),\Phi_t(x_{k,0})) + \tilde{\rho}_t(\Phi_t(x_{k,0}))\dfrac{\partial^2 e}{\partial\rho^2}(\tilde{\rho}_t(\Phi_t(x_{k,0})),\Phi_t(x_{k,0}))\right) \nabla\tilde{\rho}_t(\Phi_t(x_{k,0})).
\end{equation}
Note that both of them contain a part $-\dfrac{\partial e}{\partial\rho}(\tilde{\rho}_t(\Phi_t(x_{k,0})),\Phi_t(x_{k,0}))\,\nabla\tilde{\rho}_t(\Phi_t(x_{k,0}))$, hence let us consider in \eqref{eqn: eqn motion discrete CAB} only
\begin{multline}\label{eqn: gradient rho de/drho}
-\left(\dfrac{\partial e}{\partial\rho}(\tilde{\rho}_t(\Phi_t(x_{k,0})),\Phi_t(x_{k,0})) + \tilde{\rho}_t(\Phi_t(x_{k,0}))\dfrac{\partial^2 e}{\partial\rho^2}(\tilde{\rho}_t(\Phi_t(x_{k,0})),\Phi_t(x_{k,0}))\right) \nabla\tilde{\rho}_t(\Phi_t(x_{k,0}))\\
 = -\nabla\left( \tilde{\rho}_t(\Phi_t(x_{k,0}))\dfrac{\partial e}{\partial\rho}(\tilde{\rho}_t(\Phi_t(x_{k,0})),\Phi_t(x_{k,0})) \right).
\end{multline}
To obtain this equality, we have assumed that $\nabla_y \partial e/\partial \rho \equiv 0$; this assumption anticipates the choice we make in \eqref{eqn: explicit e = V + barF}. Let us even go back one more step and consider this term before the introduction of $\tilde{\rho}$, i.e.~as in \eqref{eqn: eqn motion Seliger with extra terms}. To see how this term relates to the corresponding one in \eqref{eqn: eqn motion discrete ABC}, we take the convolution with $\Wh$, and proceed as follows:
\begin{multline}
\nonumber-\Int{\Omega_t}{}{\Wh(\xi-y)\nabla_y\left( \rho_t(y)\dfrac{\partial e}{\partial\rho}(\rho_t(y),y) \right)}{dy} = \Int{\Omega_t}{}{\nabla_y\Wh(\xi-y)\,\rho_t(y)\,\dfrac{\partial e}{\partial\rho}(\rho_t(y),y)}{dy}\\
\nonumber= -\Int{\Omega_0}{}{\nabla\Wh(\xi-\Phi_t(y))\,\dfrac{\partial e}{\partial\rho}(\rho_t(\Phi_t(y)),\Phi_t(y))\,\rho_t(\Phi_t(y))|J\Phi_t(y)|}{dy}\\
= -\Int{\Omega_0}{}{\nabla\Wh(\xi-\Phi_t(y))\,\dfrac{\partial e}{\partial\rho}(\rho_t(\Phi_t(y)),\Phi_t(y))\,\rho_0(y)}{dy}.
\end{multline}
In the first step, we performed integration by parts, with vanishing boundary terms on $\partial\Omega_t$. This is because $\Omega_t=\supp\rho_t$ and hence $\rho_t$ vanishes on its boundary. Now replace $\rho_0(y)dy$ by $\mu_0(dy)$ and approximate $\rho_t$ by $\tilde{\rho}_t$. Take $\mu_0:=\sum_{i=1}^n m_i\,\delta_{x_{i,0}}$ and evaluate at $\xi=\Phi_t(x_{k,0})$ and obtain
\begin{equation}
\nonumber -\sum_{i=1}^n m_i\,\nabla \Wh(\Phi_t(x_{k,0})-\Phi_t(x_{i,0}))\,\dfrac{\partial e}{\partial \rho}\left(\tilde{\rho}_t(\Phi_t(x_{i,0})),\Phi_t(x_{i,0})\right).
\end{equation}
This expression exactly appears in \eqref{eqn: eqn motion discrete ABC}. To summarize: the connection between \eqref{eqn: eqn motion discrete ABC} and \eqref{eqn: eqn motion discrete CAB} is that in the former during the derivation procedure an extra regularization in space was introduced for \textit{a part of} the right-hand side. Note the connection with the following case: consider the Fr\'echet derivative $\mathcal{E}'$, based on the $L^2$ inner product, of some energy $\mathcal{E}=\mathcal{E}(\rho)$. Define a second energy $\bar{\mathcal{E}}$ by $\bar{\mathcal{E}}(\rho):=\mathcal{E}(W_h*\rho)$. Then $\bar{\mathcal{E}}'(\rho)=W_h*\mathcal{E}'(W_h*\rho)$, which also contains an extra regularization. In this paper we treat a special case of the general energy $\mathcal{E}$.\\
\\
Note that, if we only consider the part involving $\partial e/\partial \rho$, \eqref{eqn: eqn motion discrete ABC} is the same as Equation (3.8) in \cite{Mon05}. The notation used therein shows the direct dependence on the pressure. In Equation (3.5) of \cite{Mon05}, the equivalent of \eqref{eqn: eqn motion discrete CAB} is given. The reason why \eqref{eqn: eqn motion discrete ABC} is the one traditionally used in the SPH community is given in \cite{Mon05}: it does conserve linear and angular momentum exactly, as opposed to \eqref{eqn: eqn motion discrete CAB}. Having derived the schemes, we are now able also to elaborate on the remark already made in the introduction: \eqref{eqn: eqn motion discrete CAB} ``requires the gradient of the pressure field to be expressed analytically, while the pressure itself does not appear in the numerical scheme". The first part on the right-hand side of \eqref{eqn: eqn motion discrete CAB} is -- anticipating \eqref{eqn: defs F0 F1} -- of the form $-\dfrac{1}{\tilde{\rho}}\,\dfrac{d}{d\rho}\left(\tilde{\rho}^2 \bar{F}'(\tilde{\rho})\right)\nabla\tilde{\rho} =-\dfrac{1}{\tilde{\rho}}\,\dfrac{d}{d\rho}\left(P(\tilde{\rho})\right)\nabla\tilde{\rho}$. Hence we need an analytical expression for $\dfrac{d}{d\rho}\left(P(\tilde{\rho})\right)$.

\subsection{Measure-valued formulation}
In Sections \ref{sec: eqns motion ABC}, \ref{sec: eqns motion ACB} and \ref{sec: eqns motion CAB} we derived particle-based schemes. To establish their convergence (as $n\to\infty$) we use a measure-valued formulation. Such formulation incorporates both the limit and the approximating sequence. Hence, we focus on the measure-formulations \eqref{eqn: eqn motion ACB} and \eqref{eqn: eqn motion CAB}, without the specific choice $\mu_0=\bar{\mu}^n_0$. Our convergence proof is applicable to a class of approximating measures that is much broader than just sums of Dirac deltas. The SPH-inspired particle approach is a special case; see Corollary \ref{cor: particle system}.\\
\\
Although \eqref{eqn: eqn motion ACB} and \eqref{eqn: eqn motion CAB} are different (cf.~Section \ref{sec: compare eqns motion}), we wish to establish the convergence proof for both formulations simultaneously. Hence, we introduce a switching parameter $\theta\in\{0,1\}$ to unify both variants in a single equation of motion. First, we assume that $e$ is of the form
\begin{equation}
e(\rho,y) := V(y) + \bar{F}(\rho),\label{eqn: explicit e = V + barF}\\
\end{equation}
in agreement with the remark we already made underneath \eqref{eqn: gradient rho de/drho}. Note that $\partial e/\partial \rho = \bar{F}'$ and $\nabla_y e=\nabla V$. Here, $V\in C^2_b(\R^d;\R)$ describes the portion of potential energy which is due to a gravitational or magnetic field and $\bar{F}\in C^2(\Rp;\R)$, where $\Rp:=(0,\infty)$ the potential energy due to the thermodynamics of the medium under consideration. This decomposition of $e$ is typical for an ideal medium, such as a compressible inviscid fluid. Note moreover that this is a common modeling assumption in the derivation of the SPH equations for a system of particles \cite{Mon05}. We introduce an auxiliary function $F_\theta$, $\theta\in\{0,1\}$, that is defined by
\begin{equation}\label{eqn: defs F0 F1}
F_0(\rho) := \dfrac{1}{\rho}\,\dfrac{d}{d\rho}\left(\rho^2 \bar{F}'(\rho)\right),\,\,\,\,\,\text{and}\,\,\,\,\,F_1(\rho) := \bar{F}'(\rho).
\end{equation}
We choose $q$ to be of the form
\begin{equation}
q[\mu](y,u) := -\eta(y)\,u + (K*\mu)(y),\label{eqn: q = eta + K}
\end{equation}
with $\eta\in C^1_b(\R^d;\Rp)$ and $K\in C^1_b(\R^d;\R^d)$. The $K$-term describes non-local interactions within the system, while the $\eta$-term is a viscous term. We use $-\eta\cdot u$, which is a simplified version of the usual viscous term in SPH that (also) involves $\Delta W_h * u$; see \cite{Mon05}.\\
\\
We assign the value $\theta=0$ to the formulation in \eqref{eqn: eqn motion CAB}, and $\theta=1$ to \eqref{eqn: eqn motion ACB}. Both equations are now simultaneously written as
%Now \eqref{eqn: eqn motion ACB} simplifies to
%\begin{multline}
%\ddot{\Phi}_t(x) = - \Int{\Omega_0}{}{\nabla \Wh(\Phi_t(x)-\Phi_t(y))\left[F\left(\tilde{\rho}_t(\Phi_t(x))\right)+F\left(\tilde{\rho}_t(\Phi_t(y))\right)   \right]}{\mu_0(dy)}\\
%- \nabla V\left(\Phi_t(x)\right)  - \eta(\Phi_t(x))\,\dot{\Phi}_t(x) + (K*\mu_t)(\Phi_t(x)).
%\end{multline}
%The integral term we write as
%\begin{multline}
%\Int{\Omega_0}{}{\nabla \Wh(\Phi_t(x)-\Phi_t(y))\left[F\left(\tilde{\rho}_t(\Phi_t(x))\right)+F\left(\tilde{\rho}_t(\Phi_t(y))\right)   \right]}{\mu_0(dy)}\\
%=F\left(\tilde{\rho}_t(\Phi_t(x))\right)\nabla\tilde{\rho}_t(\Phi_t(x)) + (\nabla \Wh*(F(\tilde{\rho}_t(\cdot))\mu_t))(\Phi_t(x)).
%\end{multline}
%Note that the second term on the right-hand side is not present in \eqref{eqn: eqn motion DiLisio meas}. Hence we introduce a switching parameter $\theta\in\{0,1\}$ to incorporate both variants of the equations of motion:
\begin{multline}\label{eqn:eqn motion with theta}
\ddot{\Phi}_t(x) = - F_\theta\left(\tilde{\rho}_t(\Phi_t(x))\right)\nabla\tilde{\rho}_t(\Phi_t(x)) - \theta\,(\nabla \Wh*[(F_\theta\circ\tilde{\rho}_t)\mu_t])(\Phi_t(x))\\
- \nabla V\left(\Phi_t(x)\right)  - \eta(\Phi_t(x))\,\dot{\Phi}_t(x) + (K*\mu_t)(\Phi_t(x)).
\end{multline}
Here we use the shorthand notation
\begin{equation}
(\nabla \Wh*[(F_\theta\circ\tilde{\rho}_t)\mu_t])(\xi)=\Int{\Omega_t}{}{\nabla\Wh(\xi-y)F_\theta(\tilde{\rho}_t(y))}{\mu_t(dy)}.
\end{equation}
In \eqref{eqn:eqn motion with theta} we slightly abuse notation, and the equation should be read as follows: whenever $\theta=0$ we disregard the complete term $\theta\,(\nabla \Wh*[(F_\theta\circ\tilde{\rho}_t)\mu_t])(\Phi_t(x))$, irrespective of whether the convolution term is well-defined, bounded etc.\\
\begin{remark}
We emphasize that $F_0$ and $F_1$ are physically different objects in the sense that $F_0$ contains all contributions of $\bar{F}$ to the flow, while $F_1$ only contains part of that influence. Hence, although the notation might suggest so, by setting $\theta=1$ we are not \textit{adding} terms. We use one function $F_\theta$ to facilitate the presentation in the sequel. However, $F_0$ and $F_1$ do have the same physical dimension and e.g.~if $\bar{F}$ is given by $\bar{F}(\rho)\sim\rho^\kappa$ for some $\kappa\in\R\setminus\{0\}$, then both $F_0,F_1\sim \rho^{\kappa-1}$.
\end{remark}
Now we arrive at the central evolution problem we will consider in the rest of this paper. Fix a final time $T>0$. Let $\P(\R^d)$ be the space of probability measures on $\R^d$. Assume that $\mu_0\in\P(\R^d)$ and that there is an $r_0>0$ such that
\begin{equation}\label{eqn: mu0 in ball}
\supp\mu_0\subset B(r_0).
\end{equation}
Let $v_0\in C^1_b(\R^d;\R^d)$ and $\theta\in\{0,1\}$ be fixed. We consider the system
\begin{equation}\label{eqn:system mu0}
\left\{
  \begin{array}{l}
    \ddot{\Phi}_t(x) = - F_\theta\left(\tilde{\rho}_t(\Phi_t(x))\right)\nabla\tilde{\rho}_t(\Phi_t(x)) - \theta\,(\nabla \Wh*[(F_\theta\circ\tilde{\rho}_t)\mu_t])(\Phi_t(x))\\
    \hspace*{5cm}- \nabla V\left(\Phi_t(x)\right)  - \eta(\Phi_t(x))\,\dot{\Phi}_t(x) + (K*\mu_t)(\Phi_t(x));\\
    \tilde{\rho}_t:=\Wh*\mu_t;\\
    \mu_t=\Phi_t\#\mu_0;\\
    \Phi_0(x)=x,\, \dot{\Phi}_0(x)=v_0(x),
  \end{array}
\right.
\end{equation}
for all $x\in\supp\mu_0$ and all $t\in(0,T]$. We remark that this condition implies the one with \eqref{eqn: eqn motion ACB}: that equation is required to hold for almost all $t\in[0,T]$ and for $\mu_0$-almost every $x$.\\
\begin{remark}
We might have taken $\bar{K}*(\Wh*\mu_t)$ for some $\bar{K}$, instead of $K*\mu_t$, to comply with the pressure term (i.e.~the one involving $F_\theta$) that only depends on the \textit{regularized} density $\tilde{\rho}_t$. We prefer the shorter form $K*\mu_t$. This choice can be made without loss of generality if we take $K=\bar{K}*\Wh$.
\end{remark}
\begin{remark}
It is not \textit{a priori} clear whether the term $K*\mu_t$ is a conservative or a nonconservative force density, hence whether it should be part of $q$ or be related to $e$. Assume there is a $\bar{K}$ such that $K(\xi)=-\bar{K}'(|\xi|)\xi/|\xi|$. Then both ways give the same equations of motion. Indeed, if we include the energy density $\frac12\bar{K}*\mu_t$ in $e$ instead of including $K*\mu_t$ in $q$, we also obtain \eqref{eqn:eqn motion with theta}.
\end{remark}

\section{Convergence}\label{sec: conv}
In this section we introduce some preliminary notions, and summarize the required assumptions together with the convergence result (Theorem~\ref{thm: main thm}). The theorem provides a general result, of which the convergence of SPH schemes is a special case; see Corollary~\ref{cor: particle system}. The proof of the theorem in given in Section \ref{sec: proof}.

\subsection{Preliminaries}
Fix a constant integer $d\in\Np$.\\
\begin{definition}[Push-forward]\label{def:push-forward}
The \textit{push-forward} of a probability measure $\mu\in\P(\R^d)$ by a mapping $\map{\Phi}{\R^d}{\R^d}$, notation $\Phi\#\mu$, is defined by
\begin{equation}
(\Phi\#\mu)(B):=\mu(\Phi^{-1}(B))
\end{equation}
for all measurable $B\subset \R^d$. Equivalently, we can define $\Phi\#\mu$ as the push-forward of $\mu$ by $\Phi$ if
\begin{equation}
\Int{\R^d}{}{f(x)}{(\Phi\#\mu)(dx)}=\Int{\R^d}{}{f(\Phi(x))}{\mu(dx)}
\end{equation}
for all measurable, bounded functions $f$ on $\R^d$.%Cf. Tosin, Piccoli (ARMA): f bounded Borel function
\end{definition}
\begin{definition}[Joint representation]\label{def:coupling}
A \textit{joint representation} of two measures $\mu_1,\mu_2\in\P(\R^d)$ is a measure $\pi$ on $\R^d\times \R^d$ such that
\begin{equation}
\pi(A\times \R^d)=\mu_1(A),\,\,\,\text{and}\,\,\, \pi(\R^d\times B)=\mu_2(B),
\end{equation}
for all $A$ and $B$ in the Borel $\sigma$-algebra of $\R^d$. We denote by $\Pi(\mu_1,\mu_2)$ the set of all joint representations of $\mu_1$ and $\mu_2$. Joint representations are also called \textit{couplings}.
\end{definition}
A useful property of a joint representation $\pi\in\Pi(\mu_1,\mu_2)$ is that for each $i=1,2$
\begin{equation}
\label{eq: marginals}
\Int{\R^d\times\R^d}{}{f(x_i)}{\pi(dx_1,dx_2)}=\Int{\R^d}{}{f(x)}{\mu_i(dx)}
\end{equation}
for all measurable, bounded functions $f$ on $\R^d$. In fact, this is an alternative definition.\\

\begin{definition}[Wasserstein distance]\label{def:Wass}
The \textit{Wasserstein distance} between two probability measures $\mu_1,\mu_2\in\P(\R^d)$ is defined as
\begin{equation}
\mathcal{W}(\mu_1,\mu_2):=\inf_{\pi\in\Pi(\mu_1,\mu_2)}\Int{\R^d\times\R^d}{}{|x-y|}{\pi(dx,dy)}.
\end{equation}
\end{definition}
Note that, to be more precise, we should call this the \textit{$1$-Wasserstein distance}, as a special case of the \textit{$p$-Wasserstein distance} for which the cost function $|x-y|^p$ is used in the integral. The $1$-Wasserstein distance is usually written as $W_1$, but we will stick to $\mathcal{W}$ to avoid confusion with the smoothing function $\Wh$. The particular choice $p=1$ is made because it is compatible with the Lipschitz properties of the functions and the motion mapping that we use. This is what Section \ref{sec: conv} hinges on. For an exposition on the Wasserstein distance and the related concept of \textit{optimal transport}, we refer to \cite{Vil03} and \cite{Vil09}.

%It also can be computed as
%\begin{equation}
%\label{eqn: 2def W}
%\mathcal{W}(\mu_1,\mu_2)=\inf_{\Phi: \mu_2=\Phi_{\#}\mu_1}\int_{\R^d}|x-\Phi(x)|\,\mu_1(dx).
%\end{equation}

\subsection{Assumptions}
Throughout the paper, we assume the following:\\
\begin{assumption}\label{ass:V eta K}
The functions $V$, $\eta$ and $K$ satisfy $V\in C^2_b(\R^d;\R)$, $\eta\in C^1_b(\R^d;\Rp)$ and $K\in C^1_b(\R^d;\R^d)$.
\end{assumption}
\begin{remark}
Note in particular that the above assumption implies that $\nabla V$ and $K$ are Lipschitz continuous. We denote their Lipschitz constants by $|\nabla V|_L$ and $|K|_L$, respectively.
\end{remark}
For $F_\theta$ and $\Wh$ we have requirements that depend on the value of $\theta$. Recall that 
\begin{equation*}
\Rp:=(0,\infty)
\end{equation*}
and define 
\begin{equation*}
\Rp_0:=[0,\infty).
\end{equation*}\\
\begin{assumption}\label{ass:Wh}
The function $\Wh\in C^2_b(\R^d;\Rp_0)$ is even and satisfies $\int_{\R^d}\Wh(x)\,dx=1$.
\end{assumption}

\begin{assumption}[The case $\theta=0$]\label{ass: F W theta=0}%\label{ass:Fnabla bdd}\label{ass:F' bdd}
We require that $F_0\in C^1(\Rp;\R)$. Moreover, we assume that there is a constant $M_1>0$ such that for all $\mu\in\P(\R^d)$
\begin{equation}\label{eqn:ass uniform bound Fnabla rho}
\sup_{x\in\supp\mu}\left|F_0\left((\Wh*\mu)(x)\right)\nabla(\Wh*\mu)(x) \right| \leqslant M_1.
\end{equation}
\end{assumption}
If $\theta=0$, we define $M_2,M_3>0$ such that
\begin{align}
\displaystyle\sup_{u\in U_{T,\Wh}}\left|F_0(u) \right|&\leqslant M_2,\,\,\,\text{and}\\
\displaystyle\sup_{u\in U_{T,\Wh}}\left|F'_0(u) \right|&\leqslant M_3,
\end{align}
where
\begin{align}
\displaystyle U_{T,\Wh} &:=\left\{u\in\Rp:\left(\inf_{B(2r(T))}\Wh\right)\leqslant u \leqslant\|\Wh\|_\infty\right\},\,\,\,\text{and}\\
r(T)&:= r_0 + T\|v_0\|_\infty + \frac12\,T^2\,(\|\nabla V\|_\infty+M_1+\|K\|_\infty),
\end{align}
cf.~\eqref{eqn:def r}. Under Assumption \ref{ass: F W theta=0}, $F_0$ may have singularities at the origin, but \textit{only} if $\Wh$ is strictly positive everywhere in $B(2r(T))$. Such $F_0$ and $\Wh$ are used in \cite{DiLisio}; see also Section \ref{sect:Discussion}.\\
\\
If $\theta=1$ we need the following assumption:\\
\begin{assumption}[The case $\theta=1$]\label{ass: F W theta=1}
We assume that $F_1\in C^1(\Rp_0;\R)$. 
\end{assumption}
For $\theta=1$, let $M_2,M_3>0$ be such that
\begin{align}
\sup_{u\in\left[0,\|\Wh\|_\infty\right]}\left|F_1(u)\right|&\leqslant M_2,\,\,\,\text{and}\\
\displaystyle\sup_{u\in\left[0,\|\Wh\|_\infty\right]}\left|F'_1(u) \right|&\leqslant M_3.
\end{align}
and define $M_1:=2\, M_2\,\|\nabla\Wh\|_\infty$.\\
\\
%Note that the requirements on $F$ are (considerably) stricter, but we gain in the sense that $\Wh$ is allowed to have compact support.
In both cases $\theta=0$ and $\theta=1$, we use the same letters for the constants, to ease notation in the sequel.\\
\begin{remark}
The upper bound in \eqref{eqn:ass uniform bound Fnabla rho} is needed to get an \textit{a priori} bound on the propagation speed in Lemma \ref{lem:bound motion map}. Consequently, we can restrict ourselves to measures with bounded support afterwards; cf.~Corollary \ref{cor: bdd support solution}. To achieve Lemma \ref{lem:bound motion map} if $\theta=1$, we need Assumption \ref{ass: F W theta=1}, which does not allow for singularities in $F_1$ around zero.\\
We demonstrate now why a weaker assumption for $F_1$, resembling \eqref{eqn:ass uniform bound Fnabla rho} is not feasible. Assume that $F_1(\rho):=\rho^\alpha$ with $\alpha\in(-1,0)$. This is the case also considered in \cite{DiLisio}. To bound the first term on the right-hand side of \eqref{eqn:eqn motion with theta}, in \cite{DiLisio} it is assumed that $|\nabla\Wh(\xi)|\leqslant c |\Wh(\xi)|^{-\alpha}$ for some $c>0$. We would need an estimate on
\begin{equation}\label{eqn:ass needed if theta=1}
\sup_{x\in\supp\mu}\left|(\nabla \Wh*[(F_1\circ(\Wh*\mu))\cdot\mu])(x)\right|.
\end{equation}
Let $\Wh$ be strictly positive everywhere. Since $\Wh\in L^1(\R^d)$, $\lim_{\xi\to\infty}\Wh(\xi)=0$. Let $\Wh$ satisfy the aforementioned condition $|\nabla\Wh(\xi)|\leqslant c |\Wh(\xi)|^{-\alpha}$. Then also $\lim_{\xi\to\infty}|\nabla\Wh(\xi)|=0$. Under these (not very strict) conditions one can show that \eqref{eqn:ass needed if theta=1} is unbounded; to see this, use e.g.~the sequence of measures $(\mu^\kappa)_{\kappa\in\Np}$ defined by $\mu^\kappa:=(\delta_{-\kappa\underline{e}_1}+\delta_{\kappa\underline{e}_1}+\delta_{(\kappa+1)\underline{e}_1})/3$, where $\underline{e}_1$ is the first unit vector in $\R^d$. Note in particular that \eqref{eqn:ass needed if theta=1} is unbounded for a Gaussian $\Wh$. The Gaussian however is one of the standard choices for $\Wh$ that we do want to allow for.
\end{remark}

\subsection{Main convergence result}
Let $\{\mu_0^n\}_{n\in\N}\subset\P(\R^d)$, and assume that
\begin{equation}
\supp\mu_0^n\subset B(r_0)\,\,\,\text{ for all }n\in\N,
\end{equation}
where $r_0>0$ is the same constant as in \eqref{eqn: mu0 in ball}. For each $n\in\N$ we associate to the measure $\mu_0^n$ a system of equations analogous to \eqref{eqn:system mu0}:
\begin{equation}\label{eqn:system mu0n}
\left\{
  \begin{array}{l}
    \ddot{\Phi}^n_t(x) = - F_\theta\left(\tilde{\rho}^n_t(\Phi^n_t(x))\right)\nabla\tilde{\rho}^n_t(\Phi^n_t(x)) - \theta\,(\nabla \Wh*[(F_\theta\circ\tilde{\rho}^n_t)\mu^n_t])(\Phi^n_t(x))\\
    \hspace*{4.5cm}- \nabla V\left(\Phi^n_t(x)\right)  - \eta(\Phi^n_t(x))\,\dot{\Phi}^n_t(x) + (K*\mu^n_t)(\Phi^n_t(x));\\
    \tilde{\rho}^n_t:=\Wh*\mu^n_t;\\
    \mu^n_t=\Phi^n_t\#\mu^n_0;\\
    \Phi^n_0(x)=x,\, \dot{\Phi}^n_0(x)=v_0(x),
  \end{array}
\right.
\end{equation}
for all $x\in\supp\mu^n_0$ and all $t\in[0,T]$. Note that the only difference with \eqref{eqn:system mu0} lies in the initial distribution $\mu_0^n$ versus $\mu_0$; the initial velocity $v_0$ is the same.\\
\\
For any $r>0$, define $\P_{r}(\R^d):=\{\mu\in\P(\R^d):\supp\mu\subset B(r)\}$. We also define $\mathcal{A}$ as the space of all functions from $\supp\mu_0$ to $C^2([0,T];\R^d)$.\\
\\
The main result of the present paper is the following.\\
\begin{theorem}
\label{thm: main thm}
Assume that $v_0\in C^1_b(\R^d;\R^d)$, and that Assumptions \ref{ass:V eta K} and \ref{ass:Wh} hold. Let moreover (depending on the value of $\theta$) Assumption \ref{ass: F W theta=0} or \ref{ass: F W theta=1} be satisfied, and take the sequence $\{\mu^n_0\}\subset\P_{r_0}(\R^d)$ such that
\begin{equation}
\label{eqn: initial condition}
\mathcal{W}(\mu_0^n,\mu_0)\stackrel{n\to\infty}{\longrightarrow}0,
\end{equation}
for some $\mu_0\in\P_{r_0}(\R^d)$. Then:
\begin{enumerate}
  \item there is a unique pair $(\mu,\Phi)\in C([0,T];\P_{r(T)}(\R^d))\times \mathcal{A}$ that satisfies \eqref{eqn:system mu0};\label{thm:part1}
  \item if, for all $n\in\N$, the pair $(\mu^n,\Phi^n)\in C([0,T];\P_{r(T)}(\R^d))\times \mathcal{A}$ is a solution of \eqref{eqn:system mu0n}, then
        \begin{equation}
            \sup_{t\in[0,T]}\mathcal{W}(\mu^n_t,\mu_t)\stackrel{n\to\infty}{\longrightarrow}0.
        \end{equation}\label{thm:part2}
\end{enumerate}
\end{theorem}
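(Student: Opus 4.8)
The plan is to recast \eqref{eqn:system mu0} as a fixed-point problem for the flow of measures and to close the argument with Gr\"onwall-type estimates in the Wasserstein distance. Throughout, write $G[\mu](y,u)$ for the right-hand side of the first line of \eqref{eqn:system mu0}, so that the system reads $\ddot\Phi_t(x)=G[\mu_t](\Phi_t(x),\dot\Phi_t(x))$ coupled with $\mu_t=\Phi_t\#\mu_0$. The first task is an \emph{a priori} bound confining the dynamics to a fixed ball: using Assumption \ref{ass:V eta K} together with the uniform bound \eqref{eqn:ass uniform bound Fnabla rho} (respectively $M_1:=2M_2\|\nabla\Wh\|_\infty$ when $\theta=1$), one bounds $|\ddot\Phi_t|$ by $\|\nabla V\|_\infty+M_1+\|K\|_\infty$ plus the friction contribution, whence a Duhamel/Gr\"onwall estimate gives $\supp\mu_t\subset B(r(T))$ for every admissible solution; this is the content of Lemma \ref{lem:bound motion map} and Corollary \ref{cor: bdd support solution}. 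The crucial consequence is that $\tilde\rho_t(\xi)=(\Wh*\mu_t)(\xi)$ then takes values in $U_{T,\Wh}$, i.e.\ it is bounded \emph{away from} the origin by $\inf_{B(2r(T))}\Wh$. This is exactly what keeps us in the region where $F_0$ is $C^1$ with derivative bounded by $M_3$ (Assumption \ref{ass: F W theta=0}), so that the possible singularity of $F_0$ at $0$ never enters.

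The heart of the argument is two Lipschitz estimates for $G$, valid for $\mu,\nu\in\P_{r(T)}(\R^d)$. First, a \emph{spatial} estimate
\[
|G[\mu](y,u)-G[\mu](y',u')|\leqs C\bigl(|y-y'|+|u-u'|\bigr),
\]
which holds because $\nabla V,K,\eta$ are Lipschitz (Assumption \ref{ass:V eta K}), $\Wh\in C^2_b$ renders $\tilde\rho_t$ and $\nabla\tilde\rho_t$ Lipschitz with constants controlled by the $C^2_b$-norm of $\Wh$, and $F_\theta$ has derivative bounded by $M_3$ on the relevant range. Second, a \emph{measure} estimate
\[
|G[\mu](y,u)-G[\nu](y,u)|\leqs C\,\mathcal{W}(\mu,\nu),
\]
which uses the Kantorovich--Rubinstein bound $|\int f\,d\mu-\int f\,d\nu|\leqs\mathrm{Lip}(f)\,\mathcal{W}(\mu,\nu)$ for each Lipschitz $f$ entering through a convolution ($\Wh$, $\nabla\Wh$, $K$), combined with the bounds $M_2,M_3$ to control the nonlinear dependence of $F_\theta(\tilde\rho)\nabla\tilde\rho$ on $\mu$. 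The extra term present when $\theta=1$, namely $(\nabla\Wh*[(F_\theta\circ\tilde\rho)\mu])$, carries a \emph{double} dependence on $\mu$ — through the integrating measure and through $\tilde\rho$ inside the integrand — and is the most delicate piece; one splits the difference into a part where only the integrand varies (controlled via $M_3\|\nabla\Wh\|_\infty$ and the Lipschitz dependence of $\tilde\rho$ on $\mu$) and a part where only the integrating measure varies (controlled by the Lipschitz constant of $y\mapsto\nabla\Wh(\cdot-y)F_\theta(\tilde\rho(y))$).

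With these estimates, Part \ref{thm:part1} follows from a contraction argument. Freezing a flow $\nu\in C([0,T];\P_{r(T)}(\R^d))$ turns $G[\nu_t]$ into a genuine, spatially Lipschitz vector field, so $\ddot\Phi=G[\nu_t](\Phi,\dot\Phi)$ with the prescribed data has a unique $C^2$ solution $\Phi^\nu$ by Picard--Lindel\"of; setting $\mathcal{T}(\nu)_t:=\Phi^\nu_t\#\mu_0$ defines a self-map of $C([0,T];\P_{r(T)}(\R^d))$. Feeding the spatial estimate into a Gr\"onwall inequality for $\|\Phi^\nu_t-\Phi^{\nu'}_t\|_\infty+\|\dot\Phi^\nu_t-\dot\Phi^{\nu'}_t\|_\infty$, with inhomogeneity supplied by the measure estimate in terms of $\mathcal{W}(\nu_t,\nu'_t)$, and using $\mathcal{W}(\Phi^\nu_t\#\mu_0,\Phi^{\nu'}_t\#\mu_0)\leqs\|\Phi^\nu_t-\Phi^{\nu'}_t\|_\infty$ (the diagonal coupling), one checks that $\mathcal{T}$ is a contraction on the exponentially time-weighted metric $\sup_t e^{-\lambda t}\mathcal{W}(\nu_t,\nu'_t)$ for $\lambda$ large; its unique fixed point is the solution $(\mu,\Phi)$.

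For Part \ref{thm:part2} I would compare the two solutions through a transport coupling. Choose an optimal $\pi_0\in\Pi(\mu^n_0,\mu_0)$ realizing $\mathcal{W}(\mu^n_0,\mu_0)$ and set
\[
D(t):=\int_{\R^d\times\R^d}\bigl(|\Phi^n_t(x)-\Phi_t(y)|+|\dot\Phi^n_t(x)-\dot\Phi_t(y)|\bigr)\,\pi_0(dx,dy).
\]
Since $(\Phi^n_t\times\Phi_t)\#\pi_0$ couples $\mu^n_t$ and $\mu_t$, we have $\mathcal{W}(\mu^n_t,\mu_t)\leqs D(t)$. Estimating the upper (Dini) derivative of $D$ and splitting each acceleration difference $G[\mu^n_t](\Phi^n_t(x),\dot\Phi^n_t(x))-G[\mu_t](\Phi_t(y),\dot\Phi_t(y))$ into a spatial part bounded by $C\,D(t)$ and a measure part bounded by $C\,\mathcal{W}(\mu^n_t,\mu_t)\leqs C\,D(t)$, one obtains $\dot D(t)\leqs C\,D(t)$. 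As $\Phi^n_0=\Phi_0=\mathrm{id}$ and $\dot\Phi_0=v_0$, we have $D(0)\leqs(1+\mathrm{Lip}(v_0))\,\mathcal{W}(\mu^n_0,\mu_0)$, so Gr\"onwall yields $\sup_{t\in[0,T]}\mathcal{W}(\mu^n_t,\mu_t)\leqs e^{CT}(1+\mathrm{Lip}(v_0))\,\mathcal{W}(\mu^n_0,\mu_0)\to0$ by \eqref{eqn: initial condition}. The main obstacle throughout is the measure-Lipschitz control of the nonlinear pressure term, and in particular of the $\theta=1$ convolution term; once that estimate and the \emph{a priori} support bound are secured, the rest is a standard Gr\"onwall/fixed-point machine.
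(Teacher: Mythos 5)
Your proposal is correct, and its skeleton coincides with the paper's proof: an a priori confinement of the flow to $B(r(T))$ (so that $\tilde\rho_t$ takes values in $U_{T,\Wh}$ and the possible singularity of $F_0$ at the origin never enters), Lipschitz-type estimates for the force field in the spatial variables and in the measure argument, a fixed-point argument with frozen measure flow for Part \ref{thm:part1}, and, for Part \ref{thm:part2}, the push-forward coupling $(\Phi^n_t,\Phi_t)\#\pi_0$ of an initial coupling combined with Gr\"onwall and the global estimate $|v_0(x)-v_0(y)|\leqslant\|\nabla v_0\|_\infty|x-y|$, yielding the same stability constant of the form $(1+T(\|\nabla v_0\|_\infty+\|\eta\|_\infty))\,e^{CT}\,\mathcal{W}(\mu^n_0,\mu_0)$. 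Three execution details differ, all benign but worth recording. First, you produce the contraction on all of $[0,T]$ at once via the exponentially weighted metric $\sup_t e^{-\lambda t}\mathcal{W}(\nu_t,\nu'_t)$, exploiting that the double time integral yields a factor $\lambda^{-2}$; the paper instead subdivides $[0,T]$ into intervals of length $T^*$ with $\kappa_{T^*}<1$ (cf.~\eqref{eqn:kappa}) and glues solutions through a hierarchy $\mu^{(j+1)}_0:=\mu^{(j)}_{T^*}$, $v^{(j+1)}_0:=\dot\Phi^{(j)}_{T^*}$. Your route is arguably cleaner, since it avoids the gluing and the awkward point that $v^{(j)}_0$ is defined only on $\supp\mu^{(j)}_0$, which the paper must address in a footnote. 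Second, you absorb the friction term into a joint Lipschitz estimate of $G$ in $(y,u)$ and therefore carry velocity differences inside your functional $D(t)$; this silently requires an a priori bound on $|\dot\Phi_t|$ (available from the same Duhamel estimate as in Lemma \ref{lem:bound motion map}, but you should state it), whereas the paper avoids any velocity bound by integrating $\eta(\Phi_s)\dot\Phi_s$ exactly, writing it as $\frac{d}{ds}\int_{\Phi^{\nu^2}_s(y)}^{\Phi^{\nu^1}_s(x)}\eta(z)\,dz$ so that only $\|\eta\|_\infty$ enters and its Gr\"onwall functional contains positions only; cf.~\eqref{eqn:estTerm3}. Third, Banach's theorem needs completeness of $\{\nu\in C([0,T];\P_{r(T)}(\R^d)):\nu|_{t=0}=\mu_0\}$ under the sup-Wasserstein metric; this is true but not free, and the paper devotes Appendix \ref{app:completeness} to it, so your argument should cite or reproduce that verification. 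With these small additions your proof closes and is, in substance, the proof given in Section \ref{sec: proof}.
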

%In the above and throughout this paper, $\mathcal{W}$ denotes the \textit{Wasserstein distance}; see Definition \ref{def:Wass}.
As a corollary, we obtain the following convergence of the SPH scheme with $n$ particles.\\
\begin{corollary}\label{cor: particle system}
Fix $\theta\in\{0,1\}$. For each $n\in\Np$, let $\bar{\mu}^n_0:=\sum_{j=1}^n m_j\delta_{x_{j,0}}\in\P_{r_0}(\R^d)$ for some $\{m_j\}_{j=1}^n\subset\Rp$ such that $\sum_{j=1}^nm_j=1$, and for some $\{x_{j,0}\}_{j=1}^n\subset B(r_0)$. Assume that $\mathcal{W}(\bar{\mu}^n_0,\mu_0)\stackrel{n\to\infty}{\longrightarrow}0$ for some $\mu_0\in\P_{r_0}(\R^d)$. Then the discrete measure $\bar{\mu}^n_t=\sum_{k=1}^n m_k\delta_{\Phi_t(x_{k,0})}$ associated to the particle scheme defined for each $k\in\{1,\ldots,n\}$ by:
\begin{multline}\label{eqn: cor particle scheme}
\ddot{\Phi}_t(x_{k,0}) = -\sum_{i=1}^n m_i\,\nabla \Wh(\Phi_t(x_{k,0})-\Phi_t(x_{i,0}))\,\left[F_\theta\left(\tilde{\rho}_t(\Phi_t(x_{k,0}))\right)+\theta F_\theta\left(\tilde{\rho}_t(\Phi_t(x_{i,0}))\right)\right] \\
- \nabla V\left(\Phi_t(x_{k,0})\right) - \eta(\Phi_t(x_{k,0}))\,\dot{\Phi}_t(x_{k,0}) + (K*\bar{\mu}^n_t)(\Phi_t(x_{k,0})),
\end{multline}
converges to the solution $\mu_t$ of \eqref{eqn:system mu0} in the following sense:
\begin{equation}
\sup_{t\in [0,T]}\mathcal{W}(\bar{\mu}^n_t,\mu_t)\stackrel{n\to\infty}{\longrightarrow}0.\label{eqn: conv result Wass to zero}
\end{equation}
\end{corollary}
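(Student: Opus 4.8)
The plan is to recognize that the particle scheme \eqref{eqn: cor particle scheme} is nothing but the measure-valued system \eqref{eqn:system mu0n} specialized to the empirical initial datum $\mu^n_0=\bar{\mu}^n_0$, so that the corollary reduces to a direct application of Theorem \ref{thm: main thm}. First I would observe that, by hypothesis, $\bar{\mu}^n_0\in\P_{r_0}(\R^d)$ and $\mathcal{W}(\bar{\mu}^n_0,\mu_0)\to 0$, so the sequence $\{\bar{\mu}^n_0\}$ is an admissible choice for the sequence $\{\mu^n_0\}$ in the theorem; moreover Assumptions \ref{ass:V eta K}, \ref{ass:Wh} and (depending on $\theta$) \ref{ass: F W theta=0} or \ref{ass: F W theta=1} concern only $V,\eta,K,\Wh,F_\theta$ and are therefore unaffected by the particular initial measure. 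Consequently, part \ref{thm:part1} of the theorem, applied with $\mu_0$ replaced by $\bar{\mu}^n_0$, already yields for each $n$ a unique solution pair $(\bar{\mu}^n,\Phi^n)$ of \eqref{eqn:system mu0n}.

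The one computation to carry out is to check that this solution coincides with the particle scheme in the statement. Pushing forward $\bar{\mu}^n_0=\sum_{j}m_j\delta_{x_{j,0}}$ under $\Phi^n_t$ gives $\bar{\mu}^n_t=\sum_j m_j\delta_{\Phi^n_t(x_{j,0})}$, whence the regularized density and its gradient collapse to finite sums,
$$\tilde{\rho}^n_t(\Phi^n_t(x_{k,0}))=\sum_j m_j\,\Wh(\Phi^n_t(x_{k,0})-\Phi^n_t(x_{j,0})),\qquad \nabla\tilde{\rho}^n_t(\Phi^n_t(x_{k,0}))=\sum_i m_i\,\nabla\Wh(\Phi^n_t(x_{k,0})-\Phi^n_t(x_{i,0})).$$
Likewise the two integral terms in \eqref{eqn:system mu0n} become sums over the Diracs: the first is $-F_\theta(\tilde{\rho}^n_t(\Phi^n_t(x_{k,0})))\sum_i m_i\,\nabla\Wh(\Phi^n_t(x_{k,0})-\Phi^n_t(x_{i,0}))$, while the $\theta$-convolution term is $-\theta\sum_i m_i\,\nabla\Wh(\Phi^n_t(x_{k,0})-\Phi^n_t(x_{i,0}))\,F_\theta(\tilde{\rho}^n_t(\Phi^n_t(x_{i,0})))$; adding these and factoring out the common $\sum_i m_i\,\nabla\Wh(\cdot)$ reproduces exactly the bracket $[F_\theta(\tilde{\rho}_t(\Phi_t(x_{k,0})))+\theta F_\theta(\tilde{\rho}_t(\Phi_t(x_{i,0})))]$ appearing in \eqref{eqn: cor particle scheme}. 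Finally $(K*\bar{\mu}^n_t)(\Phi^n_t(x_{k,0}))=\sum_j m_j\,K(\Phi^n_t(x_{k,0})-\Phi^n_t(x_{j,0}))$, and the friction and potential terms are already pointwise, so evaluating \eqref{eqn:system mu0n} at the $n$ points $x=x_{k,0}$ is precisely \eqref{eqn: cor particle scheme}.

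Having identified the two schemes, the convergence \eqref{eqn: conv result Wass to zero} is immediate from part \ref{thm:part2} of the theorem: since $(\bar{\mu}^n,\Phi^n)$ solves \eqref{eqn:system mu0n} with $\mu^n_0=\bar{\mu}^n_0$ and $\mathcal{W}(\bar{\mu}^n_0,\mu_0)\to 0$, the theorem gives $\sup_{t\in[0,T]}\mathcal{W}(\bar{\mu}^n_t,\mu_t)\to 0$. I do not expect a genuine analytic obstacle here: the content of the corollary is the observation that the traditional SPH particle system is the empirical-measure instance of the general evolution, and the only point requiring care is the bookkeeping that merges the $F_\theta\,\nabla\tilde{\rho}$ term with the $\theta$-convolution term into the symmetric bracket of \eqref{eqn: cor particle scheme} -- the very symmetrization that, as noted in Section \ref{sec: compare eqns motion}, underlies exact momentum conservation.
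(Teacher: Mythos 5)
Your proposal is correct and is exactly the argument the paper intends (the paper leaves it implicit, stating the corollary directly after Theorem \ref{thm: main thm}): the empirical measure $\bar{\mu}^n_0$ is an admissible initial datum for \eqref{eqn:system mu0n}, the push-forward of a sum of Diracs turns the convolutions into the sums of \eqref{eqn: cor particle scheme}, and Parts \ref{thm:part1} and \ref{thm:part2} of the theorem then give existence of the particle solution and the convergence \eqref{eqn: conv result Wass to zero}. Your bookkeeping merging the $F_\theta\nabla\tilde{\rho}$ term with the $\theta$-convolution term into the symmetric bracket is also the correct identification.
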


\subsection{Proof of the main convergence theorem}\label{sec: proof}

%\section{Assumptions}
%Fix a final time $T>0$. Define $\Rp:=(0,\infty)$. Let $V\in C^2_b(\R^d;\R)$, $\eta>0$, $F\in C^1(\Rp;\Rp)$, $\Wh\in C^2_b(\R^d;\Rp)$ and $K\in C^1_b(\R^d;\R)$. Let $\P(\R^d)$ be the space of probability measures on $\R^d$. Assume that $\mu_0\in\P(\R^d)$ and $\{\mu_0^n\}_{n\in\N}\subset\P(\R^d)$, and that there is an $r_0>0$ such that
%\begin{align}
%\supp\mu_0\subset B(r_0)&,\,\text{ and}\\
%\supp\mu_0^n\subset B(r_0)& \,\text{ for all }n\in\N.
%\end{align}
%Let $v_0\in C^1_b(\R^d;\R^d)$. Consider the system
%\begin{equation}\label{eqn:system mu0}
%\left\{
%  \begin{array}{lll}
%    \ddot{\Phi}_t(x)=-\nabla V(\Phi_t(x))-\eta\dot{\Phi}_t(x)-F\left((\Wh*\mu_t)(\Phi_t(x))\right)\nabla(\Wh*\mu_t)(\Phi_t(x))+(K*\mu_t)(\Phi_t(x)), & \hbox{ for all $x\in\supp\mu_0, t\in[0,T]$;} \\
%    \mu_t=\Phi_t\#\mu_0, & \hbox{ for all $t\in[0,T]$;} \\
%    \Phi_0(x)=x,\, \dot{\Phi}_0(x)=v_0(x), & \hbox{ for all $x\in\supp\mu_0$.}
%  \end{array}
%\right.
%\end{equation}
%By `$*$' we denote the convolution.
%\begin{remark}
%We might have taken $\bar{K}*(\Wh*\mu_t)$ for some $\bar{K}$, instead of $K*\mu_t$, to comply with the pressure term that only depends on the \textit{regularized} density. We prefer the shorter form $K*\mu_t$. This choice can be made without loss of generality if we take $K=\bar{K}*\Wh$.
%\end{remark}

Before proving the main result, Theorem~\ref{thm: main thm}, we need two auxiliary lemmas concerning the properties of the motion mapping $\Phi_t$. The first lemma is an upper estimate for $\Phi_t$.\\
\begin{lemma}\label{lem:bound motion map}
Let Assumptions \ref{ass:V eta K}, \ref{ass:Wh} and \ref{ass: F W theta=0} or \ref{ass: F W theta=1} (depending on the value of $\theta$) be satisfied. Then for any \textbf{given} $\mu\in C([0,T];\P(\R^d))$ the mapping $\Phi_t$ in \eqref{eqn:eqn motion with theta}, completed with $\tilde{\rho}_t:=\Wh*\mu_t$, $\Phi_0(x)=x$ and $\dot{\Phi}_0(x)=v_0(x)$, satisfies
\begin{equation}\label{eqn:boundMap}
|\Phi_t(x)| \leqslant |x| + t\|v_0\|_\infty + \frac12\,t^2\,(M_1+\|\nabla V\|_\infty+\|K\|_\infty),
\end{equation}
for all $x\in\supp\mu_{0}$ and all $t\in[0,T]$.
\end{lemma}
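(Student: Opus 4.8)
The plan is to exploit the fact that, for a \emph{given} path of measures $\mu$, equation \eqref{eqn:eqn motion with theta} decouples into an independent second-order ODE for each trajectory $t\mapsto\Phi_t(x)$, $x\in\supp\mu_{t=0}$, with prescribed data $\Phi_0(x)=x$, $\dot\Phi_0(x)=v_0(x)$. I would write the right-hand side as $\ddot\Phi_t(x)=A_t(x)-\eta(\Phi_t(x))\dot\Phi_t(x)$, where $A_t(x)$ collects the pressure-type term(s), the potential gradient $-\nabla V$ and the interaction term $K*\mu_t$, i.e.\ everything except the friction. The estimate \eqref{eqn:boundMap} is then just the outcome of integrating this ODE twice in time, once I show $|A_t(x)|\leqs C:=M_1+\|\nabla V\|_\infty+\|K\|_\infty$ and handle the friction correctly.

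First I would bound $|A_t(x)|\leqs C$ uniformly. The potential and interaction parts are immediate: $|\nabla V(\Phi_t(x))|\leqs\|\nabla V\|_\infty$ and $|(K*\mu_t)(\Phi_t(x))|\leqs\|K\|_\infty$, since $\mu_t\in\P(\R^d)$. For the pressure part I distinguish the two cases. When $\theta=1$, I use that $0\leqs\tilde\rho_t=\Wh*\mu_t\leqs\|\Wh\|_\infty$ and $|\nabla\tilde\rho_t|\leqs\|\nabla\Wh\|_\infty$ pointwise on all of $\R^d$ (both by integrating against the probability measure $\mu_t$), so Assumption~\ref{ass: F W theta=1} bounds the first term by $M_2\|\nabla\Wh\|_\infty$ and likewise the convolution term $(\nabla\Wh*[(F_1\circ\tilde\rho_t)\mu_t])(\Phi_t(x))$ by $M_2\|\nabla\Wh\|_\infty$; their sum is $\leqs 2M_2\|\nabla\Wh\|_\infty=M_1$. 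When $\theta=0$ the convolution term is absent, and the single remaining term is controlled by $M_1$ directly through \eqref{eqn:ass uniform bound Fnabla rho} applied to $\mu=\mu_t$.

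The key step is the friction term, which must \emph{not} be bounded crudely by $\|\eta\|_\infty|\dot\Phi_t(x)|$: that would force a Gr\"onwall argument and an exponential-in-$t$ factor, whereas \eqref{eqn:boundMap} contains no dependence on $\eta$ whatsoever. Instead I exploit that $\eta\geqs0$. Treating $-\eta(\Phi_t(x))\dot\Phi_t(x)$ as a linear damping and using variation of constants gives $\dot\Phi_t(x)=e^{-\int_0^t\eta(\Phi_s(x))\,ds}v_0(x)+\int_0^t e^{-\int_s^t\eta(\Phi_\tau(x))\,d\tau}A_s(x)\,ds$; since every exponential weight is $\leqs1$ this yields $|\dot\Phi_t(x)|\leqs\|v_0\|_\infty+Ct$. (Equivalently one estimates $\tfrac{d}{dt}\tfrac12|\dot\Phi_t(x)|^2=\dot\Phi_t(x)\cdot A_t(x)-\eta(\Phi_t(x))|\dot\Phi_t(x)|^2\leqs C|\dot\Phi_t(x)|$ and discards the nonpositive friction contribution.) Integrating once more, $|\Phi_t(x)|\leqs|x|+\int_0^t|\dot\Phi_s(x)|\,ds\leqs|x|+t\|v_0\|_\infty+\tfrac12 Ct^2$, which is exactly \eqref{eqn:boundMap}.

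The main obstacle I anticipate lies in the $\theta=0$ pressure bound: \eqref{eqn:ass uniform bound Fnabla rho} controls $F_0(\tilde\rho_t)\nabla\tilde\rho_t$ only on $\supp\mu_t$, so invoking it at $\Phi_t(x)$ presupposes $\Phi_t(x)\in\supp\mu_t$. This is automatic in the self-consistent setting $\mu_t=\Phi_t\#\mu_{t=0}$ relevant to the convergence proof, where $\Phi_t(x)$ is the image of $x\in\supp\mu_{t=0}$; there $F_0$ is permitted to be singular only where $\Wh$ stays strictly positive (on $B(2r(T))$), which is precisely what keeps $\tilde\rho_t(\Phi_t(x))$ away from the singularity along the trajectory and makes the definition of $U_{T,\Wh}$ with lower bound $\inf_{B(2r(T))}\Wh$ effective. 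For $\theta=1$, by contrast, the bound is global and no support restriction is needed.
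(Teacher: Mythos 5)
Your proposal is correct and follows essentially the same route as the paper's proof: the paper likewise rewrites the equation using the integrating factor $\exp\left(\int_0^t\eta(\Phi_r(x))\,dr\right)$, exploits $\eta\geqs0$ so that the exponential weights cancel (hence no Gronwall/exponential factor in $t$), and then bounds the remaining terms by $M_1+\|\nabla V\|_\infty+\|K\|_\infty$ exactly as you do --- the case $\theta=0$ via \eqref{eqn:ass uniform bound Fnabla rho} and the case $\theta=1$ via $\|\Wh*\mu\|_\infty\leqs\|\Wh\|_\infty$ together with Assumption \ref{ass: F W theta=1}. The support caveat you raise for $\theta=0$ is a sharp observation rather than a defect of your argument: the paper's own proof applies \eqref{eqn:ass uniform bound Fnabla rho} at $\Phi_r(x)$ without verifying $\Phi_r(x)\in\supp\mu_r$ for an arbitrary given $\mu$, so on that point you are, if anything, more careful than the original.
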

\begin{proof}
For $\mu$ fixed, and for each $x\in\supp\mu_{0}$, the ODE \eqref{eqn:eqn motion with theta} is well-posed on $[0,T]$, given the assumptions on $V$, $\eta$, $F_\theta$, $\Wh$ and $K$, and the fact that $\mu$ is continuous in time. The well-posedness follows from the Picard-Lindel\"{o}f Theorem; further details are omitted.\\
%Since
%\begin{align*}
%\frac{d}{dt}\left(\exp\left(\Int{0}{t}{\eta(\Phi_r(x))}{dr}\right)\dot{\Phi}_t(x)\right)&=\exp\left(\Int{0}{t}{\eta(\Phi_r(x))}{dr}\right)\left(\ddot{\Phi}_t(x)+\eta(\Phi_t(x))\dot{\Phi}_t(x)\right),
%\end{align*}
%we obtain
%\[
%\dot{\Phi}_t(x)=v_0(x)+\exp\left(-\Int{0}{t}{\eta(\Phi_r(x))}{dr}\right)\int_0^t \exp\left(\Int{0}{s}{\eta(\Phi_r(x))}{dr}\right)(\ddot{\Phi}_s(x)+\eta(\Phi_s(x))\,\dot{\Phi}_s(x))\,ds .
%\]
%Therefore,
Using an integrating factor $H(t):=\exp\left(\Int{0}{t}{\eta(\Phi_\tau(x))}{d\tau}\right)$, we deduce from \eqref{eqn:eqn motion with theta} that
%\begin{multline}
%\nonumber|\Phi_t(x)| \leqslant |\Phi_0(x)|+|v_0(x)\,t| +\Bigg|\int_0^t\exp\left(-\Int{0}{s}{\eta(\Phi_\sigma(x))}{d\sigma}\right)\int_0^s\exp\left(\Int{0}{r}{\eta(\Phi_\sigma(x))}{d\sigma}\right)\cdot\\
%\cdot(\ddot{\Phi}_r(x)+\eta(\Phi_r(x))\,\dot{\Phi}_r(x))\,dr\,ds\Bigg|
%\end{multline}
\begin{align*}
|\Phi_t(x)| \leqslant& |\Phi_0(x)|+|v_0(x)\,t| +\left|\int_0^t \dfrac{1}{H(s)} \int_0^sH(r)\Big(\ddot{\Phi}_r(x)+\eta(\Phi_r(x))\,\dot{\Phi}_r(x)\Big)\,dr\,ds\right|\\
\leqslant& |x|+t\,\|v_0\|_\infty +\int_0^t \int_0^s\dfrac{H(r)}{H(s)}\left|\ddot{\Phi}_r(x)+\eta(\Phi_r(x))\,\dot{\Phi}_r(x)\right|\,dr\,ds.
\end{align*}
%We estimate %$\exp(\Int{0}{r}{\eta(\Phi_\sigma(x))}{d\sigma})\leqslant \exp(\Int{0}{s}{\eta(\Phi_\sigma(x))}{d\sigma})$
%$H(r)\leqs H(s)$ in the inner integral, take this factor outside the integral where it cancels against $1/H(s)$. %$\exp(-\Int{0}{s}{\eta(\Phi_\sigma(x))}{d\sigma})$.
Since $\eta$ is a positive function and hence $0\leqs H(r)/H(s)\leqs1$ in the inner integral, it follows that
%\begin{align}
%\nonumber |\Phi_t(x)|\leqslant& |x|+t\,\|v_0\|_\infty + \left|\int_0^t\int_0^s(\ddot{\Phi}_r(x)+\eta(\Phi_r(x))\,\dot{\Phi}_r(x))\,dr\,ds\right|\\
%\nonumber =& |x|+t\,\|v_0\|_\infty + \Bigg|\int_0^t\int_0^s\bigg[-\nabla V(\Phi_r(x))+(\K*\mu_r)(\Phi_r(x))+\ldots\\
%\nonumber &\hspace*{1cm}\ldots-F_\theta\left((\Wh*\mu_r)(\Phi_r(x))\right)\nabla(\Wh*\mu_r)(\Phi_r(x))+\ldots\\
%&\hspace*{1cm}\ldots-\theta\,(\nabla \Wh*(F_\theta((\Wh*\mu_r)(\cdot))\mu_r))(\Phi_r(x))) \bigg]\,dr\,ds\Bigg| \label{eqn:est bound Phi eqn motion filled in}
%\end{align}
\begin{multline}
|\Phi_t(x)|\leqslant |x|+t\,\|v_0\|_\infty + \int_0^t\int_0^s\bigg|-\nabla V(\Phi_r(x))+(\K*\mu_r)(\Phi_r(x))\\
\hspace{2cm}-F_\theta\left((\Wh*\mu_r)(\Phi_r(x))\right)\nabla(\Wh*\mu_r)(\Phi_r(x))\\
-\theta\,(\nabla \Wh*[(F_\theta\circ(\Wh*\mu_r))\mu_r])(\Phi_r(x))) \bigg|\,dr\,ds. \label{eqn:est bound Phi eqn motion filled in}
\end{multline}
In the case $\theta=0$, the following estimate holds due to Assumption \ref{ass: F W theta=0}:
\begin{multline}\label{eqn:est F term theta=0}
\left|F_\theta\left((\Wh*\mu_r)(\Phi_r(x))\right)\nabla(\Wh*\mu_r)(\Phi_r(x))+\theta\,(\nabla \Wh*[(F_\theta\circ(\Wh*\mu_r))\mu_r])(\Phi_r(x)))\right|\\
= \left|F_0\left((\Wh*\mu_r)(\Phi_r(x))\right)\nabla(\Wh*\mu_r)(\Phi_r(x))\right| \leqslant M_1.
\end{multline}
Note that for any $\mu\in\P(\R^d)$ it holds that $\|\Wh*\mu\|_\infty \leqslant \|\Wh\|_\infty$. Hence, in the case $\theta=1$:
\begin{multline}\label{eqn:est F term theta=1}
\left|F_\theta\left((\Wh*\mu_r)(\Phi_r(x))\right)\nabla(\Wh*\mu_r)(\Phi_r(x))+\theta\,(\nabla \Wh*[(F_\theta\circ(\Wh*\mu_r))\mu_r])(\Phi_r(x)))\right|\\
\leqslant M_2\,\|\nabla\Wh\|_\infty + \|\nabla\Wh\|_\infty\,M_2 = M_1,
\end{multline}
where the bounds of Assumption \ref{ass: F W theta=1} are used.\\
\\
A combination of \eqref{eqn:est bound Phi eqn motion filled in}, \eqref{eqn:est F term theta=0} and \eqref{eqn:est F term theta=1} yields that for each $\theta\in\{0,1\}$
\begin{equation}
\nonumber |\Phi_t(x)|\leqslant |x|+t\,\|v_0\|_\infty + \int_0^t\int_0^s\left(\|\nabla V\|_\infty+\|K\|_\infty+M_1\right)\,dr\,ds
\end{equation}
holds for all $x\in\supp\mu_{0}$ and $t\in[0,T]$, from which the statement of the lemma follows.
\end{proof}
%We note that if $\nabla V$ and $K$ are bounded and under Assumption \ref{ass:Fnabla bdd}, any mapping $\Phi_t$ in \eqref{eqn:system mu0} satisfies
%\begin{equation}\label{eqn:boundMap}
%|\Phi_t(x)| \leqslant |x| + t\|v_0\|_\infty + \frac12\,t^2\,(\|\nabla V\|_\infty+M_1+\|K\|_\infty),
%\end{equation}
%for all $x\in\supp\mu_0$ and all $t\in[0,T]$. There is no $\eta$ present in the estimate, as we can use $\exp(\eta\,t)$ as an integrating factor and estimate the final result from above in an $\eta$-independent manner.
\begin{corollary}\label{cor: bdd support solution}
Let $\mu_0\in\P_{r_0}(\R^d)$, and let Assumptions \ref{ass:V eta K}, \ref{ass:Wh} and \ref{ass: F W theta=0} or \ref{ass: F W theta=1} (depending on the value of $\theta$) be satisfied. Then any solution of \eqref{eqn:system mu0} must satisfy
\begin{equation}
\supp\mu_t \subset B(r(t)),
\end{equation}
for each $t\in[0,T]$, where
\begin{equation}\label{eqn:def r}
r(t):= r_0 + t\|v_0\|_\infty + \frac12\,t^2\,(M_1+\|\nabla V\|_\infty+\|K\|_\infty).
\end{equation}
\end{corollary}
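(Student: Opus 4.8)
The plan is to combine the pointwise bound on the motion mapping from Lemma~\ref{lem:bound motion map} with the fact that the support of the push-forward measure is controlled by the image of the initial support. The key observation is that Corollary~\ref{cor: bdd support solution} concerns a genuine solution of the coupled system \eqref{eqn:system mu0}, whereas Lemma~\ref{lem:bound motion map} was proved for an arbitrary \emph{given} $\mu\in C([0,T];\P(\R^d))$; so the work here is simply to feed the self-consistent solution measure into the lemma and translate the bound on $\Phi_t$ into a bound on $\supp\mu_t$.

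Concretely, I would argue as follows. Suppose $(\mu,\Phi)$ solves \eqref{eqn:system mu0} with $\mu_0\in\P_{r_0}(\R^d)$. In particular $\mu\in C([0,T];\P(\R^d))$, $\tilde{\rho}_t=\Wh*\mu_t$, $\mu_t=\Phi_t\#\mu_0$, and $\Phi_t$ satisfies the ODE \eqref{eqn:eqn motion with theta} pointwise for every $x\in\supp\mu_0$, with the prescribed initial data $\Phi_0(x)=x$, $\dot{\Phi}_0(x)=v_0(x)$. Since this $\mu$ is in particular an admissible \emph{given} measure, Lemma~\ref{lem:bound motion map} applies verbatim and yields
\begin{equation}\nonumber
|\Phi_t(x)| \leqslant |x| + t\|v_0\|_\infty + \tfrac12\,t^2\,(M_1+\|\nabla V\|_\infty+\|K\|_\infty),
\end{equation}
for all $x\in\supp\mu_0$ and all $t\in[0,T]$. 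For such $x$ we have $|x|\leqslant r_0$ by the support assumption \eqref{eqn: mu0 in ball}, so the right-hand side is bounded above by $r(t)$ as defined in \eqref{eqn:def r}; hence $\Phi_t(\supp\mu_0)\subset B(r(t))$.

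It then remains to pass from the image of the support under $\Phi_t$ to the support of $\mu_t$ itself. By definition of the push-forward (Definition~\ref{def:push-forward}) and continuity of $\Phi_t$, the measure $\mu_t=\Phi_t\#\mu_0$ is concentrated on the image $\Phi_t(\supp\mu_0)$: for any Borel set $B$ disjoint from $\Phi_t(\supp\mu_0)$ one has $\Phi_t^{-1}(B)\cap\supp\mu_0=\varnothing$, whence $\mu_t(B)=\mu_0(\Phi_t^{-1}(B))=0$. Combining this with the inclusion $\Phi_t(\supp\mu_0)\subset B(r(t))$ gives $\supp\mu_t\subset B(r(t))$, which is the claim.

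I do not expect any genuine obstacle: the essential analytic content is already isolated in Lemma~\ref{lem:bound motion map}, and the remaining steps are the bookkeeping of replacing $|x|$ by $r_0$ and the measure-theoretic fact that a push-forward lives on the image of the support. The only point requiring mild care is that $\supp\mu_0$ is compact (being a closed subset of $B(r_0)$) so that $\Phi_t(\supp\mu_0)$ is closed, justifying the support inclusion without needing to take closures; alternatively one can phrase the conclusion directly in terms of the open ball $B(r(t))$ and avoid the compactness remark altogether.
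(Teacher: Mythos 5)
Your proof is correct and follows exactly the route the paper intends: the corollary is stated there without proof as an immediate consequence of Lemma \ref{lem:bound motion map}, and your argument---feeding the self-consistent solution measure into the lemma, bounding $|x|$ by $r_0$ on $\supp\mu_0$, and using that the push-forward $\Phi_t\#\mu_0$ is concentrated on $\Phi_t(\supp\mu_0)$---is precisely that reasoning made explicit. One small correction to your closing remark: since the paper's $B(r(t))$ is the \emph{closed} ball, the compactness observation can indeed be avoided via $\supp\mu_t\subset\overline{\Phi_t(\supp\mu_0)}\subset B(r(t))$, but phrasing the conclusion in terms of an open ball would not serve that purpose.
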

The next lemma provides a Lipschitz-like estimate on $\Phi_t$.
\begin{lemma}\label{lem:Lipsch like}
Let $\nu^1,\nu^2\in C([0,T];\P_{r(T)}(\R^d))$ be given. Consider the motion mappings corresponding to $\nu^i$ ($i=1,2$):
\begin{align}\label{eqn:motion map nu}
\ddot{\Phi}^{\nu^i}_t(\xi) =& - F_\theta\left((\Wh*\nu^i_t)(\Phi^{\nu^i}_t(\xi))\right)\nabla(\Wh*\nu^i_t)(\Phi^{\nu^i}_t(\xi)) \\
\nonumber&- \theta\,(\nabla \Wh*[(F_\theta\circ(\Wh*\nu^i_t))\nu^i_t])(\Phi^{\nu^i}_t(\xi))\\
\nonumber&-\nabla V(\Phi^{\nu^i}_t(\xi))-\eta(\Phi^{\nu^i}_t(\xi))\dot{\Phi}^{\nu^i}_t(\xi)+(K*\nu^i_t)(\Phi^{\nu^i}_t(\xi))
\end{align}
for all $\xi\in\supp\nu^i_0$ and all $t\in[0,T]$, with initial conditions $\Phi^{\nu^i}_0(\xi)=\xi$, $\dot{\Phi}^{\nu^i}_0(\xi)=v_0(\xi)$.
Then, for all $t\in[0,T]$, $x\in\supp\nu^1_0$ and $y\in\supp\nu^2_0$, it holds that
\begin{multline}
|\Phi^{\nu^1}_t(x)-\Phi^{\nu^2}_t(y)|\leqslant (1+t\,\|\eta\|_\infty)\,|x-y|+t\,|v_0(x)-v_0(y)|+\\
+ \Int{0}{t}{\left[M_4\,(t-s)+\|\eta\|_\infty\right]\,|\Phi^{\nu^1}_s(x)-\Phi^{\nu^2}_s(y)|}{ds}+ M_5\,\Int{0}{t}{(t-s)\mathcal{W}(\nu_s^1,\nu_s^2)}{ds},\label{eqn:finalestimateGeneral}
\end{multline}
where
\begin{align}
M_4&:=|\nabla V|_L+(1+\theta)M_2 \,\|D^2\Wh\|_\infty+ M_3\,\|\nabla\Wh\|^2_\infty+|K|_L,\,\,\,\text{and}\\
M_5&:=(1+\theta)M_2 \,\|D^2\Wh\|_\infty+ (1+\theta)M_3\,\|\nabla\Wh\|^2_\infty+|K|_L.
\end{align}
\end{lemma}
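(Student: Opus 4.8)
\textbf{Proof strategy for Lemma \ref{lem:Lipsch like}.}

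The plan is to estimate the difference $|\Phi^{\nu^1}_t(x)-\Phi^{\nu^2}_t(y)|$ by starting from the second-order ODE \eqref{eqn:motion map nu} for each of the two motion mappings, subtracting them, and converting the resulting second-order differential inequality into the integral inequality \eqref{eqn:finalestimateGeneral} through double time-integration. First I would subtract the two copies of \eqref{eqn:motion map nu} (one for $(\nu^1,x)$ and one for $(\nu^2,y)$) to obtain an expression for $\ddot{\Phi}^{\nu^1}_t(x)-\ddot{\Phi}^{\nu^2}_t(y)$, and then integrate twice in time, using the initial conditions $\Phi^{\nu^i}_0(\xi)=\xi$ and $\dot{\Phi}^{\nu^i}_0(\xi)=v_0(\xi)$. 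This immediately yields the leading terms $|x-y|$ and $t\,|v_0(x)-v_0(y)|$ on the right-hand side. The $\|\eta\|_\infty$-weighted terms arise from the friction contribution $-\eta(\Phi)\dot{\Phi}$: integrating this by parts (or estimating it directly) produces both a $t\,\|\eta\|_\infty\,|x-y|$ boundary-type contribution and a $\|\eta\|_\infty\int_0^t|\Phi^{\nu^1}_s(x)-\Phi^{\nu^2}_s(y)|\,ds$ term, which is why $\|\eta\|_\infty$ appears in both the prefactor $(1+t\,\|\eta\|_\infty)$ and inside the kernel $[M_4(t-s)+\|\eta\|_\infty]$.

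The core of the estimate is a term-by-term Lipschitz analysis of the right-hand side of \eqref{eqn:motion map nu}, splitting each difference into a part controlled by $|\Phi^{\nu^1}_s(x)-\Phi^{\nu^2}_s(y)|$ (feeding into $M_4$ via the Grönwall-type kernel) and a part controlled by the transport cost $\int|z-w|\,d\tilde{\pi}_s(z,w)$ between $\nu^1_s$ and $\nu^2_s$ (feeding into $M_5$). For the potential force $\nabla V$, Assumption \ref{ass:V eta K} gives Lipschitz constant $|\nabla V|_L$, contributing only to $M_4$. For the interaction $K*\nu^i$, one writes the difference as $(K*\nu^1_s)(\Phi^{\nu^1}_s(x))-(K*\nu^2_s)(\Phi^{\nu^2}_s(y))$ and splits it: changing the evaluation point costs $|K|_L\,|\Phi^{\nu^1}_s(x)-\Phi^{\nu^2}_s(y)|$, while changing the measure is bounded by $|K|_L\int|z-w|\,d\tilde{\pi}_s$ using the coupling representation of the Wasserstein distance (Definition \ref{def:Wass}) together with the Lipschitz bound on $K$; this is exactly why $|K|_L$ enters both $M_4$ and $M_5$. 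The main obstacle, and where most care is needed, is the pressure-type term $F_\theta(\Wh*\nu)\,\nabla(\Wh*\nu)$ and, when $\theta=1$, the additional convolution term: I would repeatedly exploit that $\Wh,\nabla\Wh,D^2\Wh$ are bounded (Assumption \ref{ass:Wh}) and that $F_\theta,F'_\theta$ are bounded on the relevant range by $M_2,M_3$ (Assumptions \ref{ass: F W theta=0}/\ref{ass: F W theta=1}), so that $\Wh*\nu$ is Lipschitz with constant $\|\nabla\Wh\|_\infty$ and $\nabla(\Wh*\nu)$ is Lipschitz with constant $\|D^2\Wh\|_\infty$, both in the evaluation point and, through the coupling, in the measure.

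The delicate bookkeeping lies in tracking how each factor in the product $F_\theta(\Wh*\nu)\nabla(\Wh*\nu)$ contributes: using the product-rule-type splitting $|f_1 g_1 - f_2 g_2|\leqslant |f_1||g_1-g_2| + |g_2||f_1-f_2|$, the factor $F_\theta$ (bounded by $M_2$) multiplies the Lipschitz variation of $\nabla(\Wh*\nu)$ (constant $\|D^2\Wh\|_\infty$), while the factor $\nabla(\Wh*\nu)$ (bounded by $\|\nabla\Wh\|_\infty$) multiplies the variation of $F_\theta(\Wh*\nu)$, which by the chain rule and the bound $M_3$ on $F'_\theta$ contributes $M_3\,\|\nabla\Wh\|_\infty$ times the Lipschitz variation of $\Wh*\nu$ (another $\|\nabla\Wh\|_\infty$), yielding the $M_3\,\|\nabla\Wh\|^2_\infty$ terms. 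The $(1+\theta)$ prefactors record that for $\theta=1$ the extra convolution term $\nabla\Wh*[(F_\theta\circ\tilde{\rho})\nu]$ is estimated by entirely analogous manipulations, essentially doubling the $M_2\|D^2\Wh\|_\infty$ and $M_3\|\nabla\Wh\|^2_\infty$ contributions (with the asymmetry between $M_4$ and $M_5$ reflecting that one copy of the $M_2\|D^2\Wh\|_\infty$ piece is already absorbed into the evaluation-point estimate). Once every term is sorted into the two buckets and the double time-integral $\int_0^t\int_0^s(\cdot)\,dr\,ds$ is rewritten as $\int_0^t(t-s)(\cdot)\,ds$ via Fubini, one reads off the kernel weights $M_4(t-s)$ and $M_5(t-s)$ and arrives at \eqref{eqn:finalestimateGeneral}; since $\tilde{\pi}_s$ was an arbitrary coupling throughout, the estimate holds for every $\tilde{\pi}_s\in\Pi(\nu^1_s,\nu^2_s)$, as claimed.
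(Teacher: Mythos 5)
Your strategy is essentially the paper's own proof: subtract the two ODEs, integrate twice in time and use Fubini to produce the $(t-s)$ kernel, treat the friction term via the antiderivative (fundamental-theorem-of-calculus) trick so that it yields exactly the $(1+t\,\|\eta\|_\infty)$ prefactor plus the $\|\eta\|_\infty$ inside the kernel (note that your parenthetical ``estimating it directly'' would instead bring in velocity differences and $|\eta|_L$, which do not appear in \eqref{eqn:finalestimateGeneral}), and split the $\nabla V$, $F_\theta$ and $K$ terms by product/triangle inequalities into an evaluation-point bucket feeding $M_4$ and a coupling bucket feeding $M_5$, with the coupling used exactly as in \eqref{eqn:estNu1-Nu2}. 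One bookkeeping correction: the $M_4$/$M_5$ asymmetry is not an ``absorbed copy'' of $M_2\,\|D^2\Wh\|_\infty$, but comes from the fact that in the extra term $\nabla\Wh*[(F_\theta\circ\tilde{\rho}_t)\nu_t]$ the argument of $F_\theta$ is the integration variable rather than the evaluation point, so moving the evaluation point from $\xi_1$ to $\xi_2$ costs only $M_2\,\|D^2\Wh\|_\infty\,|\xi_1-\xi_2|$ with no accompanying $M_3\,\|\nabla\Wh\|_\infty^2$ contribution; hence $M_4$ carries $(1+\theta)M_2\,\|D^2\Wh\|_\infty$ but only a single $M_3\,\|\nabla\Wh\|_\infty^2$, whereas both are doubled in $M_5$ (and, for $\theta=0$, one must also verify via the support hypothesis $\nu^i\in C([0,T];\P_{r(T)}(\R^d))$ that $(\Wh*\nu^i_s)(\xi_j)\in U_{T,\Wh}$ before invoking $M_2$ and $M_3$).
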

%\section{Result}
\begin{proof}
%We start by deriving a general estimate that will be used later for proving both Part \ref{thm:part1} and Part \ref{thm:part2} of the theorem.
Note that, by the Fubini's theorem, for any integrable function $f$, we have
\begin{equation}
\label{eqn: Fubini thm}
\int_0^t\int_0^r f(s)\,ds\,dr=\int_0^t\int_s^t f(s)\,dr\,ds=\int_0^t (t-s)f(s)\,ds.
\end{equation}
Integration of \eqref{eqn:motion map nu} in time together with \eqref{eqn: Fubini thm} yields that
\begin{align}
\label{eqn:estimate1} |\Phi^{\nu^1}_t(x)-\Phi^{\nu^2}_t(y)| %=&\left| (x-y)+ t(\dot{\Phi}_0^{\nu^1}(x)-\dot{\Phi}_0^{\nu^2}(y))+\int_0^t\int_0^r (\ddot{\Phi}_s^{\nu^1}(x)-\ddot{\Phi}_s^{\nu^2}(y))\,ds\,dr\right|\\
%\leqslant& |x-y| + t\,|v_0(x)-v_0(y)| + \left|\Int{0}{t}{\Int{0}{r}{\ddot{\Phi}^{\nu^1}_s(x)-\ddot{\Phi}^{\nu^2}_s(y)}{ds}}{dr}\right|\\
\leqslant& |x-y| + t\,|v_0(x)-v_0(y)| \\
\nonumber &+ \Int{0}{t}{(t-s)|\nabla V(\Phi^{\nu^1}_s(x))-\nabla V(\Phi^{\nu^2}_s(y))|}{ds}\\
\nonumber &+ \left|\Int{0}{t}{\Int{0}{r}{\eta(\Phi^{\nu^1}_s(x))\dot{\Phi}^{\nu^1}_s(x)-\eta(\Phi^{\nu^2}_s(y))\dot{\Phi}^{\nu^2}_s(y)}{ds}}{dr}\right|\\
\nonumber &+ \int_0^t(t-s)\left|F_\theta\left((\Wh*\nu^1_s)(\Phi^{\nu^1}_s(x))\right)\nabla(\Wh*\nu^1_s)(\Phi^{\nu^1}_s(x))\right.\\
\nonumber &\hspace{2cm}\left.- F_\theta\left((\Wh*\nu^2_s)(\Phi^{\nu^2}_s(y))\right)\nabla(\Wh*\nu^2_s)(\Phi^{\nu^2}_s(y))\right|\,ds\\
\nonumber &+ \theta \int_0^t(t-s)\left|(\nabla \Wh*[(F_\theta\circ(\Wh*\nu^1_s))\nu^1_s])(\Phi^{\nu^1}_s(x))\right.\\
\nonumber &\hspace{2cm}\left.- (\nabla \Wh*[(F_\theta\circ(\Wh*\nu^2_s))\nu^2_s])(\Phi^{\nu^2}_s(y))\right|\,ds\\
\nonumber &+ \Int{0}{t}{(t-s)\left| (K*\nu^1_s)(\Phi^{\nu^1}_s(x)) - (K*\nu^2_s)(\Phi^{\nu^2}_s(y)) \right|}{ds}.
\end{align}
Furthermore, we have
\begin{equation}
\Int{0}{t}{(t-s)|\nabla V(\Phi^{\nu^1}_s(x))-\nabla V(\Phi^{\nu^2}_s(y))|}{ds}\leqslant |\nabla V|_L\,\Int{0}{t}{(t-s)|\Phi^{\nu^1}_s(x)-\Phi^{\nu^2}_s(y)|}{ds},\label{eqn:estTerm2}
\end{equation}
and
\begin{multline}
\left|\Int{0}{t}{\Int{0}{r}{\eta(\Phi^{\nu^1}_s(x))\dot{\Phi}^{\nu^1}_s(x)-\eta(\Phi^{\nu^2}_s(y))\dot{\Phi}^{\nu^2}_s(y)}{ds}}{dr}\right| = \left|\Int{0}{t}{\Int{0}{r}{\dfrac{d}{ds}\left[\Int{\Phi^{\nu^2}_s(y)}{\Phi^{\nu^1}_s(x)}{\eta(z)}{dz}\right]}{ds}}{dr}\right|\\
= \left|\Int{0}{t}{\left[\Int{\Phi^{\nu^2}_r(y)}{\Phi^{\nu^1}_r(x)}{\eta(z)}{dz}-\Int{y}{x}{\eta(z)}{dz}\right]}{dr}\right| \leqslant\, \|\eta\|_\infty\Int{0}{t}{|\Phi^{\nu^1}_r(x)-\Phi^{\nu^2}_r(y)|}{dr}+\|\eta\|_\infty\,t\,|x-y|.\label{eqn:estTerm3}
\end{multline}
Regarding the term involving $F_\theta$ on the third and fourth line of \eqref{eqn:estimate1}, we proceed as follows
\begin{align}
\label{eqn:estMappingTriangle}&\left|F_\theta\left((\Wh*\nu^1_s)(\xi_1)\right)\nabla(\Wh*\nu^1_s)(\xi_1)- F_\theta\left((\Wh*\nu^2_s)(\xi_2)\right)\nabla(\Wh*\nu^2_s)(\xi_2)\right| \\
\nonumber \leqslant & \, \left|F_\theta\left((\Wh*\nu^1_s)(\xi_1)\right)\right| \,\left|\nabla(\Wh*\nu^1_s)(\xi_1)-\nabla(\Wh*\nu^1_s)(\xi_2) \right|\\
\nonumber&+ \left|F_\theta\left((\Wh*\nu^1_s)(\xi_1)\right)\right| \,\left|\nabla(\Wh*\nu^1_s)(\xi_2)-\nabla(\Wh*\nu^2_s)(\xi_2) \right|\\
\nonumber&+ \left|F_\theta\left((\Wh*\nu^1_s)(\xi_1)\right)-F_\theta\left((\Wh*\nu^1_s)(\xi_2)\right)\right| \,\left|\nabla(\Wh*\nu^2_s)(\xi_2) \right|\\
\nonumber &+ \left|F_\theta\left((\Wh*\nu^1_s)(\xi_2)\right)-F_\theta\left((\Wh*\nu^2_s)(\xi_2)\right)\right| \,\left|\nabla(\Wh*\nu^2_s)(\xi_2) \right|.
\end{align}
We only consider $\xi_1\in\supp\nu^1_s$ and $\xi_2\in\supp\nu^2_s$. This implies that $\xi_1,\xi_2\in B(r(T))$. For each $i,j\in\{1,2\}$, we have the following estimates:
\begin{equation}
\nonumber(\Wh*\nu^i_s)(\xi_j) \leqslant \|\Wh\|_\infty,
\end{equation}
(since $\nu^i_s$ is a probability measure), and
%\begin{align}
%\nonumber (\Wh*\nu^i_s)(\xi_j) &=\Int{\supp\nu^i_s}{}{\Wh(\xi_j-z)}{\nu^i_s(dz)}\geqslant \left(\inf_{z\in\supp\nu^i_s}\Wh(\xi_j-z)\right)\Int{\supp\nu^i_s}{}{}{\nu^i_s(dz)}\\
%& \geqslant \inf_{\xi_j,z\in B(r(T))}\Wh(\xi_j-z) = \inf_{B(2r(T))}\Wh.
%\end{align}
\begin{equation}
\nonumber (\Wh*\nu^i_s)(\xi_j) \geqslant \inf_{\xi_j,z\in B(r(T))}\Wh(\xi_j-z) = \inf_{B(2r(T))}\Wh.
\end{equation}
Thus we get $(\Wh*\nu^i_s)(\xi_j)\in U_{T,\Wh}$. We proceed with the estimation of \eqref{eqn:estMappingTriangle}:
\begin{multline}
\left|F_\theta\left((\Wh*\nu^1_s)(\xi_1)\right)\nabla(\Wh*\nu^1_s)(\xi_1)- F_\theta\left((\Wh*\nu^2_s)(\xi_2)\right)\nabla(\Wh*\nu^2_s)(\xi_2)\right| \\
\leqslant  \,M_2 \,\|D^2\Wh\|_\infty\,\left|\xi_1-\xi_2\right| + M_2 \,\left|\nabla(\Wh*\nu^1_s)(\xi_2)-\nabla(\Wh*\nu^2_s)(\xi_2) \right|\hspace{1cm}\\
+ M_3\,\|\nabla\Wh\|^2_\infty\,\left|\xi_1-\xi_2\right| + M_3\,\|\nabla\Wh\|_\infty\,\left|(\Wh*\nu^1_s)(\xi_2)-(\Wh*\nu^2_s)(\xi_2)\right|,\label{eqn:estMappTriangle followup}
\end{multline}
where we used that $\|\nabla(\Wh*\nu^2_s)\|_\infty\leqslant\|\nabla\Wh\|_\infty$, $\|D^2(\Wh*\nu^2_s)\|_\infty\leqslant\|D^2\Wh\|_\infty$ and the fact that $|\psi|_L=\|\nabla\psi\|_\infty$ for any differentiable function $\psi$. Note that:
\begin{align*}
\left|(\Wh*\nu^1_s)(\xi_2)-(\Wh*\nu^2_s)(\xi_2)\right|&=\left|\Int{}{}{\Wh(\xi_2-z)}{\nu^1_s(dz)}-\Int{}{}{\Wh(\xi_2-w)}{\nu^2_s(dw)}  \right|\nonumber
\\&=\left| \Int{}{}{(\Wh(\xi_2-z)- \Wh(\xi_2-w) )}{\tilde{\pi}_s(dz,dw)} \right|\nonumber
\\&
\leqslant\Int{}{}{\left|\Wh(\xi_2-z)- \Wh(\xi_2-w) \right|}{\tilde{\pi}_s(dz,dw)} \nonumber
\\&\leqslant \|\nabla\Wh\|_\infty\Int{}{}{\left|z-w\right|}{\tilde{\pi}_s(dz,dw)},
\end{align*}
where $\tilde{\pi}_s\in\Pi(\nu^1_s,\nu^2_s)$ is arbitrary and the second equality follows from \eqref{eq: marginals}. By minimizing over all couplings in $\Pi(\nu^1_s,\nu^2_s)$, we obtain
\begin{equation}
\label{eqn:estNu1-Nu2}
\left|(\Wh*\nu^1_s)(\xi_2)-(\Wh*\nu^2_s)(\xi_2)\right|\leqslant \|\nabla\Wh\|_\infty \mathcal{W}(\nu_s^1,\nu_s^2).
\end{equation}
We stress that the bound \eqref{eqn:estNu1-Nu2} is independent of the choice of $\xi_1, \xi_2$. Analogously, we have
\begin{align}
\left|\nabla(\Wh*\nu^1_s)(\xi_2)-\nabla(\Wh*\nu^2_s)(\xi_2)\right| \leqslant \|D^2\Wh\|_\infty \mathcal{W}(\nu_s^1,\nu_s^2).\label{eqn: est nabla coupling}
\end{align}
It follows that
\begin{multline}
\left|F_\theta\left((\Wh*\nu^1_s)(\xi_1)\right)\nabla(\Wh*\nu^1_s)(\xi_1)- F_\theta\left((\Wh*\nu^2_s)(\xi_2)\right)\nabla(\Wh*\nu^2_s)(\xi_2)\right| \\
\leqslant \left(\,M_2 \,\|D^2\Wh\|_\infty+ M_3\,\|\nabla\Wh\|^2_\infty\right)\,\left(\left|\xi_1-\xi_2\right|+\mathcal{W}(\nu_s^1,\nu_s^2)\right).\label{eqn:finalEstTerm4}
\end{multline}
If $\theta=1$, similar estimates as in the first term on the right-hand side of \eqref{eqn:estMappTriangle followup}, and as in \eqref{eqn:estNu1-Nu2} and \eqref{eqn: est nabla coupling} yield
\begin{multline}
\left|(\nabla \Wh*[(F_\theta\circ(\Wh*\nu^1_s))\nu^1_s])(\xi_1) - (\nabla \Wh*[(F_\theta\circ(\Wh*\nu^2_s))\nu^2_s])(\xi_2)\right| \\
\leqslant \, M_2 \,\|D^2\Wh\|_\infty\,\left|\xi_1-\xi_2\right|+ \left(M_2 \,\|D^2\Wh\|_\infty + M_3\,\|\nabla\Wh\|^2_\infty\right)\,\mathcal{W}(\nu_s^1,\nu_s^2).\label{eqn:estMappingTriangle theta}
\end{multline}

%\begin{align}
%\nonumber &\left|(\nabla \Wh*(F_\theta((\Wh*\nu^1_s)(\cdot))\nu^1_s))(\xi_1) - (\nabla \Wh*(F_\theta((\Wh*\nu^2_s)(\cdot))\nu^2_s))(\xi_2)\right| \\
%\nonumber \leqslant & \, \left|\int \left(\nabla\Wh(\xi_1-z)-\nabla\Wh(\xi_2-z)\right)F_\theta\left((\Wh*\nu^1_s)(z)\right)\nu^1_s(dz) \right|\\
%\nonumber&+ \left|\int \nabla\Wh(\xi_2-z)\left(F_\theta\left((\Wh*\nu^1_s)(z)\right)-F_\theta\left((\Wh*\nu^2_s)(z)\right)\right)\nu^1_s(dz) \right|\\
%\nonumber &+ \left|\int \nabla\Wh(\xi_2-z)F_\theta\left((\Wh*\nu^2_s)(z)\right)\left(\nu^1_s-\nu^2_s\right)(dz) \right|\\
%\nonumber \leqslant & \, M_2 \,\|D^2\Wh\|_\infty\,\left|\xi_1-\xi_2\right|\\
%\nonumber&+ M_3\,\|\nabla\Wh\|^2_\infty\,\Int{}{}{\left|z-w\right|}{\tilde{\pi}_s(dz,dw)}\\
%&+ M_2 \,\|D^2\Wh\|_\infty\,\Int{}{}{\left|z-w\right|}{\tilde{\pi}_s(dz,dw)},\label{eqn:estMappingTriangle theta}
%\end{align}
%where $\tilde{\pi}_s\in\Pi(\nu^1_s,\nu^2_s)$ is again chosen the same as before. The last estimates in \eqref{eqn:estMappingTriangle theta} follow the lines of argument as in the first term on the right-hand side of \eqref{eqn:estMappTriangle followup}, and as in \eqref{eqn:estNu1-Nu2} and \eqref{eqn: est nabla coupling}.\\

The last term in \eqref{eqn:estimate1} we treat as follows:
\begin{align}
\label{eqn:estTerm5} \left| (K*\nu^1_s)(\xi_1) - (K*\nu^2_s)(\xi_2) \right|\leqslant & \left| (K*\nu^1_s)(\xi_1) - (K*\nu^1_s)(\xi_2) \right| \\
\nonumber &+ \left| (K*\nu^1_s)(\xi_2) - (K*\nu^2_s)(\xi_2) \right|\\
\nonumber \leqslant & |K|_L\left(|\xi_1-\xi_2| + \mathcal{W}(\nu_s^1,\nu_s^2)\right).
\end{align}
%where an arbitrary joint representation taken from $\Pi(\nu^1_s,\nu^2_s)$ appears. We have chosen the same $\tilde{\pi}_s$ as before.
The estimate of the second term is obtained like in \eqref{eqn:estNu1-Nu2}.\\
\\
We combine \eqref{eqn:estimate1}, \eqref{eqn:estTerm2}, \eqref{eqn:estTerm3}, \eqref{eqn:finalEstTerm4}, \eqref{eqn:estMappingTriangle theta} and \eqref{eqn:estTerm5} to get
\begin{multline}
\nonumber |\Phi^{\nu^1}_t(x)-\Phi^{\nu^2}_t(y)|\leqslant\, (1+t\,\|\eta\|_\infty)\,|x-y|+ t\,|v_0(x)-v_0(y)|\\
+ \Int{0}{t}{\left[M_4\,(t-s)+\|\eta\|_\infty\right]\,|\Phi^{\nu^1}_s(x)-\Phi^{\nu^2}_s(y)|}{ds} + M_5\,\Int{0}{t}{(t-s)\mathcal{W}(\nu_s^1,\nu_s^2)}{ds},
%\label{eqn:finalestimateGeneral}
\end{multline}
with $M_4$ and $M_5$ as defined in the statement of the lemma.
\end{proof}
Now we have all ingredients to prove the main theorem, Theorem \ref{thm: main thm}.\\
\\
\textbf{Proof of Part \ref{thm:part1} of Theorem \ref{thm: main thm}.}\\If $M_5=0$ then the well-posedness of \eqref{eqn:system mu0} is straightforward. In that case, $F_\theta\circ(\Wh*\mu_t)=0$ on $\supp\mu_t$ for all $t$ and moreover $K$ must be constant, so the first equation in \eqref{eqn:system mu0} is independent of $\mu_t$. The Picard-Lindel\"{o}f Theorem guarantees, for each $x\in\supp\mu_0$, existence and uniqueness of the motion mapping (as mentioned before). The solution $(\mu_t)_{0\leqslant t\leqslant T}$ is uniquely defined by the push-forward $\mu_t=\Phi_t\#\mu_0$.\\
\\
If $M_5\neq0$, the well-posedness proof is based on a fixed-point argument (Banach's Fixed Point Theorem). Let $T>0$ be fixed. Choose $N\in\Np$ large enough, such that $T^*:=T/N$ satisfies
\begin{equation}\label{eqn:kappa}
\kappa_{T^*}:=\frac12\,(T^*)^2\,M_5\,\exp\left(\|\eta\|_\infty\, T^* + \frac12\,M_4\,(T^*)^2  \right)<1.
\end{equation}
%\textbf{\color{red} Hong: I found this part in Joep'version a bit ambiguous. I am trying to make it clearer by elaborating it a bit more.}
Let $j\in\{1,\ldots,N\}$ be fixed. Suppose that $\mu_0^{(j)}\in\P(\R^d)$ and $v^{(j)}_0\in C^1_b(\R^d;\R^d)$ are given. Consider a mapping $\mathcal{F}^{(j)}\colon\nu\mapsto\mu:=\mathcal{F}^{(j)}(\nu)$ from
\begin{equation}\label{eqn: def Cj}
\mathcal{C}_j:=\left\{\nu\in C([0,T^*];\P_{r(jT^*)}(\R^d)): \nu|_{t=0}=\mu^{(j)}_0\right\}
\end{equation}
to itself, defined by
\begin{equation}
\mu_t=[\Phi^{(j)}_t]^\nu\#\mu^{(j)}_0,  \hbox{ for all $t\in[0,T^*]$,}
\end{equation}
where the motion mapping $[\Phi^{(j)}]^\nu: \supp\mu^{(j)}_0 \rightarrow C^2([0,T^*];\R^d)$ is the solution to the following ODE
\begin{equation}\label{eqn:mapping}
\left\{
  \begin{array}{l}
    [\ddot{\Phi}^{(j)}_t]^{\nu}(x) = -F_\theta\left(\tilde{\rho}_t([\Phi^{(j)}_t]^\nu(x))\right)\nabla\tilde{\rho}_t([\Phi^{(j)}]^\nu_t(x))\\
      \hspace*{2.5cm}- \theta\,(\nabla \Wh*[(F_\theta\circ\tilde{\rho}_t)\nu_t])([\Phi^{(j)}_t]^\nu(x))\\
    \hspace*{2.5cm}-\nabla V\left([\Phi^{(j)}_t]^\nu(x)\right)-\eta\left([\Phi^{(j)}_t]^\nu(x)\right)\,[\dot{\Phi}^{(j)}_t]^\nu(x) + (K*\nu_t)([\Phi^{(j)}_t]^\nu(x));\\
    \tilde{\rho}_t:=\Wh*\nu_t;\\
    \Phi^\nu_0(x)=x,\, \dot{\Phi}^\nu_0(x)=v^{(j)}_0(x).%,
  \end{array}
\right.
\end{equation}
%for all $x\in\supp\mu^{(j)}_0$ and all $t\in(0,T^*]$. 
The space $\mathcal{C}_j$ is complete for arbitrary $j\in\{1,\ldots,N\}$ due to Theorem \ref{thm:completenessAppendix} in Appendix \ref{app:completeness}.
%Note that $\{\mu^{(j)}_0\}_{j=0}^N\subset\P(\R^d)$ is a given sequence of initial probability measures and $\{v^{(j)}_0\}\subset C^1_b(\R^d;\R^d)$ is a sequence of initial velocity fields.
Note that a fixed point $\mu^{(j)}$ of this mapping together with the corresponding motion mapping $\Phi^{(j)}$ is a solution of \eqref{eqn:system mu0} on $[0,T^*]$ with initial data $\mu^{(j)}_0$ and $v_0^{(j)}$. We create a hierarchy of the mappings $\mathcal{F}^{(j)}$ for $j=1,\ldots,N$ by defining $\mu^{(j+1)}_0:=\mu^{(j)}_{T^*}$, $\mu^{(1)}_0:=\mu_0$, $v^{(j+1)}_0:=\dot{\Phi}^{(j)}_{T^*}$ and $v^{(1)}_0:=v_0$. Such definition only makes sense if mapping $j$ actually has a unique fixed point and thus $\mu^{(j)}_{T^*}$ and $\dot{\Phi}^{(j)}_{T^*}$ are well-defined. Moreover, we are aware of the fact that we have only defined $v^{(j+1)}_0$ on the support of $\mu^{(j+1)}_0$. This is however sufficient. If we insist, we might just define it to be zero outside. In view of the to be constructed hierarchy, $\supp\mu^{(j)}_0\subset B(r((j-1)T^*))$ should be satisfied for each $j$.\\
\\
For any $\nu\in\mathcal{C}_j$ the image $\mu=\mathcal{F}^{(j)}(\nu)$ exists, and actually is an element of $\mathcal{C}_j$. Well-posedness of the motion mapping (for given $\nu$ and for each $x\in\supp\mu^{(j)}_0$) follows from Picard-Lindel\"{o}f (see before) and guarantees the existence and uniqueness of $\mu$.\\
The support of the image measure, $\supp\mu_t$, is contained in a ball of radius
\begin{equation}
r(jT^*) = r_0 + jT^*\,\|v_0\|_\infty + \frac12(jT^*)^2\,(M_1+\|\nabla V\|_\infty+\|K\|_\infty).
\end{equation}
This is easily checked by use of \eqref{eqn:boundMap} and a recursive relation involving $\|[\dot{\Phi}^{(j)}_{T^*}]^\nu\|_\infty=\|v^{(j+1)}_0\|_\infty$ for each $j\in\{1,\ldots,N-1\}$. Thus, the image $\mu$ of our mapping $\mathcal{F}^{(j)}$ is an element of $\mathcal{C}_j$.\\
Consider two measures $\nu^1,\nu^2\in\mathcal{C}_j$ and their corresponding images $\mu^1:=\mathcal{F}^{(j)}(\nu^1)$ and $\mu^2:=\mathcal{F}^{(j)}(\nu^2)$. Let $\pi_0\in\Pi(\mu_0^{(j)},\mu_0^{(j)})$ be arbitrary. For an arbitrary fixed $t\in[0,T^*]$ define $\pi_t\in\Pi(\mu^1_t,\mu^2_t)$ by
\begin{equation}
\pi_t:=\left([\Phi^{(j)}_t]^{\nu^1},[\Phi^{(j)}_t]^{\nu^2}\right)\#\pi_0.
\end{equation}
Note that this $\pi_t$ is indeed a joint representation of $\mu^1_t$ and $\mu^2_t$ for each $t$. We drop the dependence on $j$ of $\pi_0,\mu^1,\mu^2$ and $\pi_t$ since no ambiguity appears. By definition of the push-forward and of the Wasserstein distance (see Definitions \ref{def:push-forward} and \ref{def:Wass}), we have
\begin{equation}\label{eqn:WassAgainstMapping}
\mathcal{W}(\mu^1_t,\mu^2_t) \leqslant \Int{}{}{|z-w|}{\pi_t(dz,dw)} = \Int{}{}{\left|[\Phi^{(j)}_t]^{\nu^1}(x)-[\Phi^{(j)}_t]^{\nu^2}(y)\right|}{\pi_0(dx,dy)}
\end{equation}
holds for each $t\in[0,T^*]$. Applied to \eqref{eqn:finalestimateGeneral}, a version of Gronwall's Lemma%due to \eqref{eqn:vj0Ineqv10}.}
~yields that for each $x,y\in\supp\mu^{(j)}_0$
\begin{multline}
\left|[\Phi^{(j)}_t]^{\nu^1}(x)-[\Phi^{(j)}_t]^{\nu^2}(y)\right| \leqslant\, \bigg[(1+t\,\|\eta\|_\infty)\,|x-y| + t\,|v^{(j)}_0(x)-v^{(j)}_0(y)|\\
\hspace{1cm}+  M_5\,\Int{0}{t}{(t-s)\mathcal{W}(\nu_s^1,\nu_s^2)}{ds} \bigg]\,\exp\left(\|\eta\|_\infty\, t + \frac12\,M_4\,t^2  \right).\label{eqn:estGronwall}
\end{multline}
We remark that Gronwall's Lemma may be applied because the term $|v^{(j)}_0(x)-v^{(j)}_0(y)|$ is bounded and $s\mapsto (t-s)\mathcal{W}(\nu_s^1,\nu_s^2)$ is bounded and continuous. The former can be shown by using estimates similar to those in the proof of Lemma \ref{lem:bound motion map}. The boundedness of the latter is trivial, \linebreak $(t-s)\mathcal{W}(\nu_s^1,\nu_s^2)\leq T^*\sup_{s\in[0,T^*]}\mathcal{W}(\nu_s^1,\nu_s^2)$; while the continuity of $s\mapsto \mathcal{W}(\nu_s^1,\nu_s^2)$ follows from the triangle inequality. Indeed, since \begin{equation*}
|\mathcal{W}(\nu_s^1,\nu_s^2)-\mathcal{W}(\nu_{s_0}^1,\nu_{s_0}^2)|\leq \mathcal{W}(\nu_s^1,\nu_{s_0}^1)+\mathcal{W}(\nu_{s_0}^2,\nu_{s}^2),
\end{equation*}
it implies that $\lim_{s\to s_0}\mathcal{W}(\nu_s^1,\nu_s^2)=\mathcal{W}(\nu_{s_0}^1,\nu_{s_0}^2)$ if $\mathcal{W}(\nu_s^1,\nu_{s_0}^1)\to 0$ and $\mathcal{W}(\nu_s^2,\nu_{s_0}^2)\to 0$.\\
\\
Now we combine \eqref{eqn:WassAgainstMapping} and \eqref{eqn:estGronwall}, and obtain
\begin{multline}\label{eqn: Wass against pi0}
\mathcal{W}(\mu^1_t,\mu^2_t) \leqslant \bigg[(1+t\,\|\eta\|_\infty)\Int{}{}{|x-y|}{\pi_0(dx,dy)} + t\,\Int{}{}{|v^{(j)}_0(x)-v^{(j)}_0(y)|}{\pi_0(dx,dy)}\\
\hspace{1cm}+M_5\,\Int{0}{t}{(t-s)\mathcal{W}(\nu_s^1,\nu_s^2)}{ds}\bigg]\,\exp\left(\|\eta\|_\infty\, t + \frac12\,M_4\,t^2  \right).
\end{multline}
The integral with respect to $\pi_0(dx,dy)$ disappeared for the third term inside the square brackets, since this term is independent of $x$ and $y$, and moreover $\Int{}{}{}{\pi_0(dx,dy)}=1$. Now we take
\begin{equation*}
\pi_0 := (I\otimes I)\# \mu_0^{(j)},
\end{equation*}
which is the measure concentrated on the diagonal $x=y$ with marginals both $\mu_0^{(j)}$. With some abuse of notation it can also be written as
\begin{equation*}
\pi_0(dx,dy) := \delta(x-y)\mu_0^{(j)}(dy).
\end{equation*}
For this choice of $\pi_0$, we have that
\begin{align*}
\Int{}{}{|x-y|}{\pi_0(dx,dy)}=0,\,\,\text{and}\\
\Int{}{}{|v^{(j)}_0(x)-v^{(j)}_0(y)|}{\pi_0(dx,dy)}=0.
\end{align*}
%By taking the infimum over $\pi_0\in\Pi(\mu_0^{(j)},\mu_0^{(j)})$ on the right-hand side, we can replace $\Int{}{}{|x-y|}{\pi_0(dx,dy)}$ by $\mathcal{W}(\mu_0^{(j)},\mu_0^{(j)})$ which is zero, and therefore the first term vanishes.  {\color{red}\textbf{?Remark: prefactor bounded needed, later: for all $j$!}}\\
Therefore, only the third term in square brackets on the right-hand side of \eqref{eqn: Wass against pi0} remains. Since
\begin{equation*}
\int_0^t(t-s)\mathcal{W}(\nu_s^1,\nu_s^2)\,ds\leq \sup_{s\in[0,t]}\mathcal{W}(\nu^1_s,\nu^2_s)\int_0^t(t-s)\,ds=\frac{1}{2}t^2\sup_{s\in[0,t]}\mathcal{W}(\nu^1_s,\nu^2_s),
\end{equation*}
we obtain
%limit=infimum, integral of \tilde{\pi}_s is bounded by 2r(jT^*) due to bounded domain of \nu^i_s, DCT -> take infimum (minimizing sequence) inside integral over s
%\begin{align}
%\nonumber \mathcal{W}(\mu^1_t,\mu^2_t) \leqslant& \exp\left(\|\eta\|_\infty\, t + \frac12\,M_4\,t^2  \right)\,M_5\,\Int{0}{t}{(t-s)\mathcal{W}(\nu^1_s,\nu^2_s)}{ds}\\
%\leqslant& \exp\left(\|\eta\|_\infty\, t + \frac12\,M_4\,t^2  \right)\,M_5\,\sup_{s\in[0,t]}\mathcal{W}(\nu^1_s,\nu^2_s)\,\underbrace{\Int{0}{t}{(t-s)}{ds}}_{=\frac12 t^2}.
%\end{align}
\begin{equation}\nonumber
\mathcal{W}(\mu^1_t,\mu^2_t) \leqslant \frac12 t^2\,M_5\,\exp\left(\|\eta\|_\infty\, t + \frac12\,M_4\,t^2  \right)\,\sup_{s\in[0,t]}\mathcal{W}(\nu^1_s,\nu^2_s).
\end{equation}
Finally, we take the supremum over $t\in[0,T^*]$:
\begin{align}
\sup_{t\in[0,T^*]}\nonumber \mathcal{W}(\mu^1_t,\mu^2_t) \leqslant& \frac12\,(T^*)^2\,M_5\,\exp\left(\|\eta\|_\infty\, T^* + \frac12\,M_4\,(T^*)^2  \right)\,\sup_{t\in[0,T^*]}\mathcal{W}(\nu^1_t,\nu^2_t).
\end{align}
By the specific choice of $T^*$, $\mathcal{F}^{(j)}$ is a contraction mapping for each $j$, since
\begin{align}
\sup_{t\in[0,T^*]}\nonumber \mathcal{W}(\mu^1_t,\mu^2_t) \leqslant& \,\kappa_{T^*}\,\sup_{t\in[0,T^*]}\mathcal{W}(\nu^1_t,\nu^2_t),
\end{align}
where $\kappa_{T^*}<1$ by assumption; cf.~\eqref{eqn:kappa}. As mentioned before, the space $\mathcal{C}_j$ is complete for each $j$ due to Theorem \ref{thm:completenessAppendix} in Appendix \ref{app:completeness}. Banach's Fixed Point Theorem then guarantees the existence of a unique fixed point of $\mathcal{F}^{(j)}$ for each $j$.\\
Having the construction of $(\mu^{(j)},\Phi^{(j)})$ for $j=1,\ldots,N$, we define a couple $(\mu,\Phi)$ of a measure and a motion mapping as follows
\begin{equation}
(\mu_t,\Phi_t):=(\mu^{(j)}_{t-(j-1)T^*},\Phi^{(j)}_{t-(j-1)T^*}),\quad \text{if}~~ t\in ((j-1)T^*,jT^*],
\end{equation}
for $j\in\{1,\ldots,N\}$.\\%, and
%\begin{equation}
%(\mu_T,\Phi_T):=(\mu^{(N)}_{T^*},\Phi^{(N)}_{T^*}).
%\end{equation}
\noindent By our construction $(\mu,\Phi)\in C([0,T];\P_{r(T)}(\R^d))\times \mathcal{A}$ and it uniquely satisfies \eqref{eqn:system mu0} with initial data $\mu_0$ and $v_0$.\\
\\
% $\mu^{(j+1)}_0:=\mu^{(j)}_{T^*}$, $\mu^{(1)}_0:=\mu_0$, $v^{(j+1)}_0:=\dot{\Phi}^{(j)}_{T^*}$ and $v^{(1)}_0:=v_0$, the concatenation
%
%\textbf{\color{red} Hong: this term is a little bit vague since we have defined
%all the maps $\mu^{(j)}$ on $[0,T^*]$.
%should we define $\mu^{(j)}$ on $[(j-1)T^*,jT^*]$}
%
%of these fixed points
%(and corresponding motion mappings) yields a $(\mu,\Phi)\in C([0,T];\P_{r(T)}(\R^d))\times C^2([0,T];C^1(\supp\mu_0;\R^d))$ that uniquely satisfies \eqref{eqn:system mu0}.\\
\textbf{Proof of Part \ref{thm:part2} of Theorem \ref{thm: main thm}.}\\Note that Part \ref{thm:part1} implies that for each initial measure $\mu_0$ and $\mu_0^n$ (for each $n\in\N$) there is a corresponding unique solution $(\mu,\Phi)$, $(\mu^n,\Phi^n)$, respectively. Fix $n\in\N$ and let $\pi_0\in\Pi(\mu_0^n,\mu_0)$ be arbitrary. We use \eqref{eqn:finalestimateGeneral}, taking $\nu^1=\mu^n$ and $\nu^2=\mu$. Thus $\Phi^{\nu^1}=\Phi^n$ and $\Phi^{\nu^2}=\Phi$. First of all, we estimate
\begin{equation}
|v_0(x)-v_0(y)|\leqslant \|\nabla v_0\|_\infty\,|x-y|,
\end{equation}
for all $x\in\supp\mu_0^n$ and all $y\in\supp\mu_0$. This is possible\footnote{Note that this estimate was not possible in \eqref{eqn:estGronwall}, since $v^{(j)}_0$ is part of the solution and only defined on $\supp \mu_0^{(j)}$. In general, $\nabla v^{(j)}_0$ might not even be defined.}, since $v_0\in C^1_b(\R^d;\R^d)$ is given, is defined on the whole of $\R^d$ and has bounded derivative. Using this Lipschitz estimate and integrating \eqref{eqn:finalestimateGeneral} against $\pi_0(dx,dy)$, we obtain
\begin{multline}
\Int{}{}{|\Phi^n_t(x)-\Phi_t(y)|}{\pi_0(dx,dy)}\leqslant\, (1+t\,\left(\|\nabla v_0\|_\infty+\|\eta\|_\infty\right))\,\Int{}{}{|x-y|}{\pi_0(dx,dy)}\\
+ \Int{0}{t}{\left[M_4\,(t-s)+\|\eta\|_\infty\right]\,\Int{}{}{|\Phi^n_s(x)-\Phi_s(y)|}{\pi_0(dx,dy)}}{ds} + M_5\,\Int{0}{t}{(t-s) \mathcal{W}(\mu^n_s,\mu_s)}{ds},\label{eqn:integratedEstimatebeforeGronwall}
\end{multline}
where we used that the last term is independent of $x$ and $y$, and the fact that $\pi_0$ is a probability measure on $\R^d\times\R^d$. If we define $\tilde{\pi}_s\in\Pi(\mu^n_s,\mu_s)$ as
\begin{equation}
\tilde{\pi}_s:=(\Phi^n_s,\Phi_s)\#\pi_0
\end{equation}
for each $s\in[0,T]$, then we have, analogously to \eqref{eqn:WassAgainstMapping}, the following:
\begin{equation}
\mathcal{W}(\mu^n_s,\mu_s) \leqslant \Int{}{}{|z-w|}{\tilde{\pi}_s(dz,dw)} = \Int{}{}{\left|\Phi^n_s(x)-\Phi_s(y)\right|}{\pi_0(dx,dy)}.\label{eqn:WassAgainstMapping2}
\end{equation}
%Thus we can substitute in \eqref{eqn:integratedEstimatebeforeGronwall}
%\begin{equation}
%\Int{}{}{\left|z-w\right|}{\tilde{\pi}_s(dz,dw)}=\Int{}{}{\left|\Phi^n_s(x)-\Phi_s(y)\right|}{\pi_0(dx,dy)}
%\end{equation}
%and obtain
%\begin{multline}
%\Int{}{}{|\Phi^n_t(x)-\Phi_t(y)|}{\pi_0(dx,dy)}\leqslant \, (1+t\,\left(\|\nabla v_0\|_\infty+\|\eta\|_\infty\right))\,\Int{}{}{|x-y|}{\pi_0(dx,dy)}\\
%+ \Int{0}{t}{\left[\left(M_4+M_5\right)(t-s)+\|\eta\|_\infty\right]\,\Int{}{}{|\Phi^n_s(x)-\Phi_s(y)|}{\pi_0(dx,dy)}}{ds}.
%\end{multline}
%Gronwall's Lemma yields
We substitute this estimate for $\mathcal{W}(\mu^n_s,\mu_s)$ in the right-hand side of \eqref{eqn:integratedEstimatebeforeGronwall} and apply Gronwall's Lemma to obtain
\begin{multline}
\Int{}{}{|\Phi^n_t(x)-\Phi_t(y)|}{\pi_0(dx,dy)}\\
\leqslant \left[(1+t\,\left(\|\nabla v_0\|_\infty+\|\eta\|_\infty\right))\,\Int{}{}{|x-y|}{\pi_0(dx,dy)}\right]\exp\left(\|\eta\|_\infty\, t + \frac12\left(M_4+M_5\right) t^2  \right).\label{eqn:estBeforeWass}
\end{multline}
%Like in \eqref{eqn:WassAgainstMapping}, we have that
%\begin{equation}\label{eqn:WassAgainstMapping_mu_n}
%\mathcal{W}(\mu^n_t,\mu_t)\leqslant\Int{}{}{|z-w|}{\tilde{\pi}_t(dz,dw)}=\Int{}{}{|\Phi^n_t(x)-\Phi_t(y)|}{\pi_0(dx,dy)},
%\end{equation}
%where we use the already defined joint representation $\tilde{\pi}_t=(\Phi^n_t,\Phi_t)\#\pi_0\in\Pi(\mu^n_t,\mu_t)$. Together, \eqref{eqn:estBeforeWass} and \eqref{eqn:WassAgainstMapping_mu_n} yield
%\begin{equation}
%\mathcal{W}(\mu^n_t,\mu_t) \leqslant \left[(1+t\,\left(\|\nabla v_0\|_\infty+\|\eta\|_\infty\right))\,\Int{}{}{|x-y|}{\pi_0(dx,dy)}\right]\,\exp\left(\|\eta\|_\infty\, t + \frac12\left(M_4+M_5\right) t^2  \right).
%\end{equation}
%Using the already defined joint representation $\tilde{\pi}_t=(\Phi^n_t,\Phi_t)\#\pi_0\in\Pi(\mu^n_t,\mu_t)$, we obtain an estimate like in \eqref{eqn:WassAgainstMapping}. 
Together, \eqref{eqn:WassAgainstMapping2} and \eqref{eqn:estBeforeWass} yield
\begin{multline}\nonumber
\mathcal{W}(\mu^n_t,\mu_t)\\ \leqslant \left[(1+t\,\left(\|\nabla v_0\|_\infty+\|\eta\|_\infty\right))\,\Int{}{}{|x-y|}{\pi_0(dx,dy)}\right]\,\exp\left(\|\eta\|_\infty\, t + \frac12\left(M_4+M_5\right) t^2  \right).
\end{multline}
We take the infimum over $\pi_0\in\Pi(\mu^n_0,\mu_0)$ on the right-hand side:
\begin{equation}
\mathcal{W}(\mu^n_t,\mu_t) \leqslant (1+t\,\left(\|\nabla v_0\|_\infty+\|\eta\|_\infty\right))\,\exp\left(\|\eta\|_\infty\, t + \frac12\left(M_4+M_5\right) t^2  \right)\,\mathcal{W}(\mu^n_0,\mu_0).
\end{equation}
Finally, we take the supremum over $t\in[0,T]$ on both sides of the inequality and obtain
\begin{equation}
\nonumber\sup_{t\in[0,T]}\mathcal{W}(\mu^n_t,\mu_t) \leqslant (1+T\,\left(\|\nabla v_0\|_\infty+\|\eta\|_\infty\right))\,\exp\left(\|\eta\|_\infty\, T + \frac12\left(M_4+M_5\right) T^2  \right)\,\mathcal{W}(\mu^n_0,\mu_0).
\end{equation}
Hence $\mathcal{W}(\mu^n_0,\mu_0)\stackrel{n\to\infty}{\longrightarrow}0$ implies $\sup_{t\in[0,T]}\mathcal{W}(\mu^n_t,\mu_t)\stackrel{n\to\infty}{\longrightarrow}0$. This finishes the proof. \endproof %<--- command from style file %\begin{flushright}$\square$\end{flushright}
%\end{proof}

%\textbf{\color{red}Joep: I find this proof trivial/superfluous:}
%\textbf{Proof of Corollary~\ref{cor: particle system}}.
%Setting $\Phi^n_t(x_i(0)):=x_i(t)$, for $i=1,\ldots,n$. We now verify that $(\rho^n,\Phi^n)$ is a solution of~\eqref{eqn:system mu0n} with initial data $\rho^n_0$ and $v_0$. Indeed, by definition of $\rho^n_t$, we have
%\begin{align*}
%&(\rho^n_t*\Wh)(x_i(t)) =\frac{1}{n}\sum\limits_{j=1}^n\Wh(x_i(t)-x_j(t)),
%\\&\nabla(\mu^n_t*\Wh)(x_i(t))=(\rho^n_t*\nabla\Wh)(x_i(t))=\frac{1}{n}\sum\limits_{j=1}^N\nabla\Wh(x_i(t)-x_j(t)),
%\\\text{and similarly}
%\\&(\rho^n_t)*K)(x_i(t)) =\frac{1}{n}\sum\limits_{j=1}^nK(x_i(t)-x_j(t)).
%\end{align*}
%It implies that the first equation in~\eqref{eqn:system mu0n} is satisfied. For the second one, we have
%\begin{align*}
%\Phi^n_t\#\rho^n_0&=\Phi^n_t\#\left( \frac{1}{n}\sum_{i=1}^n \delta_{x_i(0)}\right)=\frac{1}{n}\sum_{i=1}^n \delta_{\Phi^n_t(x_i(0))}=\frac{1}{n}\sum_{i=1}^n \delta_{x_i(t)}=\rho^n_t.
%\end{align*}
%Hence for all $n$, $(\rho^n,\Phi^n)$ is a solution of~\eqref{eqn:system mu0n}. The assertion of the corollary then follows by Part 2 of the main Theorem~\ref{thm: main thm}.\begin{flushright}$\square$\end{flushright}

\subsection{Discussion on Assumptions \ref{ass: F W theta=0} and \ref{ass: F W theta=1}, and the condition \eqref{eqn: initial condition}}\label{sect:Discussion}
We comment on the assumptions needed for the main theorem, Theorem \ref{thm: main thm}.\\
\\
{\em Assumptions on $F_\theta$ and $\Wh$}: We remark here that in \cite{DiLisio} only $\theta=0$ is used, and furthermore $\nabla V\equiv0$, $\eta\equiv0$ and $K\equiv0$. All possible $F_0$ and $\Wh$ treated in \cite{DiLisio} satisfy Assumption \ref{ass: F W theta=0}:
\begin{enumerate}
  \item $F_0(u)=u^\alpha$, for $\alpha\geqslant0$, satisfies the assumptions for all choices of $\Wh\in C^2_b(\R^d;\Rp_0)$;
  \item $F_0(u)=u^\alpha$, for $-1<\alpha<0$, satisfies the assumptions if $\Wh$ is an element of $C^2_b(\R^d;\Rp)$ and satisfies the extra condition $|\nabla\Wh(x)|\leqslant c\,|\Wh(x)|^{-\alpha}$ for all $x$, for some constant $c>0$.
\end{enumerate}
We remark that the class of admissible pairs $(F_0,\Wh)$ covered by Assumption \ref{ass: F W theta=0} is more general than in \cite{DiLisio}, where only $F_0$ of the form $F_0(u)=u^\alpha$ is treated. For instance, in our work any $F_0\in C^1_b(\Rp;\Rp)$ (bounded and with bounded derivative) is allowed in combination with an arbitrary $\Wh\in C^2_b(\R^d;\Rp_0)$.\\
\\
{\em Assumption \eqref{eqn: initial condition} on convergence of initial data}: Given the initial probability measure $\mu_0$ supported in the ball $B(r_0)$, we demonstrate here two ways of constructing an approximating sequence of measures $(\mu_0^n)_{n\in\Np}$.\\
\\
The first way of constructing $\mu_0^n$ is deterministic and has been used in~\cite{Bolley}. For simplicity of presentation, we assume $d=1$ and $\supp\mu_0\subset[0,1]$. For each $n\in\Np$, define
\begin{equation}\label{eq: construction init}
\mu_0^n:=\sum_{i=1}^{n}m_i\delta_{\frac{i}{n}-\frac{1}{2n}},
\end{equation}
where $\displaystyle m_i:=\int_{[\frac{i-1}{n},\frac{i}{n})}\mu_0(dx)$, for each $i=1,\ldots,n-1$, and $\displaystyle m_n:=\int_{[1-\frac{1}{n},1]}\mu_0(dx)$.\\
It follows that $\sum_{i}m_i=\int \mu_0(dx)=1$ and $\mu_0^n\in\P(\R)$.
Define a map $\map{\Psi}{[0,1]}{\{\frac{i}{n}-\frac{1}{2n}:1\leq i\leq n\}}$ by $\Psi(x):=\frac{i}{n}-\frac{1}{2n}$ if $\frac{i-1}{n}\leq x < \frac{i}{n}$ and $\Psi(1):=1-\frac{1}{2n}$. For every measurable and bounded function $f$, defined on $[0,1]$ it holds that
\begin{align*}
\int_{[0,1]}f(x)\mu_0^n(dx)&=\sum_{i=1}^n m_i\,f\left(\frac{i}{n}-\frac{1}{2n}\right)\\
&=\sum_{i=1}^{n-1}\int_{[\frac{i-1}{n},\frac{i}{n})}\mu_0(dx)\, f\left(\frac{i}{n}-\frac{1}{2n}\right)+\int_{[1-\frac{1}{n},1]}\mu_0(dx)\, f\left(1-\frac{1}{2n}\right)\\
%&=\sum_{i=1}^{n-1}\int_{[\frac{i-1}{n},\frac{i}{n})}f\left(\Psi(x)\right)\,\mu_0(dx)+\int_{[1-\frac{1}{n},1]}f\left(\Psi(x)\right)\,\mu_0(dx)\\
&=\int_{[0,1]}f(\Psi(x))\,\mu_0(dx).
\end{align*}
Hence, $\mu_0^n=\Psi\#\mu_0$. Note that $|x-\Psi(x)|\leqslant\frac{1}{2n}$ for every $x\in[0,1]$. Therefore,
\[
\mathcal{W}(\mu_0^n,\mu_0)\leq \int_{[0,1]}|x-\Psi(x)|\,\mu_0(dx) \leq \frac{1}{2n}\int_{[0,1]}\mu_0(dx) =\frac{1}{2n},
\]
where we obtain the first inequality by taking $\pi\in\Pi(\mu_0^n,\mu_0)$ to be $\pi:=(I\otimes\Psi)\#\mu_0$. This implies that $\mathcal{W}(\mu^n_0,\mu_0)\stackrel{n\to\infty}{\longrightarrow}0$.\\
This procedure generalizes to the case $d>1$ (but with more involved notation). Let $\supp\mu_0\subset[0,1]^d$ and let $n\in\{k^d:k\in\Np\}$. Dividing the hypercube $[0,1]^d$ into $n$ equal subcubes, we obtain analogously that the convergence rate is $\mathcal{O}(1/\sqrt[d]{n})$.\\
\\
The second way of constructing $\mu_0^n$ is probabilistic and is based on the law of large numbers as already pointed out in~\cite{DiLisio}. Suppose that the points $X_i, i=1,\ldots,n$ are independent identically distributed random variables with the same distribution $\mu_0\in\P(B(r_0))$. Let $\mu_0^n$ be the empirical measure, defined by
\[
\mu_0^n:=\frac{1}{n}\sum_{i=1}^n\delta_{X_i}.
\]
Note that in fact there is an underlying probability space $\Omega$ and $\map{X_i}{\Omega}{B(r_0)}$. Hence $\mu_0^n$ is, strictly speaking, not a mere probability measure, but a mapping from $\Omega$ to $\P(B(r_0))$; i.e.~$\map{\mu_0^n}{\Omega}{\P(B(r_0))}$. According to \cite{Dudley}, Theorem 11.4.1, the sequence $(\mu_0^n)$ converges \textit{almost surely} to $\mu_0$. This implies that for \textit{almost every} realization $\bar{x}_1,\bar{x}_2,\ldots$ the corresponding sequence of measures $(\bar{\mu}_0^n)\subset\P(B(r_0))$ given by $\bar{\mu}_0^n:=1/n\sum_i\delta_{\bar{x}_i}$, converges in the narrow topology to $\mu_0$:
\[
\int_{B(r_0)}f(x)\mu_0^n(dx)\rightarrow\int_{B(r_0)}f(x)\mu_0(dx),\,\,\,\text{for all $f\in C_b(B(r_0))$.}
\]
The term `almost every realization' refers to the fact that the set (in $\Omega$) on which the narrow convergence does \textit{not} hold, has zero probability (with respect to the probability distribution on $\Omega$). In layman's terms, this means that if we draw a random sample $\bar{x}_1,\bar{x}_2,\ldots$, it is `unlikely' that the corresponding sequence $(\bar{\mu}_0^n)$ does not converge narrowly.\\
\\
Assume that our random sample did yield such narrowly converging sequence $(\bar{\mu}_0^n)$. Since all $\bar{\mu}_0^n$ are probability measures on a bounded domain $B(r_0)$, their first moments are uniformly integrable (i.e.~uniformly in $n$). Thus, Theorem 7.1.5 in \cite{AGS} implies that
\begin{equation}
\mathcal{W}(\bar{\mu}_0^n,\mu_0)\stackrel{n\to\infty}{\longrightarrow}0.
\end{equation}
%By the law of large number $\mu_0^n$ converges almost every to the measure $\mu_0$. More precisely, we have
%\begin{align*}
%&\int\left(\int f(x)\mu_0^n(dx)-\int f(x)\mu_0(dx)\right)^2\mu_0(dx_1)\ldots\mu_0(dx_n)
%\\&\qquad=\int \left(\frac{1}{n}\sum_{i}f(x_i)-\int f(x)\mu_0(dx)\right)^2\mu_0(dx_1)\ldots\mu_0(dx_n)
%\\&\qquad=\frac{1}{n^2}\int \left(\sum_{i=1}^nf(x_i)\right)^2\mu_0(dx_1)\ldots\mu_0(dx_n)-\frac{2}{n}\left(\int\sum_{i=1}^n f(x_i)\,\mu_0(dx_1)\ldots\mu_0(dx_n)\right)\left(\int f(x)\,\mu_0(dx)\right)
%\\&\qquad\qquad+\left(\int f(x)\,\mu_0(dx)\right)^2
%\\&\qquad=\frac{1}{n^2}\int\left(\sum_{i}f^2(x_i)+\sum_{j\neq i}f(x_i)f(x_j)\right)\mu_0(dx_1)\ldots\mu_0(dx_n)-\frac{2}{n}\left(n\int f(x_i)\mu_0(dx_i)\right)\left(\int f(x)\,\mu_0(dx)\right)
%\\&\qquad\qquad+\left(\int f(x)\,\mu_0(dx)\right)^2
%\\&\qquad=\frac{1}{n^2}\left[n\int f^2(x)\mu_0(dx)+n(n-1)\left(\int f(x)\mu_0(dx)\right)^2\right]-\left(\int f(x)\,\mu_0(dx)\right)^2
%\\&\qquad=\frac{1}{n}\left[\int f^2(x)\mu_0(dx)-\left(\int f(x)\mu_0(dx)\right)^2\right]
%\\&\qquad\leq \frac{1}{n}\int f^2(x)\mu_0(dx)
%\\&\qquad\leq \frac{1}{n}\|f\|^2_{\infty},
%\end{align*}
%where we have used that $x_i$ are independent identically distributed with the same distribution $\mu_0$.
%
%This inequality implies that the empirical measure $\mu_0^n$ converges to $\mu_0$ almost surely.

\subsection{Numerical illustration}\label{sec: num}
We illustrate the theoretical convergence result of Theorem \ref{thm: main thm} by two numerical examples. The first one involves only the hydrodynamical force, as described by the first term on the right-hand side of \eqref{eqn: cor particle scheme}. We consider both schemes derived ($\theta = 0$ and $\theta = 1$), in dimension $d=1$ and $d=2$. In the second example only the non-local interaction term and a drag force in \eqref{eqn: cor particle scheme} are present and we take $d=2$. %The solution details follow typical choices made in the application of SPH (for e.g. \cite{Mon05})\textcolor{red}{maybe we should delete this, as readers might think we add SPH viscosity}.
First, in problems of bounded domains, it is common to use the differential form of mass conservation equation, thus the time variation of the measure-valued equation for mass in \eqref{eqn:system mu0} is evolved in time along with the momentum equation, via a leapfrog algorithm with a constant time step. The leapfrog algorithm is a second-order symplectic integrator with the property of preserving the momentum of the system. The Gaussian function, defined by
\begin{equation}
\Wh(x) := \dfrac{1}{h \sqrt{\pi}}e^{-|x|^2/h^2},
\end{equation}
for all $x\in\R$, is used for the regularization of the mass measure in the one-dimensional case. For $d=2$, the cubic Wendland function is used, whence for all $x\in\R^2$:
\begin{equation}
\Wh(x) :=
\begin{cases} \frac{1}{8} ( 1 + 3|x|/2h )(2 - |x|/h)^3, & |x| \leq 2h, \\
0, & |x| > 2h.
\end{cases}
\end{equation}
These choices are made to illustrate that we can handle both bounded and unbounded support of $\Wh$.\\
In order for the regularized equations of hydrodynamics to approximate the real physics well, $h$ should be sufficiently small. Let $V_0$ denote a representative volume assigned to each particle based on the initial configuration. In a bounded domain, typically $V_0$ scales as $V_0\sim 1/n$. It is common practice to achieve ``$h$ sufficiently small" by taking $h = \eps\,\sqrt[d]{V_0}$, with parameter $1.2\leqs\eps\leqs1.5$, cf.~\cite{Mon05}. However, the convergence result in Theorem \ref{thm: main thm} holds for $h$ fixed, and the dependence of $h$ on $n$ is not investigated. Numerically, we investigate both cases. That is, we take both $h = 1$ fixed and $h = 1.5\,\sqrt[d]{V_0}$, which hence varies with the number of particles.
\begin{figure}
%\begin{adjustwidth}{0.25in}{0in}
 	\includegraphics[width=5.8in]{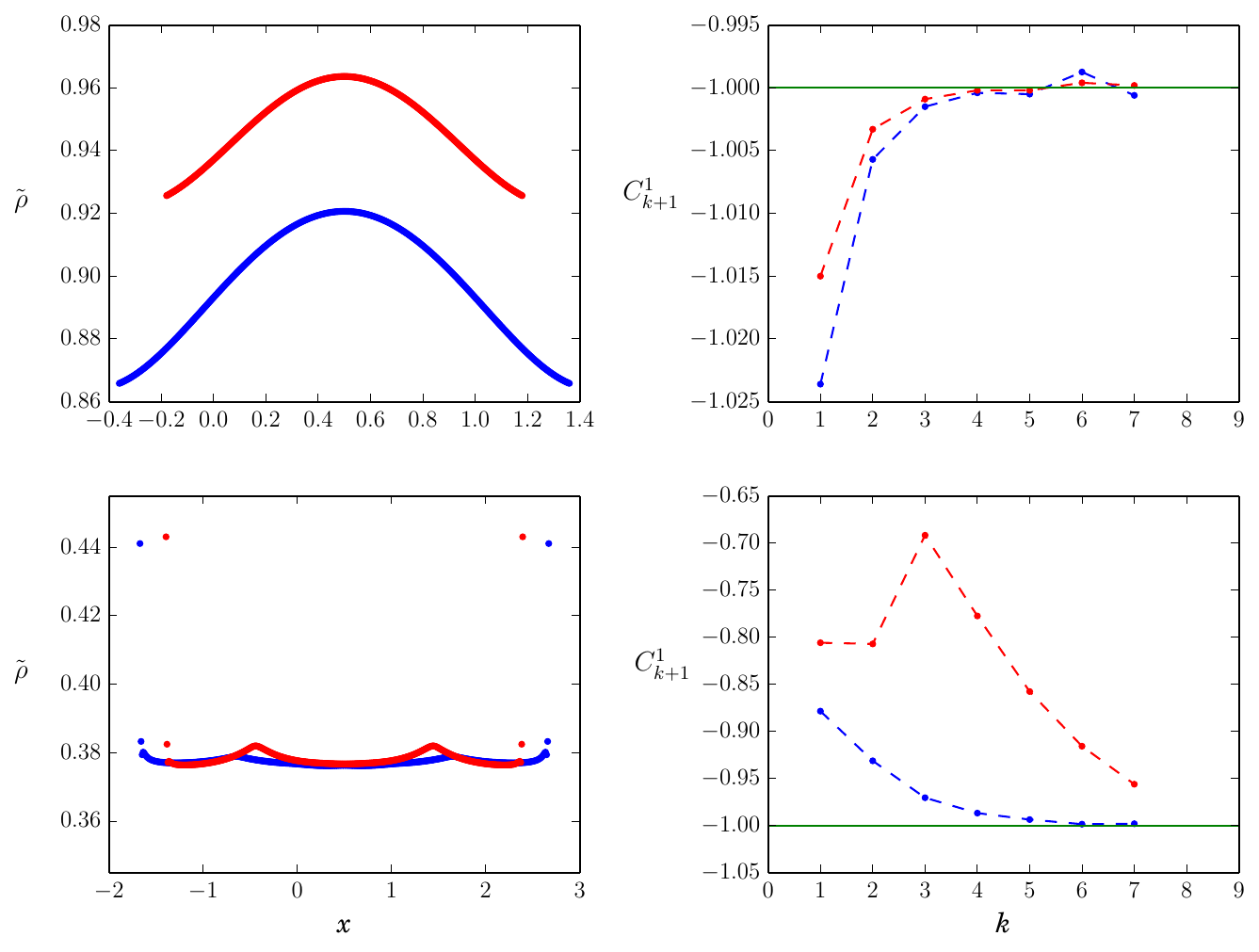}
 	\caption{For $\gamma = 1$, density $\tilde{\rho}$ at final time $T = 1$, with $n= 2^9$ particles and convergence $C^1$. Red and blue plots refer to the schemes for $\theta = 0$ and $\theta = 1$, respectively. Upper plots present results for $h=1$ fixed, while lower plots to variable $h = 1.5\,V_0$.}\label{plot_1d-gamma1}
 %	\end{adjustwidth}
\end{figure}
We assume that the initial measure $\mu_0$ has a density $\rho_0$ such that $\rho_0(x)=1$ for all $x\in[0,1]^d$ and $\rho_0(x)=0$ otherwise. We construct the measure $\mu^n_0$, corresponding to the $n$-particle approximation, according to \eqref{eq: construction init} or its $d$-dimensional counterpart. Hence, the initial particle configuration is realized for $d=1$ by equipartitioning the initial domain $[0,1]$ into $n$ volumes. For the two-dimensional examples, the initial domain is the square $[0,1]^2\subset\R^2$ and particles are placed in the center of each square incremental volume partition $V_0$. Masses are assigned as $m_i = \rho_0(x_i)\,V_0$ for each $i=1,\ldots,n$. Note that in case of more complicated initial domains, an equipartitioning may be obtained with a centroidal Voronoi tessellation.\\
As argued underneath \eqref{eq: construction init}, the sequence $(\mu^n_0)_{n\in\Np}$ constructed in this way converges to $\mu_0$ and the convergence rate is $\mathcal{O}(1/\sqrt[d]{n})$. Hence, the corresponding solutions $(\mu^n)_{n\in\Np}$ converge at the same rate; see the last lines of the proof of Part \ref{thm:part2} of Theorem \ref{thm: main thm}.\\
\begin{figure}
%\begin{adjustwidth}{0.25in}{0in}
 	\includegraphics[width=5.8in]{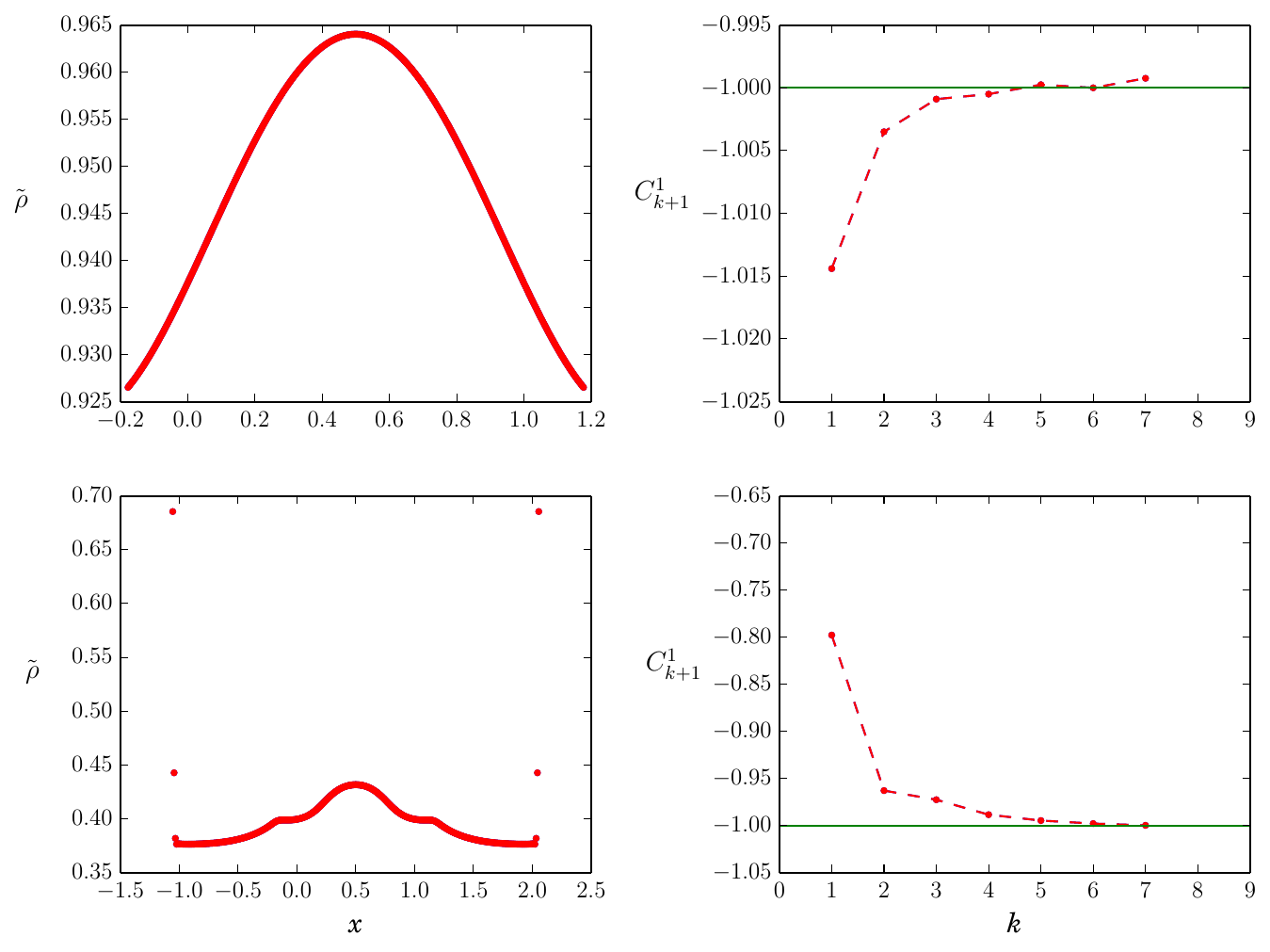}
 	\caption{For $\gamma = 2$, density $\tilde{\rho}$ at final time $T = 1$, with $n= 2^9$ particles and convergence $C^1$. Red and blue plots refer to the schemes for $\theta = 0$ and $\theta = 1$, respectively. Upper plots present results for $h=1$ fixed, while lower plots to variable $h = 1.5\,V_0$.}\label{plot_1d-gamma2}
 %	\end{adjustwidth}
\end{figure}
\\
The hydrodynamical problem considers the spontaneous expansion of a gas cloud until time $T = 1$, governed by the equation of state $P(\rho) = \mathcal{K}\rho^\gamma$, where $\mathcal{K} = 1$ is a parameter and $\gamma$ the so-called \textit{polytropic exponent}. We recall that $P$ relates to $e$ via $\partial e/\partial \rho=P/\rho^2$. In dimension $d=1$, we examine the cases $\gamma \in \{1, 2, 7\}$, using a constant time step $\Delta t = 10^{-3}$, and the Gaussian function. Note that the case $\gamma = 1$ is not covered by the convergence proof (cf.~Assumption \ref{ass: F W theta=0} and Section \ref{sect:Discussion}). It is a limit case (the proof does hold for any $\gamma>1$) and we include it for generality. We perform the calculations for $n = 2^k$ particles, where $k \in \{1,\ldots, 9\}$, and compute the supremum in time of the Wasserstein distance between subsequent solutions; cf.~\eqref{eqn: conv result Wass to zero}. We compute the Wasserstein distance by solving a linear programming problem based on a formulation in terms of optimal transportation. Due to the high computational cost (for large $k$), we use the following approximation
\begin{equation}\label{eqn: approx sup wass}
\sup_{t\in [0,T]}\mathcal{W}(\mu^{2^k}_t,\mu^{2^{k+1}}_t) \approx \max_{\tau\in I} \mathcal{W}(\mu^{2^k}_\tau,\mu^{2^{k+1}}_\tau) =: W_{k,k+1},
\end{equation}
to reduce the number of evaluations of $\mathcal{W}$. Here,
\begin{equation}
I:= \{jT/(N_r - 1)\,:\,j=0,\ldots,(N_r-1)\}
\end{equation}
and we take $N_r = 10$. It should be noted, however that for the vast majority of the computations, the maximum distance is observed at the final time step. \\
\\
\begin{figure}
%\begin{adjustwidth}{0.25in}{0in}
 	\includegraphics[width=5.8in]{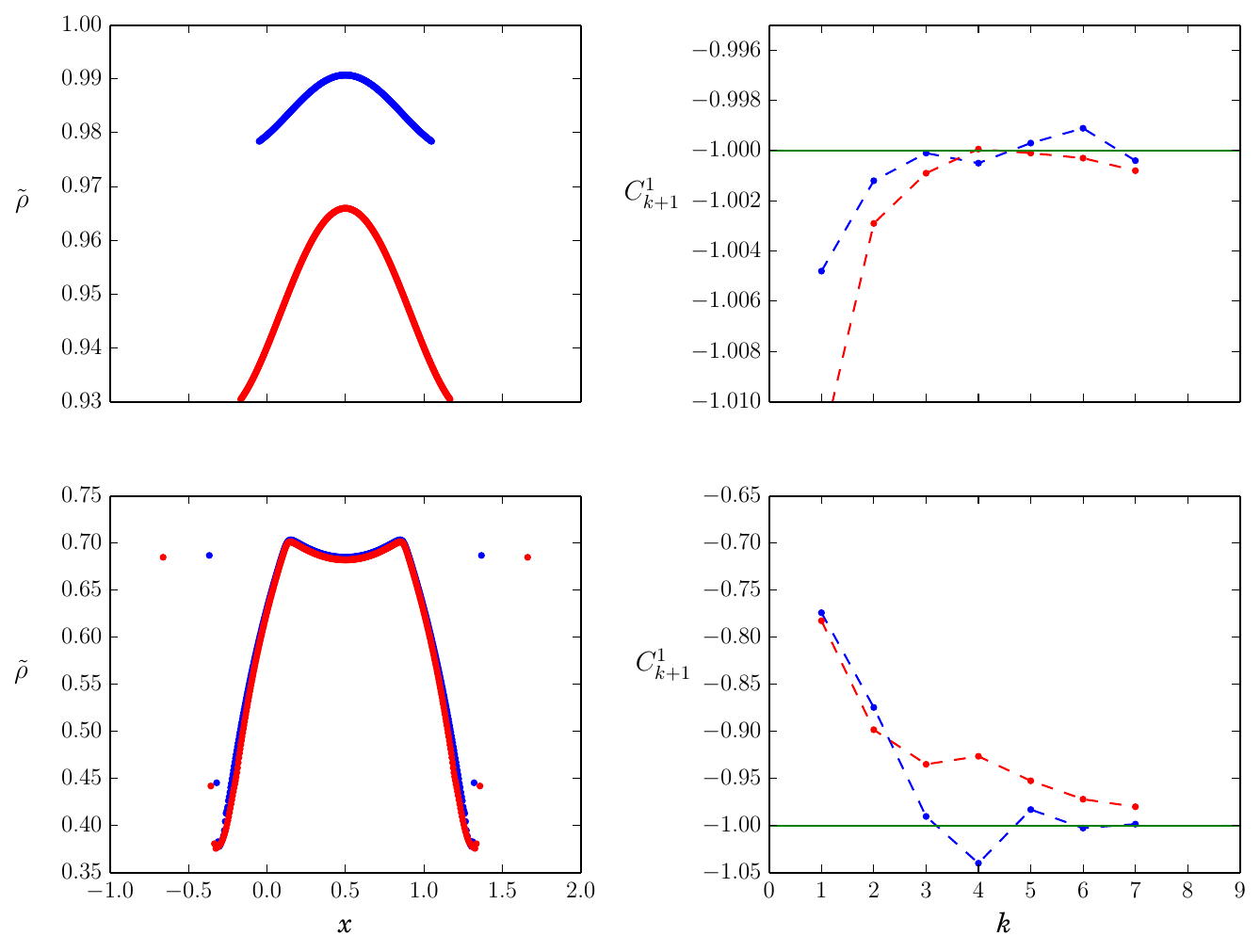}
 	\caption{For $\gamma = 7$, density $\tilde{\rho}$ at final time $T = 1$, with $n= 2^9$ particles and convergence $C^1$. Red and blue plots refer to the schemes for $\theta = 0$ and $\theta = 1$, respectively. Upper plots present results for $h=1$ fixed, while lower plots to variable $h = 1.5\,V_0$.}\label{plot_1d-gamma7}
 %	\end{adjustwidth}
\end{figure}
The convergence rate for $d=1$ is approximated by
\begin{equation}
C^1_{k+1} := \log_2 \Big |\frac{W_{k+1,k+2}}{W_{k,k+1}}\Big |
\end{equation}
and based on the theoretical prediction that the convergence rate is $\mathcal{O}(n^{-1})$ if $d=1$, we expect that $C^1_{k+1}$ tends to the value $-1$.\\
\\
In Figures \ref{plot_1d-gamma1}-\ref{plot_1d-gamma7}, results are shown for the three different values $\gamma \in \{1, 2, 7\}$ respectively. Red graphs correspond to the scheme for $\theta = 0$ and blue graphs to the scheme $\theta = 1$. In these figures, the upper plots refer to computations using $h = 1$ for all resolutions and the lower plots depict computations with $h$ varying with the number of particles used. The left plots show the result for density $\tilde{\rho}$ at $T = 1$, as obtained with the highest resolution $n = 2^9$. Additionally, the convergence of $C_{k+1}^1$ is plotted in the right plots.\\
\\
There are several points to be mentioned about the plots. First, note that in all figures solutions, for $h=1$ fixed and $h$ varying, do converge to a solution by increasing the number of particles. The convergence is evident by the rate $C_{k+1}^1$ approaching its theoretical value $-1$. Second, although convergent, solutions for $h=1$ fixed and $h$ varying are not the same for the same value of $\gamma$. Third, in Figure \ref{plot_1d-gamma2} where $\gamma = 2$, the solutions obtained with the two schemes coincide. This effect is expected since for this value of $\gamma$ the two schemes are identical. On the other hand, this is not true for the the cases $\gamma = 1$ and $\gamma = 7$. Fourth, interestingly enough, even though the proof only covers cases for $\gamma > 1$, the case $\gamma = 1$ converges. In the same case, it is unclear why a spike is present in the convergence graph for $\theta = 0$ and varying $h$. Fifth, for fixed value of $h$ all cases converge from below sharply towards the theoretical value $C_{k+1}^1 \approx -1$, while for $h$ varying with the resolution they converge from above. %Sixth, for varying $h$ the solutions for $\theta=0$ and $\theta=1$ are nearly the same, as opposed to the case $h=1$. As demonstrated in Section \ref{sec: compare eqns motion}, the case $\theta=1$ contains an extra regularization compared to the case $\theta=0$. As we expect the effect of regularization to decrease if $h$ decreases, it is clear that the solutions for $\theta=0$ and $\theta=1$ agree better for smaller $h$.
Finally, for fixed $h=1$, this large value does not permit local effects to appear on the free boundaries of the domain. These effects are exhibited in the cases of varying $h$ as discontinuities of the density profile and therefore seem to be related to problems of applying regularization over small $h$-sized regions in bounded domains.\\
\\
In two spatial dimensions, the hydrodynamic problem examined is the expansion of an initially square gas cloud, until time $T = 1$. In order to show that the results also hold for non-static initial conditions, a rotation described by the initial velocity field $(v_{0,x}, v_{0,y}) = (-y, x)$ is applied. The same equation of state as in the one-dimensional computation is used, with $\gamma \in \{2, 7\} $. The Wendland function and a constant time step of $\Delta t = 10^{-3}$ are employed. Note that we omit the case $\gamma=1$, hence do not need to `mimic' Assumption \ref{ass: F W theta=0}, and do allow for bounded support in $\Wh$.\\
\begin{table}[t]
\caption{%Convergence rates
$C^2_{k+1}$ for the two-dimensional hydrodynamic computations}
\centering
\begin{tabular}{c | c c c c c c}
\hline\hline
\\
&$k$ &2&3&4&5& \\ [0.5ex] % inserts table heading
\hline\hline
\\
$\gamma=2$& $\theta=0$ &-0.51&-0.50&-0.50&-0.50\\
$h$ fixed& $\theta=1$ &-0.51&-0.50&-0.50&-0.50\\
[0.5ex]\hline\\
$\gamma=2$& $\theta=0$ &-0.44&-0.47&-0.49&-0.44\\
$h$ varying& $\theta=1$ &-0.44&-0.47&-0.49&-0.44\\
[0.5ex]\hline\hline \\
$\gamma=7$& $\theta=0$ &-0.51&-0.50&-0.50&-0.50\\
$h$ fixed& $\theta=1$ &-0.50&-0.50&-0.50&-0.50\\
[0.5ex]\hline\\
$\gamma=7$& $\theta=0$ &-0.37&-0.45&-0.48&-0.48\\
$h$ varying& $\theta=1$ &-0.41&-0.43&-0.52&-0.51\\
[1ex]
\hline
\end{tabular}
\label{tab: conv2d}
\end{table}
\\
For $d=2$, we approximate the rate of convergence by
\begin{equation}
C^2_{k+1} := \dfrac12 \, \log_2 \Big |\frac{W_{k+1,k+2}}{W_{k,k+1}}\Big |.
\end{equation}
Note that this definition is different from $C^1_{k+1}$, since in $d=1$ we took $n$ of the form $2^k$, while in $d=2$ we have $n=(2^{k})^2$, for $k \in \{1,2,3,4,5,6\}$. Here, the definition of $W_{k,k+1}$ is modified accordingly to approximations by $4^k$ and $4^{k+1}$ particles, respectively. The computational effort for the calculation of the Wasserstein distance makes the investigation of higher $n$ extremely lengthy. In the case $d=2$, theory predicts that the convergence rate is $\mathcal{O}(n^{-1/2})$, whence we expect that $C^2_{k+1}$ tends to the value $-1/2$.
\begin{figure}
%\begin{adjustwidth}{0.25in}{0in}
 	\includegraphics[width=5.8in]{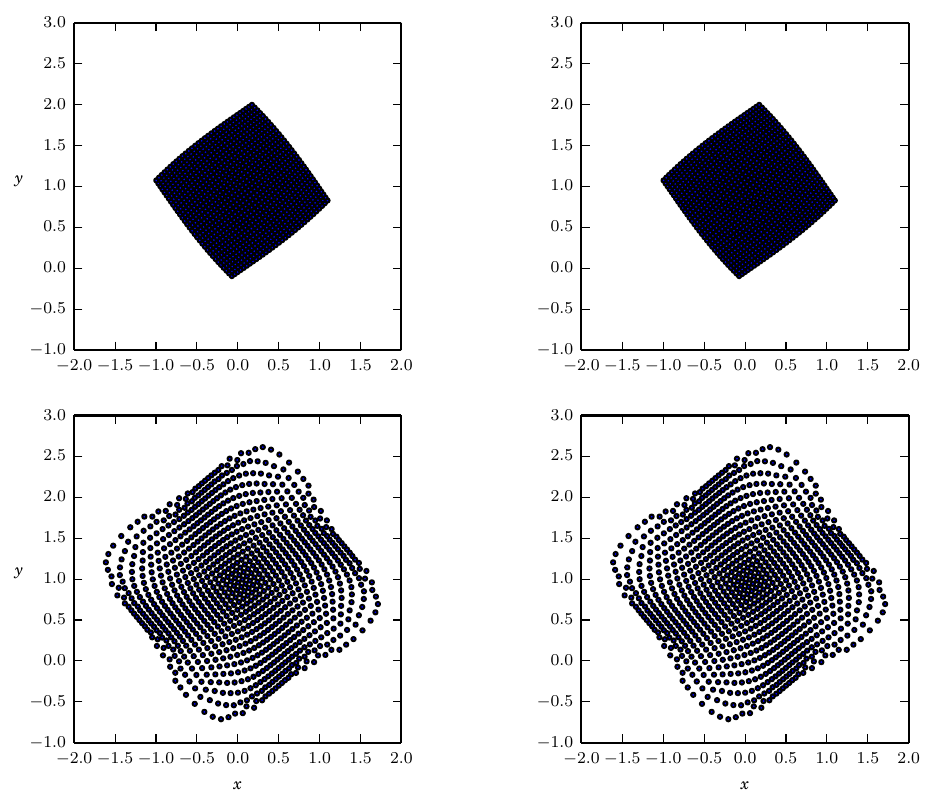}
 	\caption{For the case $\gamma = 2$, particle configurations at final time $T=1$ for the two-dimensional hydrodynamic experiment of a rotating square; on the left-hand side with $\theta = 0$ and on the right-hand side with $\theta =1$. The upper plots present results for $h=1$ and $n= 512$ particles, while the lower row refers to variable $h = 1.5 \sqrt{V_0}$ and $n= 512$ particles. In this case ($\gamma=2$), the schemes for $\theta = 0$ and $\theta =1$ are the same.}\label{plot_2d-gamma2}
% 	\end{adjustwidth}
\end{figure}
In Table \ref{tab: conv2d}, the convergence rates of the two-dimensional hydrodynamic problems are shown. The theoretical value is indeed approached, but strong oscillations around this value appear in the case $\gamma = 2$ with varying $h$. In Figures \ref{plot_2d-gamma2}-\ref{plot_2d-gamma7}, particle configurations at final time $T=1$ are presented for the cases $\gamma = \{2,7\}$ respectively. The upper plots refer to fixed $h = 1$ independent of the resolution $n$ (a choice in agreement with the convergence proof), while lower plots are obtained by $h$ varying with the number of particles as $h = 1.5\sqrt{V_0}$. For the plots on the left-hand side the scheme with $\theta = 0$ is used, while for the plots on the right-hand side $\theta = 1$ is employed. Similarly to the one-dimensional results, the corresponding solutions for $\gamma = 2$ are identical for the schemes employing $\theta = 0$ or $\theta =1$. On the contrary, they differ for $\gamma = 7$. Finally, it should be mentioned that the instabilities of the density profile on the boundaries of the domain, which were observed in the one-dimensional computations, have now translated into the nonhomogeneous distribution of particles.\\
\\
The second numerical example considers the nonlocal force and the drag term, for which the numerical scheme corresponding to \eqref{eqn: cor particle scheme} does not depend on $\theta$. Moreover, \eqref{eqn: cor particle scheme} does not depend on $\tilde{\rho}_t$, hence $h$ is only relevant if we wish to plot $\tilde{\rho}_t$, and not for the computations themselves.\\
The Wendland function and a constant time step of $\Delta t = 10^{-1}$ are used. For the interactions, we take $K$ such that it is the gradient of the \textit{Morse potential}, see e.g.~\cite{DOrsogna}, with parameters $C_a=2.0  ,\, C_r=1.5  ,\,\ell_a=1.0   , \, \ell_r=2.0$. In fact, we included a short-range regularization around the origin to the potential to enforce the required $C^1_b$-regularity of $K$. A side-effect is that automatically self-interactions are cancelled. Two cases for the drag coefficient are examined: $\eta \equiv 10$ and $\eta \equiv 0.1$, for final time $T=100$. In both these cases, an equilibrium has been reached. Similarly to the hydrodynamical problem, \eqref{eqn: approx sup wass} is used with $n = 2^{2k}$ particles, where $k \in \{1, 2, 3, 4, 5\}$. Particle configurations and convergence rates are plotted in Figure \ref{plot_nonlocal}, with the upper plots referring to $\eta \equiv 0.1$ and the lower plots to $\eta \equiv 10$. The value of the convergence in this case rapidly tends to the theoretically predicted value.
\begin{figure}
%\begin{adjustwidth}{0.25in}{0in}
 	\includegraphics[width=5.8in]{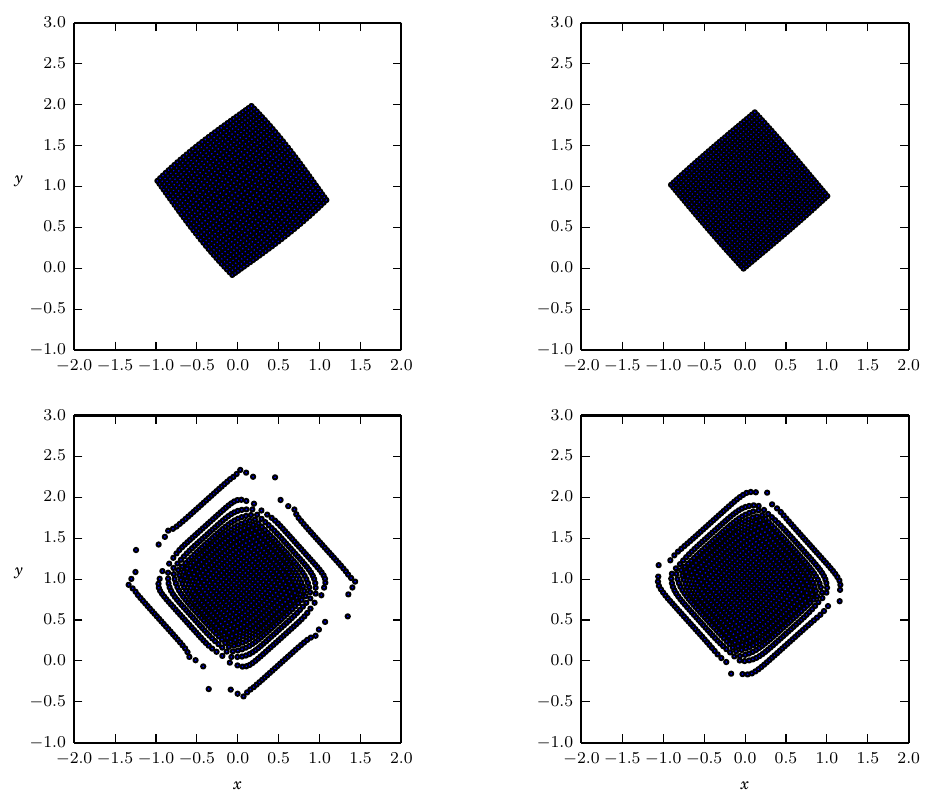}
 	\caption{For the case $\gamma = 7$, particle configurations at final time $T=1$ for the two-dimensional hydrodynamic experiment of a rotating square; on the left-hand side with $\theta = 0$ and on the right-hand side with $\theta =1$. The upper plots present results for $h=1$ and $n= 512$ particles, while the lower row refers to variable $h = 1.5 \sqrt{V_0}$ and $n= 512$ particles.}\label{plot_2d-gamma7}
% 	\end{adjustwidth}
\end{figure}
\begin{figure}
%\begin{adjustwidth}{0.25in}{0in}
 	\includegraphics[width=5.8in]{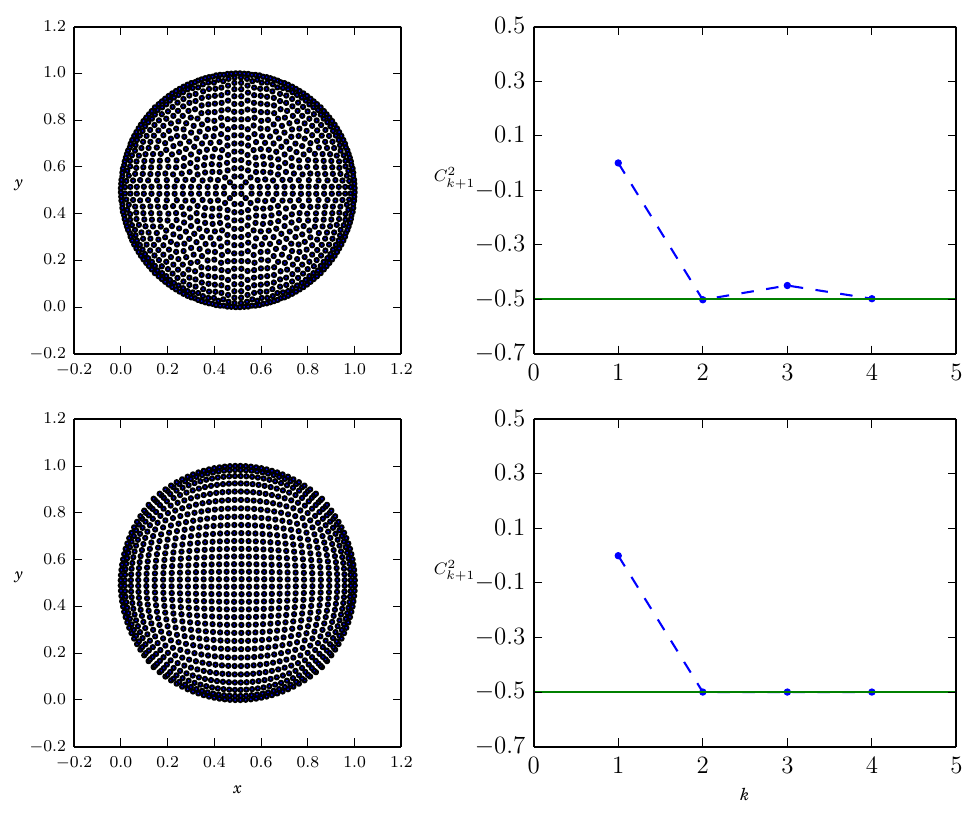}
 	\caption{For the problems involving a nonlocal force term, the particle configurations at final time $T = 100$ (left plots) and the convergence rates (right plots). Drag coefficients $\eta\equiv 0.1$ (upper plot) and $\eta \equiv10$ (lower plot) are used.}\label{plot_nonlocal}
% 	\end{adjustwidth}
\end{figure}
%
%  Convergence rates for $\eta = 0.1: C = \{1.17732472, -0.51034514, -0.50243856, -0.50058853\}, \eta = 10: C = \{ 1.56315104, -0.51034514, -0.50243856, -0.50058853\}.
%
\section{Concluding remarks and future directions}\label{sec: concl and future}
Apart from the remarks already made, there are two issues that are important to point out. One could call them shortcomings of our approach, in the sense that these are cases to which our proof of convergence does not apply. The result of Theorem \ref{thm: main thm} does not state:
\begin{itemize}
  \item whether the approximations corresponding to $\theta=0$ and $\theta=1$, respectively, actually converge to the same limit solution. %For the cases studied numerically, our computations show that this occurs only for $\gamma > 1$, where the scheme $\theta = 0$ is proven to converge and only for $h$ varying with the number of particles.
      Our computations show that this is certainly not the case for $h=1$ and (although the difference is smaller) neither for varying $h$, except for the trivial case $\gamma=2$ in which the schemes coincide.
  \item whether the limit $n\to\infty$ in any of the two cases $\theta=0$ or $\theta=1$ is actually like the `real physics'. To investigate this, in principle one would need to consider the limit $h\to0$. %, coupled to the convergence $n\to\infty$ performed here.
      As said before, this is beyond the scope of the current paper.
\end{itemize}
The latter point refers to a situation in which first the limit $n\to\infty$ is taken and afterwards the limit $h\to0$. A more favourable approach (also from a numerical point of view) would be to have $h$ depend on $n$ in such a way that $h=\mathcal{O}(1/\sqrt[d]{n})$ as $n\to\infty$, and hence $n\to\infty$ and $h\to0$ simultaneously. In Section \ref{sec: num}, we anticipated this ---following what is already typically done in the literature of SPH--- and the numerical results there support the hope that solutions converge in the case of $h$ varying with the number of particles.\\
\\
Nevertheless, our combined theoretical-computational results establish the convergence of the classical and most-used SPH scheme and also show that the corresponding equation of motion is a true discretized version of the equation of motion of a regularized continuous medium.

%\begin{acknowledgements}
\section*{Acknowledgements}
We thank Adrian Muntean, Mark Peletier and Fons van de Ven (TU Eindhoven, The Netherlands) for fruitful discussions and useful comments. Until 2015 J.H.M. Evers was a member of the Centre for Analysis, Scientific computing and Applications, and the Institute for Complex Molecular Systems (ICMS) at TU Eindhoven, supported by the Netherlands Organisation for Scientific Research (NWO), Graduate Programme 2010. For I.A. Zisis, this research was carried out under project number M11.4.10412 in the framework of the Research Program of the Materials innovation institute M2i (www.m2i.nl). For B.J. van der Linden it is a research activity of the Laboratory of Industrial Mathematics in Eindhoven LIME bv (www.limebv.nl).
%\end{acknowledgements}

\begin{appendix}
%\section{Lipschitz estimate for $v_0^{(j)}$}\label{app:bound v0x - v0y}
%\begin{lemma}\label{lem:bound v0x - v0y}
%For each $j\in\{1,\ldots,N\}$, and for all $x,y\in\supp\mu^{(j)}_0$ the following estimate holds
%\begin{equation}
%|v_0^{(j)}(x)-v_0^{(j)}(y)|\leqslant \left[B_1\,\sum_{k=0}^{j-2}B_2^k + B_2^{j-1}\|\nabla v_0\|_\infty\right]\, |x-y|.
%\end{equation}
%Here, $B_1,B_2$ are defined as
%\begin{align}
%B_1&:= (\|\nabla\eta\|_\infty T^*+|\nabla V|_L + M)(1+T^*\|\eta\|_\infty)\,\mathrm{e}^{\frac12(T^*)^2(|\nabla V|_L+M)+T^*\|\eta\|_\infty},\\
%B_2&:= 1+(\|\nabla\eta\|_\infty T^*+|\nabla V|_L + M)\,T^*\,\mathrm{e}^{\frac12(T^*)^2(|\nabla V|_L+M)+T^*\|\eta\|_\infty}.
%\end{align}
%\end{lemma}
%\begin{proof}
%It follows from \eqref{eqn:Phi worked out} that for each $x\in\supp\mu_0^{(j)}$
%\begin{align}
%\nonumber\nabla\Phi^{(j)}_\tau(x)&=\nabla\Bigg[x+v^{(j)}_0(x)\,\tau+\int_0^\tau\mathrm{e}^{-\Int{0}{s}{\eta(\Phi^{(j)}_r(x))}{dr}}\int_0^s\mathrm{e}^{\Int{0}{r}{\eta(\Phi^{(j)}_\sigma(x))}{d\sigma}}\big[-\nabla V(\Phi^{(j)}_r(x))+\ldots\\
%&\hspace*{2cm} \ldots-F\left((\mu_r*\Wh)(\Phi^{(j)}_r(x))\right)\nabla(\Wh*\mu_t)(\Phi^{(j)}_r(x))+(K*\mu_r)(\Phi^{(j)}_r(x)))\big]\,dr\,ds\Bigg].
%\end{align}
%\end{proof}

\section{Completeness}\label{app:completeness}
The arguments in this appendix lead to the statement of Theorem \ref{thm:completenessAppendix}. This theorem implies that the space $\mathcal{C}_j$ defined in \eqref{eqn: def Cj} is a complete metric space for every $j\in\{1,\ldots,N\}$. This result is needed to be able to apply Banach's Fixed Point Theorem in the proof of Part \ref{thm:part1} of Theorem \ref{thm: main thm}.\\
\begin{lemma}\label{lem:PBRcomplete}
Fix $R>0$. Then the space $\P(B(R))$ of probability measures on $B(R):=\{x\in\R^d:|x|\leqslant R\}$, endowed with the metric $\mathcal{W}:=W_1$, is a complete metric space.
\end{lemma}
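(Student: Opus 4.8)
The plan is to exploit the compactness of the closed ball $B(R)$ together with the fact that, on a bounded domain, the metric $\mathcal{W}$ induces precisely the narrow topology. First I would note that $\mathcal{W}$ is genuinely a metric on $\P(B(R))$: symmetry is clear, the triangle inequality follows from the standard gluing lemma for couplings, and $\mathcal{W}(\mu_1,\mu_2)=0$ forces $\mu_1=\mu_2$ (these are classical; see \cite{Vil09}, Chapter 6). It therefore suffices to prove that every $\mathcal{W}$-Cauchy sequence in $\P(B(R))$ converges.

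So take a Cauchy sequence $(\mu_n)_{n}\subset\P(B(R))$. The decisive observation is that \emph{all} the $\mu_n$ are supported in the fixed compact set $B(R)$, so the family $\{\mu_n\}_n$ is automatically tight. By Prokhorov's theorem there is a subsequence $(\mu_{n_k})_k$ converging narrowly to some limit $\mu$; since $B(R)$ is closed, the limit again satisfies $\mu\in\P(B(R))$.

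Next I would upgrade this narrow convergence of the subsequence to convergence in $\mathcal{W}$. Because every measure in play is concentrated on the bounded set $B(R)$, the first moments are trivially uniformly integrable, so Theorem 7.1.5 in \cite{AGS} --- the very result already invoked in Section \ref{sect:Discussion} --- applies and yields $\mathcal{W}(\mu_{n_k},\mu)\to0$ as $k\to\infty$. Finally, a Cauchy sequence that has a convergent subsequence converges to the same limit: given $\eps>0$, choose $N$ with $\mathcal{W}(\mu_n,\mu_m)<\eps/2$ for all $n,m\geqs N$, and then $k$ with $n_k\geqs N$ and $\mathcal{W}(\mu_{n_k},\mu)<\eps/2$; the triangle inequality gives $\mathcal{W}(\mu_n,\mu)\leqs\mathcal{W}(\mu_n,\mu_{n_k})+\mathcal{W}(\mu_{n_k},\mu)<\eps$ for all $n\geqs N$. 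Hence $\mu_n\to\mu$ in $\mathcal{W}$, and $\P(B(R))$ is complete.

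The only nonroutine step --- and thus the main obstacle --- is the equivalence between narrow convergence and $\mathcal{W}$-convergence used in the third paragraph. It hinges on the uniform integrability of first moments, which is immediate here thanks to the boundedness of $B(R)$, and is exactly the content of \cite{AGS}, Theorem 7.1.5 (one could equally cite \cite{Vil09}, Theorem 6.9). Everything else --- tightness from the common compact support, Prokhorov's theorem, and the Cauchy-plus-convergent-subsequence argument --- is entirely standard. I note in passing that one could also package the argument more slickly by observing that compactness of $B(R)$ makes $\P(B(R))$ narrowly compact, whence the metric space $(\P(B(R)),\mathcal{W})$ is compact and therefore complete; but this repackaging still relies on the same topological equivalence.
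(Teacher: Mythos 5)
Your proposal is correct, but it follows a genuinely different route from the paper. The paper's proof is essentially a two-line citation argument: since $B(R)$ is complete, \cite{AGS}, Proposition 7.1.5 directly gives completeness of $(\P_1(B(R)),W_1)$, and then one observes $\P_1(B(R))=\P(B(R))$ because every $\mu\in\P(B(R))$ has first moment bounded by $R$. You instead reconstruct completeness from lower-level ingredients: tightness of the Cauchy sequence (automatic from the common compact support), Prokhorov's theorem to extract a narrowly convergent subsequence whose limit stays in $\P(B(R))$ by closedness, the equivalence of narrow and $\mathcal{W}$-convergence under uniform integrability of first moments (which is the \emph{other} half of the same AGS result 7.1.5, the half the paper invokes in its Section on the initial-data assumption), and finally the standard Cauchy-plus-convergent-subsequence argument. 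The trade-off: the paper's proof is shorter and, notably, only needs $B(R)$ to be \emph{complete}, so it would survive replacing the ball by any closed unbounded set once first moments are controlled; your proof leans crucially on \emph{compactness} of $B(R)$ (for tightness and Prokhorov), but in return it is more self-contained and actually proves the stronger statement — which you note yourself — that $(\P(B(R)),\mathcal{W})$ is a compact metric space, not merely a complete one. Both arguments are sound; yours simply re-derives a packaged result that the paper is content to quote.
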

\begin{proof}
Since $B(R)$ is complete, it follows from \cite{AGS}, Proposition 7.1.5, that $(\P_1(B(R)),W_1)$ is complete. Here, $\P_1(B(R))$ is the space of probability measures with bounded first moment. The statement of the lemma follows from the observation that
\begin{equation}
\P_1(B(R))=\P(B(R)).
\end{equation}
The inclusion $\P_1(B(R))\subset\P(B(R))$ is trivial. The other inclusion follows from the fact that the first moment of each $\mu\in\P(B(R))$ is bounded by $R$.
%The other inclusion follows from the fact that
%\begin{equation}
%\Int{B(R)}{}{\underbrace{|x|}_{\leqslant R}}{\mu(dx)}\leqslant R\underbrace{\Int{B(R)}{}{}{\mu(dx)}}_{= 1}<\infty,
%\end{equation}
%for all $\mu\in\P(B(R))$.
\end{proof}
\begin{lemma}\label{lem:C0TPBRcomplete}
For each $T,R>0$, the space
\begin{equation}
C([0,T];\P(B(R))),
\end{equation}
endowed with the metric
\begin{equation}
\sup_{\tau\in[0,T]}\,\mathcal{W}(\mu_1(\tau),\mu_2(\tau)),
\end{equation}
is complete.
\end{lemma}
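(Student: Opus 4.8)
The plan is to invoke the classical fact that the space of continuous maps from a compact interval into a complete metric space, equipped with the uniform metric, is itself complete; the completeness of the target $(\P(B(R)),\mathcal{W})$ has already been secured in Lemma~\ref{lem:PBRcomplete}. The argument thus reduces to the standard construction of a limit and verification that it lives in the right space.

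First I would take an arbitrary Cauchy sequence $(\mu^n)_{n\in\N}$ in $C([0,T];\P(B(R)))$ with respect to the metric $\sup_{\tau\in[0,T]}\mathcal{W}(\cdot,\cdot)$ and build a candidate limit pointwise. For each fixed $\tau\in[0,T]$ the estimate $\mathcal{W}(\mu^n(\tau),\mu^m(\tau))\leqslant\sup_{\sigma\in[0,T]}\mathcal{W}(\mu^n(\sigma),\mu^m(\sigma))$ shows that $(\mu^n(\tau))_n$ is Cauchy in $(\P(B(R)),\mathcal{W})$, so by Lemma~\ref{lem:PBRcomplete} it converges to some $\mu(\tau)\in\P(B(R))$. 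This defines the candidate limit map $\tau\mapsto\mu(\tau)$, which automatically takes values in $\P(B(R))$.

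Second, I would upgrade the pointwise convergence to uniform convergence. Given $\eps>0$, choose $N$ so that $\sup_{\tau}\mathcal{W}(\mu^n(\tau),\mu^m(\tau))<\eps$ for all $n,m\geqslant N$. Fixing $\tau$ and $n\geqslant N$ and letting $m\to\infty$, the continuity of $\mathcal{W}(\mu^n(\tau),\cdot)$ yields $\mathcal{W}(\mu^n(\tau),\mu(\tau))\leqslant\eps$; since this bound is uniform in $\tau$, we obtain $\sup_\tau\mathcal{W}(\mu^n(\tau),\mu(\tau))\leqslant\eps$ for all $n\geqslant N$.

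Finally, I would verify that $\mu\in C([0,T];\P(B(R)))$ via the standard three-term estimate
\[
\mathcal{W}(\mu(\tau_1),\mu(\tau_2))\leqslant \mathcal{W}(\mu(\tau_1),\mu^n(\tau_1))+\mathcal{W}(\mu^n(\tau_1),\mu^n(\tau_2))+\mathcal{W}(\mu^n(\tau_2),\mu(\tau_2)),
\]
choosing $n$ large (using the uniform convergence just established) to control the two outer terms, and then using continuity of $\mu^n$ to control the middle term as $\tau_1\to\tau_2$. Combined with the uniform convergence, this shows simultaneously that the limit $\mu$ lies in the space and that $\mu^n\to\mu$ in the sup-metric, which is completeness. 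I do not anticipate a genuine obstacle here: the two points requiring a little care are confirming that the pointwise limit remains a probability measure supported in $B(R)$ (immediate from Lemma~\ref{lem:PBRcomplete}) and the routine interchange of limits in the uniform-convergence and continuity arguments.
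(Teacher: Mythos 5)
Your proposal is correct and follows essentially the same route as the paper's proof: extract a pointwise limit using the completeness of $(\P(B(R)),\mathcal{W})$ from Lemma~\ref{lem:PBRcomplete}, upgrade to uniform convergence via a uniform Cauchy estimate, and conclude continuity of the limit from the uniform limit theorem. The only cosmetic difference is that you spell out the three-term continuity argument explicitly, whereas the paper delegates it to a textbook citation.
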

\begin{proof}
The proof mainly follows the lines of the proof of Theorem 1.5-5 in \cite{Kreyszig} (which treats real-valued continuous functions).\\
Let $(\mu_n)_{n\in\N}$ denote a Cauchy sequence in $C([0,T];\P(B(R)))$. Fix $\eps>0$. There is a $K$ such that for all $m,n\geqslant K$
\begin{equation}
\sup_{\tau\in[0,T]}\,\mathcal{W}(\mu_m(\tau),\mu_n(\tau))<\eps.
\end{equation}
For any fixed $t\in[0,T]$,
\begin{equation}
\mathcal{W}(\mu_m(t),\mu_n(t))\leqslant\sup_{\tau\in[0,T]}\,\mathcal{W}(\mu_m(\tau),\mu_n(\tau))<\eps
\end{equation}
holds, so $(\mu_n(t))_{n\in\N}$ is a Cauchy sequence in $\P(B(R))$. It follows from Lemma \ref{lem:PBRcomplete} that $\P(B(R))$ is complete and thus $(\mu_n(t))_{n\in\N}$ converges to some $\tilde{\mu}_t\in\P(B(R))$. This pointwise limit exists for every $t\in[0,T]$, and we construct a mapping $\mu$ from $[0,T]$ to $\P(B(R))$ by defining
\begin{equation}
\mu(t):=\tilde{\mu}_t
\end{equation}
for all $t\in[0,T]$.\\
There is an $N$ such that
\begin{equation}\label{eqn:supWmn<eps/2}
\sup_{\tau\in[0,T]}\,\mathcal{W}(\mu_m(\tau),\mu_n(\tau))<\eps/2
\end{equation}
for all $m,n\geqslant N$ (with the same $\eps$ as before!). In particular, for fixed $t\in[0,T]$,
\begin{equation}\label{eqn:WNn<eps/2}
\mathcal{W}(\mu_m(t),\mu_n(t))<\eps/2
\end{equation}
holds for all $m,n\geqslant N$.
%Thus, the sequence $(\mathcal{W}(\mu_N(t),\mu_n(t)))_{n\geqslant K}\subset \R$ is bounded, hence has a converging subsequence (Bolzano-Weierstrass). Let that subsequence be indexed $n_k$ and denote the limit by $\alpha$. Moreover, $(\mathcal{W}(\mu_N(t),\mu_n(t)))_{n\geqslant N}$ is a Cauchy sequence, due to the estimate $|\mathcal{W}(\mu_N(t),\mu_n(t))-\mathcal{W}(\mu_N(t),\mu_j(t))|\leqslant \mathcal{W}(\mu_n(t),\mu_j(t))$. Now for each $\delta>0$, there are $m, n_k$ sufficiently large such that
%\begin{equation}
%|\alpha-\mathcal{W}(\mu_N(t),\mu_m(t))| \leqslant \underbrace{|\alpha-\mathcal{W}(\mu_N(t),\mu_{n_k}(t))|}_{<\delta/2}+\underbrace{\mathcal{W}(\mu_{n_k}(t),\mu_m(t))}_{<\delta/2}<\delta.
%\end{equation}
%This implies that the whole sequence $(\mathcal{W}(\mu_N(t),\mu_n(t)))_{n\geqslant N}$ converges. Due to \eqref{eqn:WNn<eps/3}
%\begin{equation}
%\lim_{n\to\infty}\mathcal{W}(\mu_N(t),\mu_n(t))\leqslant\eps/3.
%\end{equation}
%%%%%%%%%%%%%%
%Hence,
%\begin{equation}
%\mathcal{W}(\mu_N(t),\mu(t))\leqslant \underbrace{\mathcal{W}(\mu_N(t),\mu_n(t))}_{<\eps/3}+\underbrace{\mathcal{W}(\mu_n(t),\mu(t))}_{<\eps/3}<2\eps/3,
%\end{equation}
%for all $n\geqslant N$ sufficiently large, where we also use that $\mu(t)$ is defined as the pointwise limit of $\mu_n(t)$. Due to \eqref{eqn:supWmn<eps/3} this estimate holds with the same $\eps$ and $N$ for all $t\in[0,T]$, whence
%\begin{equation}
%\sup_{t\in[0,T]}\mathcal{W}(\mu_N(t),\mu(t))<2\eps/3.
%\end{equation}
%Now for
Thus, for each fixed $t\in[0,T]$ and for each $m\geqslant N$,
\begin{equation}\label{eqn: Wass mu_m mu small}
\mathcal{W}(\mu_m(t),\mu(t))\leqslant \underbrace{\mathcal{W}(\mu_m(t),\mu_n(t))}_{<\eps/2}+\underbrace{\mathcal{W}(\mu_n(t),\mu(t))}_{<\eps/2}<\eps,
\end{equation}
for sufficiently large $n$. Here we use \eqref{eqn:WNn<eps/2} to estimate the first term on the right-hand side. Due to the fact that $\mu(t)$ is defined as the pointwise limit of $\mu_n(t)$, the second term can be made arbitrarily small by increasing $n$. We conclude from \eqref{eqn: Wass mu_m mu small} that $\mathcal{W}(\mu_m(t),\mu(t))<\eps$ for all $m\geqslant N$. Due to \eqref{eqn:supWmn<eps/2}, this estimate holds with the same $\eps$ and $N$ for all $t\in[0,T]$, whence
\begin{equation}
\sup_{t\in[0,T]}\mathcal{W}(\mu_m(t),\mu(t))\leqslant\eps
\end{equation}
for all $m\geqslant N$, which proves the convergence of $(\mu_m)$ to $\mu$.\\
The limit $\mu:[0,T]\to\P(B(R))$ is continuous since it is the uniform limit of continuous mappings (cf.~\cite{Kosmala} Thm.~8.3.1 for a proof for real-valued functions that can be extended trivially to our situation), hence $(\mu_n)$ converges in $C([0,T];\P(B(R)))$.
\end{proof}
\begin{lemma}[cf.~\cite{Kreyszig} Theorem 1.4-7]\label{lem:closedSubsetComplete}
If $Y$ is a closed subset of a complete metric space $(X,\mathfrak{d})$, then $Y$ is complete.
\end{lemma}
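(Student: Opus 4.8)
The plan is to reduce everything to the defining property of completeness together with the standard sequential characterization of closed sets. First I would take an arbitrary Cauchy sequence $(y_n)_{n\in\N}$ in $Y$, equipped with the metric inherited from $X$. Since the metric on $Y$ is simply the restriction of $\mathfrak{d}$, the sequence $(y_n)$ is automatically Cauchy when regarded as a sequence in the ambient space $X$ as well: for any $\eps>0$ the same index $N$ witnessing $\mathfrak{d}(y_m,y_n)<\eps$ for $m,n\geqslant N$ in $Y$ works verbatim in $X$.

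Next I would invoke completeness of $(X,\mathfrak{d})$ to obtain a limit point. Because $X$ is complete, the Cauchy sequence $(y_n)$ converges to some $x\in X$, i.e.
\begin{equation}
\nonumber \mathfrak{d}(y_n,x)\stackrel{n\to\infty}{\longrightarrow}0.
\end{equation}
The crucial step is then to promote this limit from $X$ back into $Y$. Here I would use that $Y$ is closed: every point of $X$ that is a limit of a sequence lying entirely in $Y$ must itself belong to $Y$. Since each $y_n\in Y$ and $y_n\to x$, it follows that $x\in Y$. Consequently the sequence $(y_n)$, viewed in $Y$, converges to a point of $Y$.

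Finally I would conclude that every Cauchy sequence in $Y$ converges in $Y$, which is precisely the definition of completeness; hence $Y$ is complete. There is no genuine obstacle in this argument — it is a classical elementary fact — and the only point requiring a moment's care is the equivalence between ``closed'' and ``sequentially closed'' in a metric space, which I would simply cite or recall as the characterization that a subset of a metric space is closed if and only if it contains the limits of all its convergent sequences.
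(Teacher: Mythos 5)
Your proof is correct and follows essentially the same route as the paper's: a Cauchy sequence in $Y$ is Cauchy in $X$, converges in $X$ by completeness, and its limit lies in $Y$ by closedness. The only difference is that you make explicit the sequential characterization of closedness, which the paper uses implicitly.
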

\begin{proof}
Let $(\xi_n)\subset Y$ be a Cauchy sequence. Since $Y\subset X$ and $X$ is complete, there is a $\xi\in X$ such that
\begin{equation}
\lim_{n\to\infty} \mathfrak{d}(\xi_n,\xi)=0.
\end{equation}
Because $(\xi_n)$ is a sequence in $Y$ and $Y$ is closed, $\xi$ must be an element of $Y$. Thus, $Y$ is complete.
\end{proof}
\begin{theorem}\label{thm:completenessAppendix}
Define for each $R>0$
\begin{equation}
\P_R(\R^d):=\{\mu\in\P(\R^d):\supp\mu\subset B(R)\}.
\end{equation}
Fix $\nu_0\in\P_R(\R^d))$ and $T>0$, and define
\begin{equation}
\mathcal{C}:=\{\nu\in C([0,T];\P_R(\R^d)):\nu(0)=\nu_0\}.
\end{equation}
Then the following holds: endowed with the metric
\begin{equation}
\sup_{\tau\in[0,T]}\,\mathcal{W}(\mu_1(\tau),\mu_2(\tau)),
\end{equation}
the space $\mathcal{C}$ is a complete metric space.
\end{theorem}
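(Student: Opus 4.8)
The plan is to realize $\mathcal{C}$ as a closed subset of a complete metric space and then invoke Lemma~\ref{lem:closedSubsetComplete}. The natural candidate for the ambient complete space is $C([0,T];\P(B(R)))$ from Lemma~\ref{lem:C0TPBRcomplete}. The only genuine gap between the present statement and that lemma is the distinction between $\P_R(\R^d)$, the probability measures on $\R^d$ with support in $B(R)$, and $\P(B(R))$, the probability measures on $B(R)$ itself.

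First I would identify these two spaces isometrically. A measure $\mu\in\P_R(\R^d)$ restricts to a probability measure on $B(R)$, and conversely an element of $\P(B(R))$ extends by zero to an element of $\P_R(\R^d)$; this gives a bijection. To see that it preserves $\mathcal{W}$, observe that if $\mu_1,\mu_2\in\P_R(\R^d)$ then every coupling $\pi\in\Pi(\mu_1,\mu_2)$ has both marginals supported in $B(R)$ and is therefore concentrated on $B(R)\times B(R)$; hence the infimum in Definition~\ref{def:Wass} over couplings on $\R^d\times\R^d$ equals the corresponding infimum over couplings on $B(R)\times B(R)$. Thus $C([0,T];\P_R(\R^d))$ and $C([0,T];\P(B(R)))$ coincide as metric spaces under the sup-metric, and the latter is complete by Lemma~\ref{lem:C0TPBRcomplete}.

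Next I would show that $\mathcal{C}$ is closed in this complete space. The key observation is that the evaluation map $\mathrm{ev}_0\colon\mu\mapsto\mu(0)$ is $1$-Lipschitz, since for all $\mu_1,\mu_2\in C([0,T];\P_R(\R^d))$
\[
\mathcal{W}(\mu_1(0),\mu_2(0))\leqs\sup_{\tau\in[0,T]}\mathcal{W}(\mu_1(\tau),\mu_2(\tau)).
\]
In particular $\mathrm{ev}_0$ is continuous, and since a singleton is closed, $\mathcal{C}=\mathrm{ev}_0^{-1}(\{\nu_0\})$ is a closed subset. Applying Lemma~\ref{lem:closedSubsetComplete} with the ambient complete space $C([0,T];\P_R(\R^d))$ and $Y=\mathcal{C}$ then gives completeness of $\mathcal{C}$, finishing the proof.

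I do not anticipate a serious obstacle: the whole argument is essentially bookkeeping built on the three preceding lemmas. The single point requiring a moment of care is the isometric identification of $\P_R(\R^d)$ with $\P(B(R))$, which is what lets Lemma~\ref{lem:C0TPBRcomplete} apply verbatim; once that is granted, the continuity of evaluation at $t=0$ makes the closedness of $\mathcal{C}$ immediate.
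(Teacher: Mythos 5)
Your proposal is correct and follows essentially the same route as the paper's proof: identify $\P_R(\R^d)$ with $\P(B(R))$ to transfer completeness from Lemma~\ref{lem:C0TPBRcomplete}, show $\mathcal{C}$ is closed using the estimate $\mathcal{W}(\mu_1(0),\mu_2(0))\leqs\sup_{\tau\in[0,T]}\mathcal{W}(\mu_1(\tau),\mu_2(\tau))$, and conclude via Lemma~\ref{lem:closedSubsetComplete}. The only cosmetic differences are that you phrase closedness as the preimage of a singleton under the $1$-Lipschitz evaluation map (the paper runs the equivalent sequential argument) and that you supply the coupling argument for the isometric identification, a detail the paper explicitly omits.
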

\begin{proof}
Note that there is a one-to-one correspondence between elements of $\P_R(\R^d)$ and elements of $\P(B(R))$. Since Lemma \ref{lem:C0TPBRcomplete} states that
$C([0,T];\P(B(R)))$ is complete, the same must hold for $C([0,T];\P_R(\R^d))$, because convergence in one of these spaces implies convergence in the other. We omit further details.\\
Clearly, $\mathcal{C}\subset C([0,T];\P_R(\R^d))$. We now show that $\mathcal{C}$ is closed. Let $(\mu_n)\subset\mathcal{C}$ be a sequence that converges to $\mu\in C([0,T];\P_R(\R^d))$:
\begin{equation}
\lim_{n\to\infty}\sup_{t\in[0,T]}\mathcal{W}(\mu(t),\mu_n(t))=0.
\end{equation}
We note that
\begin{equation}
\mathcal{W}(\mu(0),\nu_0)=\mathcal{W}(\mu(0),\mu_n(0))\leqslant\sup_{t\in[0,T]}\mathcal{W}(\mu(t),\mu_n(t)).
\end{equation}
Since the left-hand side is independent of $n$, while the right-hand side tends to $0$ as $n\to\infty$,
\begin{equation}
\mathcal{W}(\mu(0),\nu_0)=0
\end{equation}
must hold, so $\mu(0)=\nu_0$. We conclude that $\mu\in\mathcal{C}$ and thus $\mathcal{C}$ is closed. It follows from Lemma \ref{lem:closedSubsetComplete} that $\mathcal{C}$ is complete.
\end{proof}

\end{appendix}

\end{document}